\documentclass[journal,twoside,web]{ieeecolor}
\usepackage{generic}
\usepackage{cite}
\usepackage{amsmath,amssymb}

\usepackage{graphicx}
\usepackage[ruled,linesnumbered,vlined]{algorithm2e}

\usepackage{bm}
\usepackage{textcomp}
\usepackage{enumerate}
\usepackage{subfigure}

\usepackage[T1]{fontenc}
\def\BibTeX{{\rm B\kern-.05em{\sc i\kern-.025em b}\kern-.08em
    T\kern-.1667em\lower.7ex\hbox{E}\kern-.125emX}}
\graphicspath{{./fig}}

\newtheorem{theorem}{Theorem}
\newtheorem{lma}{Lemma}
\newtheorem{exam}{Example}
\newtheorem{corollary}{Corollary}

\newtheorem{proposition}{Proposition}

\newtheorem{assumption}{Assumption}
\newtheorem{remark}{Remark}

\DeclareMathOperator{\diag}{\mathrm{diag}}
\newcommand{\p}{p}
\newcommand{\ca}{c}
\newcommand{\en}{e}
\newcommand{\de}{d}
\newcommand{\anti}{n_{a}}
\newcommand{\ctotal}{C_{\mathrm{total}}}
\newcommand{\tf}[1]{\bm{#1}}

\newcommand{\hua}[1]{\mathsf{#1}}

\DeclareMathOperator{\TE}{\mathbf{H}}

\newenvironment{myproof}[1][]
{\if\relax\detokenize{#1}\relax
	\textit{Proof: }%
	\else
	\textit{Proof of #1: }%
	\fi}
{\hfill\scalebox{0.855}{$\blacksquare$}}

\begin{document}
\title{Linear Coding of LTI Sources Over Vector Gaussian Channels: A Majorization Approach}     
\author{Shihao Jin, Junhui Li, Shinji Hara, \IEEEmembership{Fellow, IEEE}, Wei Chen, \IEEEmembership{Member, IEEE}
\thanks{{S. Jin and J. Li contributed equally to this work.}}
\thanks{S. Jin and W. Chen are with the School of Advanced Manufacturing and Robotics
\& the State Key Laboratory for Turbulence and Complex Systems, Peking University, Beijing 100871, China  (e-mail: jinshihao@stu.pku.edu.cn, w.chen@pku.edu.cn).}
\thanks{J. Li is with the School of Electrical Engineering, Guangxi University, Guangxi 530000, China (e-mail: jh.li@gxu.edu.cn).}
\thanks{S. Hara is with the Supercomputing Research Center, Institute of Integrated Research,  Institute of Science Tokyo, Tokyo 152-8550, Japan (e-mail:
shinjihara5202@gmail.com).}
}

\maketitle

\begin{abstract}
We study the design of linear time-invariant (LTI) encoder-decoder pairs for transmitting the state of a discrete-time LTI vector source over power-constrained parallel Gaussian channels with feedback.
Two types of power constraints are considered. 
Under individual subchannel power constraints, a necessary and sufficient condition for designing an encoder-decoder pair that achieves bounded estimation error covariance (EEC) is established via two coupled majorization inequalities involving the subchannel signal-to-noise ratios and the antistable poles of the source.  
Under total channel power constraint, we derive the minimum total power required for a feasible encoder-decoder design by exploiting partial-order progamming under majorization order.
An analytical optimal power allocation is obtained for the case of equal noise variances,  which admits a water-filling interpretation; for general noise case, a sequential water-filling algorithm is developed.
Our results reveal that the
difficulty of transmitting a discrete-time LTI source via LTI coding is governed not only by its topological entropy, but also by the evenness of the log-magnitudes of its antistable poles.
 The design methods for feasible encoder-decoder pairs are also provided.
\end{abstract}

\begin{IEEEkeywords}
LTI coding, feedback Gaussian channels, state estimation, majorization, partial-order programming, sequential water-filling
\end{IEEEkeywords}

\section{Introduction}
\label{sec:introduction}
\IEEEPARstart{T}{ransmitting} the state of a Gauss-Markov source over noisy communication channels is a basic problem in networked estimation and control. Particularly, an unstable source continuously generates uncertainty through its antistable poles, so sufficient communication resources are needed to prevent the estimation error covariance from diverging. Characterizing the amount of communication resources required for remote estimation of an unstable source is thus a fundamental research problem. This problem is important because remote estimation is an indispensable component of output-feedback networked control. It is also relevant in practical scenarios where a control system may temporarily become open loop and unstable due to communication disruption or adversarial interference, so that monitoring the source state becomes vital for taking timely countermeasures.

A seminal result in the networked control area is the ``data-rate theorem''. A version of it states that the minimum communication resource required for networked stabilization of a single-input LTI system via state feedback or a single-input single-output (SISO) minimum-phase LTI system via output feedback is given by the topological entropy of the open-loop plant, namely, the sum of the real parts of unstable poles for the continuous-time systems, or the sum of the log-magnitudes of unstable poles for the discrete-time systems. The data-rate theorem was established under various information constraints \cite{Nair2003exponential,tatikonda:2004control,elia2001stabilization,fu2005sector,braslavsky:2007feedback} and stimulated extensive research efforts worldwide. See \cite{Hespanha2007NCSs,Franceschetti2023Facets} and the references therein.

Estimation over communication networks have been studied under a variety of channel models, such as Gaussian channels, packet-dropping channels, and data-rate constraints. See \cite{bansal1989simultaneous,li2011optimal,han2023coded,han2024coded,tanaka2024continuous,sinopoli2004kalman,Chen2023Optimal,liberzon2018entropy}, to name only a few. Pertinent to this paper are works on coding and estimation of LTI sources over power-constrained Gaussian channels. An early exploration in this direction is \cite{bansal1989simultaneous}, where a joint communication-control problem was transformed into an equivalent state estimation problem for a scalar source transmitted over a scalar Gaussian channel with feedback. It was shown that innovation-based linear transmission is optimal for first-order models and optimal within affine policies for higher-order models. In \cite{li2011optimal}, coding of vector LTI sources over SISO Gaussian channels with feedback was considered and a lower bound on the signal-to-noise ratio (SNR) required for the convergence of the estimation error was obtained.

Recent studies have further considered coding and estimation over multi-input multi-output (MIMO) Gaussian channels.
For the continuous-time LTI sources, \cite{jin2024remote} investigated remote estimation over parallel feedback Gaussian channels using an LTI encoder-decoder structure, revealing the interplay between the topological entropies of the plant's cyclic subsystems and the channel qualities in achieving bounded error covariance.
For the discrete-time setting, \cite{han2023coded} studied linear coding of an LTI source (with distinct unstable poles) over either a multi-input single-output or  single-input multi-output additive white Gaussian noise (AWGN) channel with noiseless feedback. It was shown that bounded estimation error covariance can be achieved if and only if the channel capacity is larger than the topological entropy of the source. 
The work \cite{han2024coded} then extended the setting to general MIMO AWGN channels with feedback and derived a sufficient condition for achieving bounded error covariance given in terms of a match between the subchannel capacities and a partition of the unstable poles assigned to the respective subchannels. The work \cite{han2024coded} was revisited in \cite{jahangiri2025finite}, where a solvability condition was derived through a nonconvex feasibility problem.

Related insights have also been developed in networked stabilization of MIMO systems over Gaussian channels, where remote estimation is embedded in the feedback control structure \cite{shu2011stabilization,VARGAS20133133,Zaidi2014Stabilization,zaidi2016tightness}. In \cite{shu2011stabilization}, a lower bound on the total transmission power for networked
stabilization of a discrete-time system was obtained and shown to be not always achievable by LTI encoder/decoder. In \cite{VARGAS20133133}, the authors
established necessary and sufficient conditions for networked stabilizability
over parallel SNR constrained AWGN channels in terms of unstable poles of the
plant along with their directions. In \cite{zaidi2016tightness}, stabilization over parallel Gaussian channels with a total power constraint was studied under noiseless feedback from the controller to both the sensor and the plant. The authors derived a sufficient stabilization condition in terms of a weak majorization relation between the plant's unstable poles and the subchannel capacities (although not stated explicitly in majorization form in the paper), through a linear time-varying coding strategy.

In this paper, we concentrate on LTI coding of discrete-time MIMO LTI sources over parallel Gaussian channels with feedback under both individual subchannel power constraints and total power constraints. Considering LTI coding is mainly due to its simple structure and easy implementation. Despite the efforts reviewed above, a complete and tractable solvability theory is still lacking. Under the individual subchannel power constraints, a precise characterization of how the subchannel qualities should be compatible with the unstable poles of the source remains to be discovered; under the total power constraint, the exact minimum required power and the corresponding optimal power allocation also remain to be characterized. These issues are addressed in this paper.

The main contributions of this paper are summarized below. 

First, under individual subchannel power constraints, we derive a necessary and sufficient solvability condition in terms of two coupled majorization inequalities. These majorizaion ineuqalities reveal that the distribution of SNRs or capacities of the subchannels and how it is matched with the distribution of log-magnitudes of the unstable poles of the source are critical in determining the solvability of the problem.

Second, under a total power constraint, we determine the exact minimum total power required for bounded estimation error covariance. For equal noise variances, the optimal allocation admits an analytic expression through the join operation in a majorization lattice and has a water-filling interpretation. The idea carries over to the general case of unequal noise variances, for which a sequential water-filling algorithm is developed.

Third, the results are constructive: when the corresponding solvability condition holds, we provide a systematic procedure for constructing an admissible LTI encoder-decoder pair.

The rest of this paper is organized as follows. 
Section~\ref{sec:problem_formulation} formulates the problem. Section~\ref{sec:preliminary} presents some preliminary knowledge. Section~\ref{sec:subchannel} and \ref{sec:total_power} give the main results. Section~\ref{sec:numerical_example} provides a numerical example. Some concluding remarks are given in Section~\ref{sec:conclusion}.

\section{Problem Formulation}\label{sec:problem_formulation}

We study the transmission of a vector source over a parallel Gaussian channel with feedback via an LTI encoder-decoder design. The considered configuration is shown in Fig.~\ref{fig:formulation}, and the detailed problem formulation is given below.

\begin{figure}[htbp]
	\centering
	\includegraphics[width=1\linewidth]{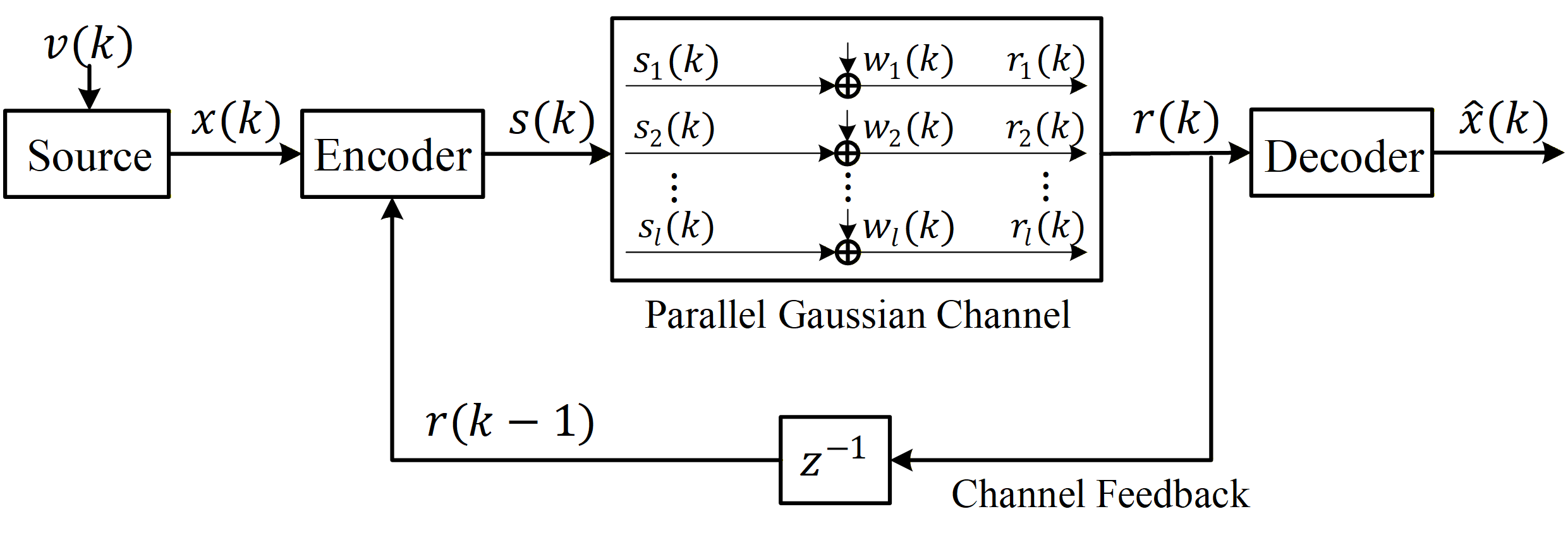}
	\caption{Linear coding of a discrete-time vector LTI source over a parallel Gaussian channel with feedback.}
	\label{fig:formulation}
\end{figure}

Consider a discrete-time LTI source
\begin{equation}\label{eq:source}
	x(k+1)=Ax(k)+v(k),
\end{equation}
where $A\in\mathbb{R}^{n\times n}$, $x(k)$ is the source state, and $v(k)$ is a zero-mean white vector noise with covariance matrix $V$. Assume that the source is unstable.

For the sake of simplicity, we make the following technical assumption throughout.

\begin{assumption}\label{asmpt:cyclic}
All unstable eigenvalues of $A$ have geometric multiplicity $1$.
\end{assumption}

Equivalently, Assumption \ref{asmpt:cyclic} requires that the direct sum of Jordan blocks associated with the unstable eigenvalues of $A$ be cyclic \cite{gantmakher2000theory}. Such assumption includes the case when $A$ has distinct eigenvalues as a special case.

The encoder processes the source state $x(k)$ and the one-step delayed signal $r(k-1)$ obtained via noiseless channel feedback to produce the encoded signal $s(k)$ for transmission. 
We adopt an LTI encoder:
\begin{equation}\label{eq:encoder}
	\!\!\begin{bmatrix}
		x_{\en}(k+1)\\
		s(k)
	\end{bmatrix}=\left[\begin{array}{ccc}
		A_{\en} & K & L\\
		M & N & R 
	\end{array}\right]\!\begin{bmatrix}
		x_{\en}(k)\\
		x(k)\\
		r(k-1)
	\end{bmatrix}\!,\, x_{\en}(0)=0,
\end{equation}
where $x_{\en}(k)$ is the state of the encoder.
We set $r(-1)=0$.

The encoded signal $s(k)$ is then transmitted to the decoder across a parallel AWGN channel consisting of $l$ subchannels. The channel output is given by 
\begin{equation*}\label{eq:channel}
	r(k)=s(k)+w(k),
\end{equation*}
where $w(k)\!=\!\begin{bmatrix}w_1(k)\!&\!\!\!\cdots\!\!\!&\!w_l(k)\end{bmatrix}'$ is a zero-mean  vector white Gaussian  noise with covariance matrix ${\Sigma}\!=\!\diag\!\{\sigma_1^2,\dots,\sigma_l^2\}$$>0$. The channel output is received by the decoder and is also returned to the encoder with one step delay. Assume that the channel noise $w$ is independent of the process noise $v$.

The decoder generates an estimate $\hat{x}(k)$ of the source state using the received signal $r(k)$.
We adopt an LTI decoder:
\begin{equation}\label{eq:decoder}
	\begin{bmatrix}
		x_{\de}(k+1)\\
		\hat{x}(k)
	\end{bmatrix}=\left[\begin{array}{ccc}
		A_{\de} & B\\
		C & D 
	\end{array}\right]\!\begin{bmatrix}
		x_{\de}(k)\\
		r(k)\\
		\end{bmatrix},\,x_{\de}(0)=0,
\end{equation}
where $x_{\de}(k)$ is the state of the decoder.

Define the estimation error of the source state as 
\[
\tilde{x}(k)=x(k)-\hat{x}(k).
\]
The estimation error covariance (EEC) is said to be bounded if 
for any initial source state, $\mathbb{E}\{\tilde{x}(k)\tilde{x}(k)'\}$ converges to some positive semi-definite matrix.

We are interested in the following problems.

\begin{itemize}
    \item
\textbf{Problem 1:} Design a pair of encoder and decoder so as to achieve bounded EEC under the individual subchannel power constraints:
\begin{equation}\label{eq:subchannelC}
		\lim_{k\rightarrow\infty}\mathbb{E}\{s_i(k)^2\}<\p_i\text{ for given }\p_i,\ i=1,\dots,l.
	\end{equation}

\item \textbf{Problem 2:} Design a pair of encoder and decoder so as to achieve bounded EEC under the total power constraints:
\begin{equation}\label{eq:totalC}
		\lim_{k\rightarrow\infty}\sum_{i=1}^{l}\mathbb{E}\{s_i(k)^2\}<\p\text{ for given }\p.
	\end{equation}
    \end{itemize}

For each problem, we shall first study the solvability condition, i.e., under what conditions an encoder-decoder pair can be constructed to fulfill the respective requirement? When the problems are solvable, we shall further study how to construct such desired encoder-decoder pair.

Before proceeding, we recall that the topological entropy of a discrete-time LTI source as in (\ref{eq:source}) is given by
\begin{align*}
\TE(A)=\sum_{i=1}^{\anti}\log |\lambda_i|,
\end{align*}
where $\lambda_1,\dots,\lambda_{\anti}$ denote the antistable eigenvalues of $A$. It quantifies the complexity, or the degree of instability of the source dynamics, in terms of the uncertainty growth rate.

\begin{remark}
    The noiseless channel feedback assumption is a theoretical simplification, which has been adopted in prior works on remote state estimation \cite{han2023coded,han2024coded,li2011optimal,jin2024remote}. It may serve as a reasonable approximation
in certain scenarios such as drone-based industrial monitoring, where the forward link for encoded signal transmission is often weak and noisy due to onboard battery limitation, while the feedback link supported by ground infrastructure allows a high SNR. 
\end{remark}

\section{Preliminaries}\label{sec:preliminary}
\subsection{Partially Ordered Sets}
We first review some basic knowledge of partially ordered sets \cite{davey2002introduction}.

A binary relation $\trianglelefteq$ on a set $\mathcal{S}$ is said to be a partial order if for all $x,y,z\in\mathcal{S}$,
\begin{enumerate}[i)]
	\item $x\trianglelefteq x$,\hfill  (reflexive)
	\item $x\trianglelefteq y$ and $y\trianglelefteq x$ imply $x=y$,\hfill (anti-symmetric)
	\item $x\trianglelefteq y$ and $y\trianglelefteq z$ imply $x\trianglelefteq z$.\hfill (transitive)
\end{enumerate}
A relation that is reflexive and transitive but not necessarily anti-symmetric is called a preorder.

A set $\mathcal{S}$ endowed with a partial order $\trianglelefteq$ forms a partially ordered set (poset), denoted by $(\mathcal{S},\trianglelefteq)$.
An element $x\in\mathcal{S}$ is called the least element of $\mathcal{S}$ if $x\trianglelefteq y$ for all $y\in\mathcal{S}$, and is callled the greatest element of $\mathcal{S}$ if $y\trianglelefteq x$ for all $y\in\mathcal{S}$. The least element and the greatest element, if exist, are unique. 
For a subset $\mathcal{V} \subseteq \mathcal{S}$, an element $x\in\mathcal{S}$ is called a lower bound of $\mathcal{V}$ if $x\trianglelefteq y$ for all $y\in\mathcal{V}$, and is called an upper bound of $\mathcal{V}$ if $y\trianglelefteq x$ for all $y\in\mathcal{V}$. 
The greatest element of the set of all lower bounds of $\mathcal{V}$, if exists, is called the infimum of $\mathcal{V}$. 
Similarly, the supremum of $\mathcal{V}$ is defined as the least element of the set of all its upper bounds.

A poset $(\mathcal{S},\trianglelefteq)$ is called a lattice if every pair of elements $x, y\in\mathcal{S}$ has its infimum and supremum, known as the meet and join of $x$ and $y$ (denoted by $x \land y$ and $x \lor y$), respectively.

\subsection{Majorization Theory}

Majorization theory serves as a fundamental tool in this work. We review some basic concepts and properties, and refer the readers to \cite{Marshall2011Majorization} for a comprehensive treatment.

For a vector $x=\begin{bmatrix}
	x_1&x_2&\cdots&x_n
\end{bmatrix}'\in\mathbb{R}^{n}$, denote by
\begin{equation*}
	x^\downarrow=\begin{bmatrix}x_1^\downarrow&x_2^\downarrow&\cdots&x_n^\downarrow\end{bmatrix}'
\end{equation*} 
its nonincreasing rearrangement, where $x_1^\downarrow\geq x_2^\downarrow\geq \cdots\geq x_n^\downarrow$, and by
\begin{equation*}
	x^\uparrow=\begin{bmatrix}x_1^\uparrow&x_2^\uparrow&\cdots&x_n^\uparrow\end{bmatrix}'
\end{equation*}
its nondecreasing rearrangement, where $x_1^\uparrow\leq x_2^\uparrow\leq \cdots\leq x_n^\uparrow$.

For two vectors $x,y\in\mathbb{R}^n$, $x$ is said to be majorized by $y$, denoted by $x\preccurlyeq y$, if
\begin{equation}\label{eq:maj2}
	\sum_{i=1}^k x_i^\downarrow\leq\sum_{i=1}^k y_i^\downarrow, k=1,\dots,n\!-\!1,\, \text{and }
	\sum_{i=1}^n x_i^\downarrow=\sum_{i=1}^n y_i^\downarrow,
\end{equation}
or equivalently,
\begin{equation}\label{eq:maj1}
	\sum_{i=1}^k x_i^\uparrow\geq\sum_{i=1}^k y_i^\uparrow, k=1,\dots,n\!-\!1,\, \text{and }
	\sum_{i=1}^n x_i^\uparrow=\sum_{i=1}^n y_i^\uparrow.
\end{equation}
If the equality in \eqref{eq:maj2} is replaced by inequality $\le$, $x$ is said to be weakly majorized by $y$ from below, denoted by $x\preccurlyeq_{\mathrm{w}}y$. If, further, the inequalities $\leq$ and equality in \eqref{eq:maj2} are all replaced by the strict inequality $<$, then $x$ is said to be strictly weakly majorized by $y$ from below, denoted by $x\prec_{\mathrm{w}}y$. Similarly, if the equality in \eqref{eq:maj1} is replaced by inequality $\geq$, $x$ is said to be weakly majorized by $y$ from above, denoted by $x\preccurlyeq^\mathrm{w} y$. 
If, further, the inequalities $\geq$ and equality in \eqref{eq:maj1} are all replaced by the strict inequality $>$, then $x$ is said to be strictly weakly majorized by $y$ from above, denoted by $x\prec^{\mathrm{w}}y$.

For vectors $x,y\in\mathbb{R}^n_{+}$, $x$ is said to be log-majorized by $y$, or equivalently, multiplicatively-majorized by $y$, denoted by $x\underset{\mathrm{log}}\preccurlyeq y$, if
\begin{equation}\label{eq:maj3}
	\prod_{i=1}^k x_i^\downarrow\leq\prod_{i=1}^k y_i^\downarrow, k=1,\dots,n\!-\!1,\, \text{and }
	\prod_{i=1}^n x_i^\downarrow=\prod_{i=1}^n y_i^\downarrow.
\end{equation}
If the inequalities $\leq$ and equality in \eqref{eq:maj3} are all replaced by the strict inequality $<$, $x$ is said to be strictly weakly log-majorized by $y$ from below, denoted by $x\underset{\log}{\prec_{\mathrm{w}}}y$.
When $x$ and $y$ are both positive vectors, $x\underset{\mathrm{log}}\preccurlyeq y$ is equivalent to $\log x\preccurlyeq \log y$, and $x\underset{\log}{\prec_{\mathrm{w}}} y$ is equivalent to $\log x\prec_{\mathrm{w}}\log y$.

Majorization orders the degree of fluctuation of the elements of vectors whose elements have the same average. 
Specifically, $x\preccurlyeq y$ indicates that the elements of $x$ have less fluctuation (are more evenly distributed)  than those of $y$.
Weak majorizations order the degree of fluctuation of elements and the average of elements in a combined way.

Majorization $\preccurlyeq$ and weak majorizations $\preccurlyeq_{\mathrm{w}}$ and $\preccurlyeq^{\mathrm{w}}$ are all preorders on $\mathbb{R}^n$. They become partial orders when restricted to the set
\begin{equation*}
    \mathcal{D}^n=\{x\in\mathbb{R}^n\mid x_1\geq x_2\geq\cdots\geq x_n\}.
\end{equation*} 
Furthermore, the posets $(\mathcal{D}^n, \preccurlyeq_{\mathrm{w}})$ and $(\mathcal{D}^n, \preccurlyeq^{\mathrm{w}})$ are lattices \cite{bapat1991majorization}. Taking $(\mathcal{D}^n, \preccurlyeq_{\mathrm{w}})$ as an example, Appendix~\ref{appendix:join} presents explicit ways to compute the meet and join of any two vectors $x,y\in\mathcal{D}^n$.

Now, we present several useful lemmas.

\begin{lma}[\!\!\cite{Marshall2011Majorization}]\label{lma:maj1}
	For vectors $x,y\in\mathbb{R}^n$, $x\preccurlyeq ^{\mathrm{w}} y$ ($x\prec^\mathrm{w} y$, resp.) if and only if there exists a $z\in\mathbb{R}^n$ such that $x\geq z$ ($x>z$, resp.) and $z\preccurlyeq y$.
\end{lma}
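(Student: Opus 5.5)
We prove the two directions separately, working with the nondecreasing rearrangements in the characterization \eqref{eq:maj1}. Since $\preccurlyeq^{\mathrm{w}}$, $\prec^{\mathrm{w}}$ and $\preccurlyeq$ depend only on rearrangements, we may assume without loss of generality that $x=x^\uparrow$ and $y=y^\uparrow$. This costs nothing: a permutation of a suitable $z$ built for the sorted vectors gives one for the original vectors, because componentwise order is preserved when both vectors are permuted by the same permutation.

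\emph{Sufficiency.} Suppose $x\geq z$ (resp. $x>z$) and $z\preccurlyeq y$. The first task is to compare partial sums of increasing rearrangements. If $x\geq z$ componentwise, then $\sum_{i=1}^k x_i^\uparrow\geq\sum_{i=1}^k z_i^\uparrow$ for every $k$. Indeed, the sum of the $k$ smallest entries of $x$ equals $\sum_{i\in I}x_i$ for some index set $I$ with $|I|=k$. This is at least $\sum_{i\in I}z_i$, which in turn is at least the sum of the $k$ smallest entries of $z$. With strict componentwise inequality every step becomes strict. Combining this with $z\preccurlyeq y$, read through \eqref{eq:maj1}, gives $\sum_{i=1}^k x_i^\uparrow\geq\sum_{i=1}^k y_i^\uparrow$ for all $k$, including $k=n$, since the full sums of $z$ and $y$ coincide. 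Hence $x\preccurlyeq^{\mathrm{w}}y$, and in the strict case $x\prec^{\mathrm{w}}y$.

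\emph{Necessity.} Assume $x\preccurlyeq^{\mathrm{w}}y$ with both vectors sorted nondecreasingly, and set $\delta=\sum_i x_i-\sum_i y_i\geq0$. We take $z=(x_1,\dots,x_{n-1},x_n-\delta)$, which gives $x\geq z$ and $\sum z=\sum y$. It remains to check $z\preccurlyeq y$. The subtle point is that $z$ may no longer be sorted, since $x_n-\delta$ can fall below earlier entries.

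We handle this through the partial-sum characterization. Let $S_k(\cdot)$ denote the sum of the $k$ smallest entries. For $k\le n-1$,
\[
S_k(z)=\min\Bigl\{\textstyle\sum_{i\le k}x_i,\ \sum_{i\le k-1}x_i+x_n-\delta\Bigr\}.
\]
The first term is at least $S_k(y)$ by hypothesis. For the second, we use $\sum_{i\le k-1}x_i+x_n-\delta=\sum y-\sum_{i=k}^{n-1}x_i$. Because $x$ is sorted, $\sum_{i=k}^{n-1}x_i\le\sum_{i=k+1}^{n}x_i=\sum x-S_k(x)$. Hence the second term is at least $\sum y-\sum x+S_k(x)$. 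This bound is insufficient when $\delta>0$, so this is the main obstacle.

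Our first attempt at repairing it is to redistribute $\delta$ greedily rather than dumping it entirely on the last entry. We define $t_k=S_k(x)-S_k(y)\ge0$ for $k<n$, with $t_n=\delta$. We then lower the entries of $x$ from the top, choosing the largest $z$ with $z\le x$ whose sorted partial sums satisfy $S_k(z)\ge S_k(y)$ and $S_n(z)=S_n(y)$. An equivalent formulation sets $z_k$ equal to $y_k$ plus an adjustment: take $z^\uparrow$ with partial sums $\min_{j\ge k}\bigl(S_k(x)-\ldots\bigr)$.

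Concretely, we take
\[
Z_k=\min\bigl(S_k(x),\ \min_{j\geq k}(S_j(y)+\textstyle\sum_{i=k+1}^{j}x_i)\bigr)\quad\text{with } Z_n=\textstyle\sum y,
\]
and set $z_k=Z_k-Z_{k-1}$. The verifications then proceed as follows. One checks $z_k\le x_k$ from the monotonicity of the minimizers, and $Z_k\ge S_k(y)$ from $S_k(x)\ge S_k(y)$ together with the $j=k$ term. One also checks that $z$ is nondecreasing, which follows from concavity-type properties of $S_j(y)$ and the sortedness of $x$, so that $Z_k=S_k(z)$.

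For the strict version, we apply the same construction to $x-\epsilon\mathbf{1}$, with $\epsilon>0$ small enough that $x-\epsilon\mathbf{1}\preccurlyeq^{\mathrm{w}}y$ still holds. Such an $\epsilon$ exists because all the defining inequalities are strict and finitely many. This gives $x>x-\epsilon\mathbf{1}\ge z$ with $z\preccurlyeq y$.
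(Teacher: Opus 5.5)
The gap is in the necessity construction, and it is not a typo-level issue. The rest is fine: the sufficiency argument, the reduction to sorted $x$, and the treatment of the strict case via $x-\epsilon\mathbf{1}$. The paper itself gives no proof of this lemma; it only cites \cite{Marshall2011Majorization}, 5.A.9 and 5.A.9.a.

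In your formula for $Z_k$, the inner minimum includes the term $j=k$, which equals $S_k(y)$. Hence $Z_k\le S_k(y)$ for every $k$. Your claimed lower bound $Z_k\ge S_k(y)$ could then only hold with equality, which would force $z=y^\uparrow$. That violates $z\le x$ whenever $y^\uparrow\not\le x^\uparrow$.

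Concretely, take $x=(2,2)'$ and $y=(0,3)'$. Then $x\preccurlyeq^{\mathrm{w}}y$ because $2\ge 0$ and $4\ge 3$. Your formula gives $Z_1=\min\{2,\min\{0,5\}\}=0$ and $Z_2=3$, so $z=(0,3)'$ with $z_2=3>x_2=2$. So ``$z_k\le x_k$ from the monotonicity of the minimizers'' and ``$z$ is nondecreasing by concavity-type properties'' are not just unproved; the first is false for the object you defined.

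The earlier descriptions do not define any other object either. ``The largest $z$ with $z\le x$'' is meaningless, since vectors with equal sums are never comparable componentwise. The formula written with an ellipsis is left incomplete.

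The missing idea is where to remove the excess $\delta=\sum_i x_i-\sum_i y_i\ge 0$: shave it off the top with a water level, which is the mirror image of Lemma~\ref{lma:w_f}.
\begin{itemize}
\item With $x=x^\uparrow$, choose $h$ with $\sum_i\max\{0,x_i-h\}=\delta$; it exists by continuity and monotonicity in $h$. Set $z_i=\min\{x_i,h\}$.
\item Then $z\le x$, $z=(x_1,\dots,x_{t-1},h,\dots,h)'$ is nondecreasing, and $\sum_i z_i=\sum_i y_i$.
\item For $k<t$ you have $S_k(z)=S_k(x)\ge S_k(y)$.
\item Suppose $m\ge t$ were the first index with $S_m(z)<S_m(y)$. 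Then $m\le n-1$ and $h=z_m<y_m^\uparrow$.
\item This gives $\sum_i z_i=S_m(z)+(n-m)h<S_m(y)+\sum_{i>m}y_i^\uparrow=\sum_i y_i$, a contradiction.
\end{itemize}

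Alternatively, your partial-sum idea can be repaired by taking a maximum with a minus sign, $Z_k=\max_{j\ge k}\bigl(S_j(y)-\sum_{i=k+1}^{j}x_i\bigr)$. Writing $D_j=S_j(y)-S_j(x)$ and $G_k=\max_{j\ge k}D_j$, one gets $z_k=x_k-\max\{0,D_{k-1}-G_k\}\le x_k$. Monotonicity of $z$ then follows from $y_k^\uparrow\le y_{k+1}^\uparrow$, but you must actually prove it. With either fix, your $\epsilon$-argument for the strict case goes through unchanged.
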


A procedure for constructing such a $z$ in Lemma \ref{lma:maj1} is given in \cite[5.A.9, 5.A.9.a]{Marshall2011Majorization}.

\begin{lma}\label{lma:maj2}
	For vectors $x\in\mathbb{R}^n_{+}$ and $y\in\mathbb{R}^l_{+}$, where $l\leq n$, $x\preccurlyeq_{\mathrm{w}}
		\begin{bmatrix}
			y\\
            \boldsymbol{0}
	\end{bmatrix}$ 
	if and only if there exists a $z\in\mathbb{R}^l_{+}$ such that $z\leq y$ and $x\preccurlyeq
		\begin{bmatrix}
			z\\
            \boldsymbol{0}
	\end{bmatrix}$.
\end{lma}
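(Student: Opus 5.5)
The "if" direction is immediate, so the plan concentrates on "only if". Suppose $z\le y$, $z\in\mathbb{R}^l_+$ and $x\preccurlyeq\begin{bmatrix} z\\ \boldsymbol{0}\end{bmatrix}$. Since $z\ge 0$, the nonincreasing rearrangement of $\begin{bmatrix} z\\ \boldsymbol{0}\end{bmatrix}$ is $\begin{bmatrix} z^\downarrow\\ \boldsymbol{0}\end{bmatrix}$; the same holds for $y$ since $y\ge 0$. From $z\le y$ we get $z^\downarrow\le y^\downarrow$ componentwise, because the $k$-th largest entry is monotone in the vector. Hence every partial sum $\sum_{i\le k}$ of the rearranged $\begin{bmatrix} z\\ \boldsymbol{0}\end{bmatrix}$ is at most the corresponding partial sum for $y$. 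Chaining this with the majorization gives $x\preccurlyeq_{\mathrm{w}}\begin{bmatrix} y\\ \boldsymbol{0}\end{bmatrix}$.

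For necessity, assume $x\preccurlyeq_{\mathrm{w}}\begin{bmatrix} y\\ \boldsymbol{0}\end{bmatrix}$. Without loss of generality take $x=x^\downarrow$ and $y=y^\downarrow$; at the end we undo the permutation of $y$ so that $z\le y$ holds in the original ordering. Put $S_k=\sum_{i\le k}x_i$ and $T_k=\sum_{i\le \min(k,l)}y_i$, so the hypothesis reads $S_k\le T_k$ for all $k\le n$.

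The plan is to build $z$ by a greedy choice of its partial sums. Define $Z_k=\min\{T_k,\,S_n\}$ for $k=0,\dots,l$, so $Z_0=0$ and $Z_l=S_n$ (because $S_n\le T_n=T_l$), and set $z_k=Z_k-Z_{k-1}$. This is the top part of $y$ truncated so that the total equals $\sum x_i$. Then $z_k=y_k$ for all $k$ before some index $m$, $0\le z_m\le y_m$, and $z_k=0$ afterwards. So $z\ge0$, $z\le y$, and $z$ is nonincreasing since it is a prefix of the nonincreasing $y$ followed by a smaller value and then zeros.

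It remains to check $x\preccurlyeq\begin{bmatrix} z\\ \boldsymbol{0}\end{bmatrix}$. The totals agree by construction. For the partial sums, for $k\le l$ we need $S_k\le Z_k=\min(T_k,S_n)$: we have $S_k\le T_k$ by hypothesis, and $S_k\le S_n$ because $x\ge0$. For $k>l$ the partial sum of the rearranged $\begin{bmatrix} z\\ \boldsymbol{0}\end{bmatrix}$ equals $S_n\ge S_k$.

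The step needing most care is the role of nonnegativity, in two places. First, $x\ge 0$ gives $S_k\le S_n$, which is what makes truncation at total $S_n$ legitimate. Second, $y,z\ge 0$ ensures the zero padding sits at the bottom of the sorted order. I would also verify explicitly that the greedy $z$ is nonincreasing, so that its partial sums are genuinely the top-$k$ sums. Everything else is routine bookkeeping.
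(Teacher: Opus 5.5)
Your proposal is correct and is essentially the paper's proof: your truncated cumulative sums $Z_k=\min\{T_k,S_n\}$ give exactly the paper's $z$. The paper describes the same vector by keeping the leading entries of the sorted $y$, cutting one entry so that the total equals $\sum_i x_i$, and zeroing the rest, with the cut index chosen via the deficit $\delta=\sum_i y_i-\sum_i x_i$ as the largest $k$ satisfying $\sum_{i=k}^{l} y_i\ge\delta$. Your checks of $0\le z\le y$ and of the partial sums, including the use of $x\ge 0$ to get $S_k\le S_n$, match the paper's verification.
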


\begin{proof}
	The sufficiency is readily verified. We prove the necessity by constructing $z$. Without loss of generality, assume $y\in\mathcal{D}^l$. Denote $\delta=\sum_{i=1}^ly_i-\sum_{i=1}^nx_i\geq 0$. 
	Let $k$ be the largest integer in $\{1,\dots,l\}$ such that $\sum_{i=k}^ly_i\geq\delta$. Such a $k$ must exist since $\sum_{i=1}^ly_i\geq\delta$. 
	Construct
	\begin{align*}
		z_i=\begin{cases}
			y_i, &i=1,\dots,k-1;\\
			\sum\nolimits_{j=k}^ly_j-\delta, &i=k;\\
			0, &i=k+1,\dots,l.
		\end{cases}
	\end{align*} 
    Since $z_k=y_k+(\sum_{i=k+1}^ly_i-\delta)\leq y_k$, we have $0\leq z\leq y$. 
    Since $\sum_{i=1}^{t}z_i=\sum_{i=1}^{t}y_i\geq\sum_{i=1}^{t}x_i^\downarrow$ for $t=1,\dots,k-1$, and  $\sum_{i=1}^{k}z_i=\sum_{i=1}^{n}x_i\geq\sum_{i=1}^{t}x_i^\downarrow$ for $t=k,\dots,n-1$, 
    $x\preccurlyeq
		\begin{bmatrix}
			z\\
            \boldsymbol{0}
	\end{bmatrix}$ holds. 
    This completes the proof.
\end{proof}

\begin{lma}[\!\!\cite{Marshall2011Majorization}]\label{lma:symmetric_matrix}
	There exists a real symmetric matrix $X\in\mathbb{R}^{n\times n}$ with eigenvalues $\lambda_1,\lambda_2,\dots,\lambda_n$ and diagonal elements $d_1,d_2,\dots,d_n$ if and only if 
	\begin{equation*}
		\begin{bmatrix}
			d_1 & d_2 & \cdots& d_n
		\end{bmatrix}' \preccurlyeq \begin{bmatrix}
			\lambda_1 &  \lambda_2  &   \cdots&    \lambda_n
		\end{bmatrix}'.
	\end{equation*}
\end{lma}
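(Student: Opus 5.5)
This is the Schur–Horn theorem. I will prove the two directions separately: necessity by a variational (Ky Fan) argument, and sufficiency by induction on $n$ using Givens rotations.

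\emph{Necessity.} Let $X=U\Lambda U'$ with $U$ orthogonal and $\Lambda=\diag\{\lambda_1,\dots,\lambda_n\}$. Then
\[
d_i=\sum_j u_{ij}^2\lambda_j .
\]
Hence $d=P\lambda$, where $P=[u_{ij}^2]$ is doubly stochastic. One could invoke the Hardy--Littlewood--P\'olya characterization that $P\lambda\preccurlyeq\lambda$ for doubly stochastic $P$. To keep things self-contained, I will instead argue directly:
\begin{itemize}
\item The trace gives $\sum_i d_i=\sum_i\lambda_i$.
\item Choose indices $i_1,\dots,i_k$ corresponding to the $k$ largest $d_i$, and let $E$ be the coordinate subspace they span. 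By the Ky Fan maximum principle,
\[
\sum_{m=1}^k d^\downarrow_m=\operatorname{tr}(P_E X P_E)\le\sum_{m=1}^k\lambda^\downarrow_m .
\]
\item The Ky Fan principle itself follows from writing the trace as $\sum_j\lambda_j w_j$ with weights $w_j=\|P_E u_j\|^2\in[0,1]$ and $\sum_j w_j=k$. Such a weighted sum is maximized by putting weight $1$ on the $k$ largest $\lambda_j$.
\end{itemize}

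\emph{Sufficiency.} I induct on $n$; the case $n=1$ is trivial. Assume $d\preccurlyeq\lambda$, and order both vectors nonincreasingly: $d\in\mathcal{D}^n$, $\lambda\in\mathcal{D}^n$. Since $\lambda_n\le d_1\le\lambda_1$, there is an index $j$ with $\lambda_{j}\ge d_1\ge\lambda_{j+1}$.

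\emph{Step 1 (rotation).} Start from $\diag\{\lambda_j,\lambda_{j+1}\}$ and apply a $2\times2$ rotation to obtain a symmetric matrix with $(1,1)$ entry $d_1$. This is possible by continuity of the diagonal entry in the rotation angle. Its $(2,2)$ entry is then $\lambda_j+\lambda_{j+1}-d_1$.

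\emph{Step 2 (reduction).} Conjugate $\Lambda$ by this rotation embedded in the coordinates $\{j,j+1\}$, and permute so that $d_1$ sits in position $(1,1)$. The remaining principal block has the form
\[
\diag\{\lambda_1,\dots,\lambda_{j-1},\ \lambda_j+\lambda_{j+1}-d_1,\ \lambda_{j+2},\dots,\lambda_n\}
\]
plus off-diagonal coupling only to the first row and column. It is cleaner to phrase this as: there is a real vector $\mu\in\mathbb{R}^{n-1}$ with entries $\lambda_1,\dots,\lambda_{j-1},\lambda_j+\lambda_{j+1}-d_1,\lambda_{j+2},\dots,\lambda_n$ such that $\Lambda=Q\,(\text{block } [d_1]\oplus\diag(\mu)\text{ plus a coupling})\,Q'$.

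\emph{Step 3 (the key inequality).} The main obstacle is showing $(d_2,\dots,d_n)\preccurlyeq\mu$.
\begin{itemize}
\item The total sums agree, since $\sum_{i\ge2}d_i=\sum_i\lambda_i-d_1=\sum\mu$.
\item For partial sums of length $k<j$, use $\sum_{i=2}^{k+1}d_i\le\sum_{i=1}^{k}\lambda_i$. This holds because $d$ is nonincreasing with $d_1\ge d_{k+1}$, so the sum is at most $k$ times the average of the first $k+1$ entries of $d$, and that is at most the required bound by majorization together with $d_1\ge\lambda_{j+1}$. This part needs care.
\item For $k\ge j$, the partial sums of $\mu^\downarrow$ include $\lambda_1+\dots+\lambda_{k+1}-d_1$, and the inequality follows directly from $d\preccurlyeq\lambda$.
\end{itemize}

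\emph{Step 4 (assembly).} By induction there is an orthogonal $W\in\mathbb{R}^{(n-1)\times(n-1)}$ such that $W\diag(\mu)W'$ has diagonal $(d_2,\dots,d_n)$. Conjugate the Step 2 matrix by $1\oplus W$. The $(1,1)$ entry remains $d_1$, the lower diagonal becomes $(d_2,\dots,d_n)$, and the spectrum is still $\lambda$.

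The delicate point is the bookkeeping in Step 3 for $k<j$. There I will use that $\lambda_i\ge\lambda_j\ge d_1\ge d_{i+1}$ for $i<j$, which handles those partial sums termwise.
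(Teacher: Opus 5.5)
Your proof is correct. It differs from the paper mainly in that the paper gives no argument for this lemma at all: it imports it from \cite{Marshall2011Majorization}, using Schur's theorem for necessity and Horn's theorem for sufficiency, and defers the construction to \cite[9.B.2]{Marshall2011Majorization}.

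For necessity, you skip the orthostochastic-matrix/Hardy--Littlewood--P\'olya route and use the Ky Fan principle instead. You prove that principle in one line through the weights $w_j=\|P_Eu_j\|^2\in[0,1]$ with $\sum_j w_j=k$, so the argument is self-contained.

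For sufficiency, your plane-rotation induction is constructive. It gives the orthogonal factor explicitly as a product of at most $n-1$ rotations and permutations. This is exactly what the paper needs when it uses the lemma to build the isometry $U$ in \eqref{eq:thmUgammaU}. Run your construction with eigenvalues $(\gamma_1-1,\dots,\gamma_{\bar l}-1,0,\dots,0)$ and diagonal $\eta-\boldsymbol{1}_l$, then keep the first $\bar l$ columns of the resulting orthogonal matrix. The citation buys brevity; your argument buys an explicit algorithm for the design procedure.

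Step~3 needs two clean-ups.
\begin{enumerate}
\item Your first justification for $k<j$ (the averaging remark invoking $d_1\ge\lambda_{j+1}$) is muddled and unnecessary. The termwise bound $\lambda_i\ge\lambda_j\ge d_1\ge d_{i+1}$ that you give at the end is correct. Simpler still, $\sum_{i=2}^{k+1}d_i\le\sum_{i=1}^{k}d_i\le\sum_{i=1}^{k}\lambda_i$ holds for every $k$, because $d_{k+1}\le d_1$.
\item State explicitly that $\mu$ is already nonincreasing, since $\lambda_{j+1}\le\lambda_j+\lambda_{j+1}-d_1\le\lambda_j$. Then the partial sums you use, namely $\sum_{i=1}^{k}\lambda_i$ for $k<j$ and $\sum_{i=1}^{k+1}\lambda_i-d_1$ for $k\ge j$, really are those of $\mu^\downarrow$. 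Alternatively, note that any $k$ entries of $\mu$ sum to at most its top-$k$ sum, which is all the majorization requires.
\end{enumerate}
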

\par
An approach for finding such an $X$ in Lemma \ref{lma:symmetric_matrix} is given in \cite[9.B.2]{Marshall2011Majorization}.

\begin{lma}[\!\!\cite{Marshall2011Majorization}]\label{lma:cyclic_matrix}
    There exists a matrix $X\in\mathbb{C}^{n\times n}$ with eigenvalues $\lambda_1, \lambda_2,\dots,\lambda_n$ and singular values $s_1, s_2, \dots, s_n$ if and only if 
    \begin{equation*}
    \begin{bmatrix}
			|\lambda_1| & |\lambda_2| & \cdots& |\lambda_n|
		\end{bmatrix}' 
        \underset{\mathrm{log}}\preccurlyeq \begin{bmatrix}
			s_1 &  s_2  &   \cdots&    s_n
		\end{bmatrix}'.
    \end{equation*}
\end{lma}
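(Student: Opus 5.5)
This is Weyl–Horn's theorem, so the plan is to prove both directions in the standard way. Necessity follows from Weyl's inequalities. Sufficiency follows from Horn's inductive construction, which produces an upper-triangular matrix.

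\emph{Necessity.} Let $X$ have eigenvalues $\lambda_i$, ordered so that $|\lambda_1|\ge\cdots\ge|\lambda_n|$, and singular values $s_1\ge\cdots\ge s_n$. The equality $\prod_i|\lambda_i|=|\det X|=\prod_i s_i$ is immediate. For the partial products, take the Schur form $X=UTU^*$ with $T$ upper triangular and diagonal $\lambda_1,\dots,\lambda_n$ in this order. The leading $k\times k$ block $T_k$ of $T$ has $|\det T_k|=\prod_{i\le k}|\lambda_i|$. Since $T_k$ is a compression of $T$, its singular values satisfy $\sigma_i(T_k)\le s_i(T)=s_i$ by interlacing. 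Hence $\prod_{i\le k}|\lambda_i|\le\prod_{i\le k}s_i$. Equivalently, one can apply the spectral-radius bound $\rho(\wedge^k X)\le\|\wedge^k X\|$ on the $k$-th compound matrix.

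\emph{Sufficiency.} I would argue by induction on $n$, constructing an upper-triangular $T$ with diagonal $\lambda_1,\dots,\lambda_n$ and singular values $s_1,\dots,s_n$. The case $n=1$ is trivial, taking $T=[\lambda_1]$ since $|\lambda_1|=s_1$.

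The base step $n=2$ treats $\begin{bmatrix}\lambda_1&t\\0&\lambda_2\end{bmatrix}$. Its singular values have product $|\lambda_1\lambda_2|=s_1s_2$, and the sum of their squares is $|\lambda_1|^2+|\lambda_2|^2+|t|^2$. The condition $|\lambda_1|\le s_1$, together with equal products, gives $s_1^2+s_2^2\ge|\lambda_1|^2+|\lambda_2|^2$. So a suitable $t\ge0$ exists.

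For general $n$, first dispose of the case where some $\lambda$ or $s$ vanishes: there $s_n=0$ and the argument is a limiting or direct variant. Otherwise, choose an index $k$ with $s_k\ge|\lambda_1|\ge s_{k+1}$; by the log-majorization, $k\ge1$. Set $\mu=s_ks_{k+1}/|\lambda_1|$, so that $\mu\in[s_{k+1},s_k]$. Apply the $2\times2$ case to get a $2\times2$ matrix with eigenvalues $\lambda_1,\mu$ and singular values $s_k,s_{k+1}$. Next, check that the $(n-1)$-vector $(s_1,\dots,s_{k-1},\mu,s_{k+2},\dots,s_n)$ log-majorizes $(|\lambda_2|,\dots,|\lambda_n|)$. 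By induction this yields an $(n-1)\times(n-1)$ matrix $Y$ with these eigenvalues and singular values, and we may replace $Y$ by $Y'$ having a suitable spectrum. Finally, assemble $X$ as a product of unitary-embedded blocks $\operatorname{diag}(2\times2\text{ block},I)$ and $\operatorname{diag}(1,Y)$, using unitary similarities so that the spectrum multiplies correctly in triangular form. This is Horn's construction: write $X=W_1\,\operatorname{diag}(\cdot)\,W_2$ so that the eigenvalue $\lambda_1$ splits off through a triangular structure.

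The main obstacle is the inductive step. One part is the bookkeeping that verifies the reduced log-majorization; this requires separating partial products into those with index below $k$ and those at or above $k$. The other part is the precise gluing that preserves both the spectrum and the singular values. For the latter I would follow Horn's original argument (\cite[9.E]{Marshall2011Majorization}), in which the triangular form with $\lambda_1$ in the corner is obtained after a permutation similarity.
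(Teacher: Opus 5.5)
The paper does not prove this lemma. It imports the Weyl--Horn theorem from \cite{Marshall2011Majorization}, so your outline is measured against the standard proof there, and your skeleton is that proof.

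Your necessity argument is complete: Schur form, the leading $k\times k$ block, and singular values of a submatrix bounded by those of the matrix. The reduction in the sufficiency part is also correct:
\begin{itemize}
\item An index $k\in\{1,\dots,n-1\}$ with $s_k\ge|\lambda_1|\ge s_{k+1}$ exists because $s_n\le|\lambda_1|\le s_1$; the first inequality follows from the product equality.
\item The $2\times2$ block with diagonal $(\lambda_1,\mu)$ and singular values $s_k,s_{k+1}$ exists because $\max\{|\lambda_1|,\mu\}\le s_k$ and $|\lambda_1|\mu=s_ks_{k+1}$.
\item The reduced vector is already nonincreasing and log-majorizes $(|\lambda_2|,\dots,|\lambda_n|)$. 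For $j<k$ you have $\prod_{i=2}^{j+1}|\lambda_i|\le|\lambda_1|^j\le s_k^j\le\prod_{i=1}^{j}s_i$. For $j\ge k$ its first $j$ entries multiply to $\prod_{i=1}^{j+1}s_i/|\lambda_1|$, so you just divide the original $(j+1)$-st inequality by $|\lambda_1|$.
\end{itemize}

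The gap is the assembly step, which is the only nontrivial part, and what you describe there would not work as written. A product such as $\diag(T,I_{n-2})\diag(1,Y)$ has no reason to have singular values $s_1,\dots,s_n$, since singular values are not multiplicative. If both factors are triangular, its diagonal is $(\lambda_1,\mu\lambda_2,\lambda_3,\dots,\lambda_n)$, not $(\lambda_1,\dots,\lambda_n)$. Deferring to a permutation similarity in the reference leaves the step unproved.

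The missing idea is to glue through the SVD of $Y$ rather than through $Y$ itself. Write $Y=U\diag(\mu,\hat s)V^*$ with $\hat s=(s_1,\dots,s_{k-1},s_{k+2},\dots,s_n)$ and $\mu$ listed first (permute the columns of $U$ and $V$). Let $T=\begin{bmatrix}\lambda_1&t\\0&\mu\end{bmatrix}$ be your $2\times2$ block, let $e_1\in\mathbb{R}^{n-1}$ be the first unit vector, and set
\[
X=\begin{bmatrix}1&0\\0&U\end{bmatrix}\begin{bmatrix}T&0\\0&\diag(\hat s)\end{bmatrix}\begin{bmatrix}1&0\\0&V^*\end{bmatrix}=\begin{bmatrix}\lambda_1&t\,e_1'V^*\\0&Y\end{bmatrix}.
\]
The first form shows that the singular values of $X$ are $s_k,s_{k+1}$ together with $\hat s$, i.e.\ $s_1,\dots,s_n$. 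The second form shows that its eigenvalues are $\lambda_1$ together with those of $Y$, and $X$ is upper triangular if $Y$ is. Nothing about the spectrum of $Y$ needs to be changed.

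The degenerate case is also simpler than a limiting argument. The construction above only needs $\lambda_1\neq0$, and it tolerates $\mu=0$. If $\lambda_1=0$, then every $\lambda_i=0$ and the product equality forces $s_n=0$. In that case the weighted shift $\sum_{i=1}^{n-1}s_ie_ie_{i+1}'$, with $e_i$ now the unit vectors of $\mathbb{R}^n$, has the required eigenvalues and singular values.
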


In Lemma~\ref{lma:cyclic_matrix}, if the complex eigenvalues appear in conjugate pairs, then a real matrix $X$ can be constructed \cite{li2001construction}.

Before proceeding, we introduce monotonic functions under majorization order. Let $\mathcal{S}\subseteq \mathbb{R}^n$. A function $f: \mathcal{S} \to \mathbb{R}$ is said to be Schur-convex on $\mathcal{S}$ if $x\preccurlyeq y$ implies $f(x)\le f(y)$ for all $x,y\in\mathcal{S}$. 
Also, $f$ is said to be Schur-concave on $\mathcal{S}$ if $-f$ is Schur-convex on $\mathcal{S}$.

\begin{lma}\label{lma:schurconvex}
	Let $f(x)=\sum_{i=1}^{n}a_ix_i$ with $0\leq a_1\leq a_2\leq \cdots\leq a_n$.
    Then $f$ is Schur-concave on $\mathcal{D}^n$.
    Furthermore, for $x,y\in\mathcal{D}^n$, if $x\prec^{\mathrm{w}}y$, then $f(x)> f(y)$.  
\end{lma}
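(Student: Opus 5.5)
The plan is to rewrite $f$ through partial sums with Abel summation, so that both claims become statements about the partial sums that majorization controls. For $x\in\mathcal{D}^n$ we have $x^\downarrow=x$, so set $S_k(x)=\sum_{i=1}^k x_i$ and $S_0=0$. Summation by parts gives
\begin{equation*}
f(x)=\sum_{i=1}^n a_i\bigl(S_i(x)-S_{i-1}(x)\bigr)=a_nS_n(x)-\sum_{k=1}^{n-1}(a_{k+1}-a_k)S_k(x).
\end{equation*}
The weights satisfy $a_{k+1}-a_k\ge 0$ and $a_n\ge 0$ by hypothesis. So $f$ is nonincreasing in each $S_k$ for $k\le n-1$, and nondecreasing in the total sum $S_n$.

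For Schur-concavity, I take $x,y\in\mathcal{D}^n$ with $x\preccurlyeq y$. Then \eqref{eq:maj2} gives $S_k(x)\le S_k(y)$ for $k\le n-1$ and $S_n(x)=S_n(y)$. Substituting into the Abel form, the $a_n$ terms cancel and each remaining term satisfies $-(a_{k+1}-a_k)S_k(x)\ge -(a_{k+1}-a_k)S_k(y)$. Hence $f(x)\ge f(y)$, which is the required Schur-concavity.

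For the strict claim, it is better to use the increasing-order description \eqref{eq:maj1}, since $x\prec^{\mathrm{w}}y$ is defined there. Let $T_k(x)=\sum_{i=1}^k x_i^\uparrow=\sum_{i=n-k+1}^n x_i$, the sum of the last $k$ entries, with $T_0=0$. Abel summation from the tail gives
\begin{equation*}
f(x)=\sum_{k=1}^{n}(a_{n-k+1}-a_{n-k})T_k(x),\qquad a_0:=0.
\end{equation*}
Every coefficient is nonnegative: $a_{n-k+1}-a_{n-k}\ge 0$ for $k<n$, and the $k=n$ coefficient is $a_1\ge 0$. The hypothesis $x\prec^{\mathrm{w}}y$ means $T_k(x)>T_k(y)$ for all $k=1,\dots,n$, so $f(x)-f(y)=\sum_k c_k\bigl(T_k(x)-T_k(y)\bigr)$ with all $c_k\ge0$ and all differences strictly positive. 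This sum is strictly positive as soon as some $c_k>0$.

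The main obstacle is the degenerate case where every $c_k$ vanishes. That happens only if $a_1=\dots=a_n=0$, i.e.\ $f\equiv 0$, and then strictness fails. I will argue the statement implicitly excludes this; otherwise $c_k>0$ for some $k$ and strictness follows. The cleanest approach is to state the nondegeneracy $a_n>0$ explicitly, noting that the coefficients $c_k$ sum to $a_n$, so $a_n>0$ guarantees some $c_k>0$.
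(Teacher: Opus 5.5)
Your proof is correct. Both summation-by-parts identities check out: the first gives Schur-concavity directly from the prefix-sum form of $\preccurlyeq$, and the second gives $f(x)-f(y)\ge a_n\min_k\bigl(T_k(x)-T_k(y)\bigr)>0$ once $a_n>0$. This is a self-contained version of the standard facts in \cite[3.H.3, 3.H.3.a]{Marshall2011Majorization}, which the paper simply cites in place of a proof.

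Your caveat is also right. When $a_1=\cdots=a_n=0$, the strict claim is false as literally stated, so the hypothesis $a_n>0$ must be added. This does not affect the paper: its only use of the lemma, in Proposition~\ref{prop:total_power}, has weights $\sigma_i^2>0$.
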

\begin{proof}
	\!The proof follows readily from \cite[3.H.3, 3.H.3.a]{Marshall2011Majorization} and is omitted.
\end{proof}

\subsection{Submodular Functions}
Let $\mathcal{N}=\{1,2,\dots,n\}$ be an index set and $2^{\mathcal{N}}$ be its power set.
A set function $f:2^{\mathcal{N}}\rightarrow\mathbb{R}$ is said to be submodular if 
\begin{equation*}
    f(\mathcal{S})+f(\mathcal{T})\geq f(\mathcal{S}\cup\mathcal{T})+f(\mathcal{S}\cap\mathcal{T}),\ \forall\,\mathcal{S},\mathcal{T}\subseteq\mathcal{N},
\end{equation*}
where $f(\emptyset)=0$.
Denote the submodular polyhedron and base polyhedron associated with $f$ by $\mathcal{P}_f$ and $\mathcal{B}_f$, given by 
\begin{equation*}
\begin{aligned}
        \mathcal{P}_f&=\Big\{x\in\mathbb{R}^n\:\Big|\:\sum\nolimits_{i\in\mathcal{S}}x_i\leq f(\mathcal{S}),\;\forall\mathcal{S}\subseteq \mathcal{N}\Big\},\\
        \mathcal{B}_f&=\Big\{x\in\mathcal{P}_f\:\Big|\:\sum\nolimits_{i\in\mathcal{N}}x_i=f(\mathcal{N})\Big\}.
\end{aligned}
\end{equation*}

The following lemma is useful, which is adapted from \cite[Lemma~9]{schoot2025characterization}. The proof is omitted for brevity.
\begin{lma}\label{lma:base_poly}
    Let $f:2^{\mathcal{N}}\rightarrow\mathbb{R}$ be a submodular function.
    For each $b\in\mathbb{R}^n$, there exists $x^\star\in\mathcal{B}_f$ such that $x^\star+b\preccurlyeq x+b$ for all $x\in\mathcal{B}_f$.
\end{lma}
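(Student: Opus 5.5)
The plan is to obtain $x^\star$ as the unique minimizer of a strictly convex separable function over $\mathcal{B}_f$. We then show that this minimizer is in fact optimal for \emph{every} separable convex function of the form $\sum_i\phi(x_i+b_i)$. Finally, the Hardy--Littlewood--P\'olya characterization of majorization converts this into the claimed relation $x^\star+b\preccurlyeq x+b$.

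\textbf{Setup.} Since $f$ is real-valued on a finite lattice, $\mathcal{B}_f$ is a nonempty compact polytope. Nonemptiness follows, for instance, from the greedy vertex $x_{\pi(k)}=f(\{\pi(1),\dots,\pi(k)\})-f(\{\pi(1),\dots,\pi(k-1)\})$ for any ordering $\pi$. Let $x^\star$ be the unique minimizer of $g(x)=\sum_{i=1}^n (x_i+b_i)^2$ over $\mathcal{B}_f$; it exists by compactness and strict convexity. Every $x\in\mathcal{B}_f$ satisfies $\sum_i x_i=f(\mathcal{N})$, so all vectors $x+b$ with $x\in\mathcal{B}_f$ have the same total $f(\mathcal{N})+\sum_i b_i$. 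Hence, by the classical characterization \cite{Marshall2011Majorization}, it suffices to prove
\[
\sum_i \phi(x^\star_i+b_i)\le \sum_i\phi(x_i+b_i)\quad\text{for all } x\in\mathcal{B}_f \text{ and all convex } \phi:\mathbb{R}\to\mathbb{R}.
\]

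\textbf{Step 1 (exchange optimality condition).} Call $(i,j)$ an exchangeable pair at $x^\star$ if $x^\star+\varepsilon(\chi_i-\chi_j)\in\mathcal{B}_f$ for some $\varepsilon>0$. This happens exactly when no tight set $\mathcal{S}$, meaning $\sum_{k\in\mathcal{S}}x^\star_k=f(\mathcal{S})$, contains $i$ but not $j$. First-order optimality of $g$ along such a direction gives
\[
x^\star_i+b_i\ge x^\star_j+b_j \quad\text{for every exchangeable pair } (i,j).
\]
This condition does not involve the particular function. Consequently, for any convex $\phi$ we get $\phi'_+(x^\star_j+b_j)\le\phi'_-(x^\star_i+b_i)$ on every exchangeable pair.

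\textbf{Step 2 (local implies global).} This is the main obstacle: showing that the exchange condition makes $x^\star$ globally optimal for $\sum_i\phi(x_i+b_i)$ with $\phi$ an arbitrary convex function. I would use the standard decomposition of a difference of two bases. For any $x\in\mathcal{B}_f$, write
\[
x-x^\star=\sum_{t}\mu_t(\chi_{i_t}-\chi_{j_t}), \qquad \mu_t>0,
\]
where each $(i_t,j_t)$ is exchangeable at $x^\star$, $x_{i_t}>x^\star_{i_t}$ and $x_{j_t}<x^\star_{j_t}$. This follows from the tight sets at $x^\star$ being closed under union and intersection, which makes the exchangeability relation a preorder, together with a flow/Hall-type argument on the components of $x-x^\star$; it is Fujishige's exchange lemma for base polyhedra.

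With the decomposition in hand, convexity of $\phi$ gives a lower bound on each coordinate. For a coordinate $k$ with $x_k>x^\star_k$,
\[
\phi(x_k+b_k)-\phi(x^\star_k+b_k)\ge \phi'_+(x^\star_k+b_k)\,(x_k-x^\star_k),
\]
and for a coordinate with $x_k<x^\star_k$,
\[
\phi(x_k+b_k)-\phi(x^\star_k+b_k)\ge -\phi'_-(x^\star_k+b_k)\,(x^\star_k-x_k).
\]
Summing these bounds and regrouping by the pairs $(i_t,j_t)$, the total change is at least $\sum_t\mu_t\big[\phi'_+(x^\star_{i_t}+b_{i_t})-\phi'_-(x^\star_{j_t}+b_{j_t})\big]$. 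Each bracket is nonnegative because Step 1 gives $x^\star_{i_t}+b_{i_t}\ge x^\star_{j_t}+b_{j_t}$ and $\phi'_+\ge\phi'_-$ is monotone.

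\textbf{Conclusion.} We obtain $\sum_i\phi(x^\star_i+b_i)\le\sum_i\phi(x_i+b_i)$ for all convex $\phi$ and all $x\in\mathcal{B}_f$. Together with the equal totals noted in the setup, this yields $x^\star+b\preccurlyeq x+b$. If the decomposition in Step 2 proves cumbersome, the fallback is to verify the partial-sum inequalities directly. For each $k$, the sum of the $k$ largest entries of $x+b$ is the convex function $\phi_k(y)=\max_{|\mathcal{T}|=k}\sum_{i\in\mathcal{T}}y_i$. One would then show via the same exchange condition, or via Fujishige's decomposition-algorithm description of the lexicographically optimal base, that $x^\star$ minimizes $\phi_k(x+b)$ over $\mathcal{B}_f$.
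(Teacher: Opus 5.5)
Your proof is correct. The paper gives no proof of this lemma; it defers to \cite[Lemma~9]{schoot2025characterization}. So the right comparison is with the classical result being cited.

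Note first that the shift $b$ is immaterial. Indeed $\mathcal{B}_f+b=\mathcal{B}_{f_b}$ with $f_b(\mathcal{S})=f(\mathcal{S})+\sum_{i\in\mathcal{S}}b_i$, which is again submodular with $f_b(\emptyset)=0$. The lemma is therefore exactly the classical fact that the minimum-norm (lexicographically optimal) base is the least majorized element of a base polyhedron; cf. \cite{fujishige2005submodular}.

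You re-derive this fact from scratch. Your $x^\star$ is the Euclidean projection of $-b$ onto $\mathcal{B}_f$. Step~1 is the pairwise exchange optimality condition. Step~2 is the standard argument that this local condition is globally sufficient for every objective $\sum_i\phi(x_i+b_i)$. Hardy--Littlewood--P\'olya then closes the argument.

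The decomposition you invoke does hold, and your sketch is the right one. Write $\mathrm{dep}(x^\star,i)$ for the smallest tight set containing $i$. For $U\subseteq\mathrm{supp}^+(x-x^\star)$, the set $\mathcal{S}=\bigcup_{i\in U}\mathrm{dep}(x^\star,i)$ is tight. Hence $x(\mathcal{S})\le f(\mathcal{S})=x^\star(\mathcal{S})$, which is precisely the Gale--Hall supply--demand condition for routing $x-x^\star$ along the exchangeable arcs.

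Compared with the bare citation, your route has three advantages. It is self-contained. It identifies $x^\star$ explicitly and computably, as the solution of a strictly convex QP. It also gives uniqueness for free. If $x^\circ$ were another least element, $x^\circ+b$ would be a permutation of $x^\star+b$, so $x^\circ$ would also minimize $\sum_i(x_i+b_i)^2$, forcing $x^\circ=x^\star$. This would yield the uniqueness part of Proposition~\ref{POPoptimal} directly, without the convex-combination argument used there. The citation, of course, buys brevity.

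There is one small slip to fix. At the end of Step~1 you state $\phi'_+(x^\star_j+b_j)\le\phi'_-(x^\star_i+b_i)$. This is false when $x^\star_i+b_i=x^\star_j+b_j$ and $\phi$ has a kink there; take $\phi(t)=|t|$ at $0$. Step~2 does not actually use this inequality. It uses $\phi'_+(x^\star_{i_t}+b_{i_t})\ge\phi'_-(x^\star_{j_t}+b_{j_t})$, which does follow from $x^\star_{i_t}+b_{i_t}\ge x^\star_{j_t}+b_{j_t}$, the monotonicity of $\phi'_+$, and $\phi'_+\ge\phi'_-$. Restate the end of Step~1 with the one-sided derivatives paired that way.
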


For more knowledge on submodular functions and associated theory, we refer the readers to \cite{fujishige2005submodular}.

\section{Linear Coding With Subchannel Power Constraints}\label{sec:subchannel}
\subsection{Main Result}

The following theorem gives a necessary and sufficient condition for the solvability of Problem 1. Recall that $\lambda_1,\dots,\lambda_{\anti}$ denote the antistable eigenvalues of $A$. Let 
\begin{align*}
\bar{l}=\min\{l,\anti\}.
\end{align*}

\begin{theorem}\label{thm:subchannel}
    Problem 1 is solvable if and only if there exist $\gamma_i\geq1$, $i=1,\dots,{\bar{l}}$, such that 
\begin{subequations}\label{eq:cor_opt_pro}
			\begin{align}
                &\begin{bmatrix}
						\frac{\p_1}{\sigma_1^2}+1&\frac{\p_2}{\sigma_2^2}+1&\cdots&\frac{\p_{l}}{\sigma_{l}^2}+1
					\end{bmatrix}'\prec^{\mathrm{w}}\notag\\ 
					&\qquad\qquad\qquad\quad \begin{bmatrix}
						\gamma_1&\gamma_2&\cdots&\gamma_{\bar{l}}&1 &\cdots&1
					\end{bmatrix}',\label{eq:cor_opt_P_sigma}\\
					&\begin{bmatrix}
						|\lambda_1|^2&|\lambda_2|^2&\cdots&|\lambda_{\anti}|^2
					\end{bmatrix}' \underset{\mathrm{log}}\preccurlyeq\notag\\ 
					&\qquad\qquad\qquad\quad \begin{bmatrix}
						\gamma_1&\gamma_2&\cdots&\gamma_{\bar{l}}&1 &\cdots&1
					\end{bmatrix}'.\label{eq:cor_opt_gamma}
			\end{align}
		\end{subequations}
\end{theorem}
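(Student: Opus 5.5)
The plan is to first reduce the coding problem to an equivalent matrix-factorization problem for the antistable part of the source, and then to translate that problem into the two majorization conditions using Lemmas~\ref{lma:maj1}--\ref{lma:cyclic_matrix}.

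\textbf{Reduction.} Since the stable part of $A$ can be estimated open loop with bounded error, we may assume without loss of generality that $A$ is antistable with eigenvalues $\lambda_1,\dots,\lambda_{\anti}$. By Assumption~\ref{asmpt:cyclic}, after a similarity transformation $A$ is cyclic. With noiseless feedback, the decoder's estimate is available at the encoder. The natural structure is then innovation-based linear coding: the encoder sends $s(k)=F(x(k)-\hat x(k|k-1))$ for some $F\in\mathbb{R}^{l\times\anti}$, and the decoder runs the Kalman-like recursion $\hat x(k+1|k)=A\hat x(k|k-1)+AG\,r(k)$.

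Via a standard Youla/LTI-coding argument (as in \cite{jin2024remote,han2024coded}), I expect to show that every LTI encoder-decoder pair achieving bounded EEC can be reduced, without loss of power, to this form. In this form, boundedness of the EEC is equivalent to stabilizability of an associated Riccati recursion. Its steady-state error covariance $P>0$ must satisfy
\[
\det\!\big(I+\Sigma^{-1/2}FPF'\Sigma^{-1/2}\big)=\prod_i|\lambda_i|^2 .
\]
More precisely, the closed-loop matrix $A(I-GF)$ must be Schur, and the power used on subchannel $i$ is $(FPF')_{ii}$.

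\textbf{Matrix characterization.} Define $Q=\Sigma^{-1/2}FPF'\Sigma^{-1/2}+I\in\mathbb{R}^{l\times l}$, which is symmetric with $Q\ge I$ and rank of $Q-I$ at most $\bar l$. Its diagonal entries are $p_i^{\rm used}/\sigma_i^2+1$. Denote its eigenvalues by $\gamma_1,\dots,\gamma_{\bar l},1,\dots,1$. I will show the following two-directional correspondence.

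\emph{Necessity.} Bounded EEC forces $\big[|\lambda_i|^2\big]\underset{\log}\preccurlyeq[\gamma;1]$. To see this, write $P^{-1/2}A^{-1}\cdots$ in the form of a matrix whose eigenvalues are $\lambda_i^{-1}$ (or $\lambda_i$) and whose squared singular values are the $\gamma$'s, padding with ones. Then apply Lemma~\ref{lma:cyclic_matrix}. Equality of the products comes from the determinant identity above.

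\emph{Sufficiency.} Conversely, given $\gamma$ satisfying the log-majorization in \eqref{eq:cor_opt_gamma}, Lemma~\ref{lma:cyclic_matrix} (real version \cite{li2001construction}) yields a real matrix with the right eigenvalues and singular values. From this matrix I can reconstruct $F$, $G$ and $P$.

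\textbf{Power condition.} By Lemma~\ref{lma:symmetric_matrix}, a symmetric $Q$ with eigenvalues $[\gamma;1]$ and diagonal $d$ exists if and only if $d\preccurlyeq[\gamma;1]$. The strict power constraint requires $d<[p_i/\sigma_i^2+1]$ componentwise, possibly after slight perturbation. By Lemma~\ref{lma:maj1}, the existence of such a $d$ is exactly the strict weak majorization in \eqref{eq:cor_opt_P_sigma}. This equivalence closes both directions.

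\textbf{Main obstacle.} The hardest step is the necessity part of the reduction. It has two components. First, I must show that a general LTI encoder-decoder pair (with its own state and feedback) cannot beat the innovation structure. Second, I must extract from any feasible design a symmetric $Q$ and a matrix whose singular-value and eigenvalue data satisfy the stated relations, including handling the cases $l>\anti$ (padding with ones) and $l<\anti$.

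For this I would try a frequency-domain argument: use the Bode/Jensen-type integral
\[
\int\log\det(I+\Sigma^{-1}\Phi_s)=2\,\TE(A),
\]
together with a spectral factorization. This should produce the required structure, with the partial products bounded by sub-determinant (Fischer-type) inequalities applied to the leading principal blocks.
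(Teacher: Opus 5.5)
\textbf{Necessity has a genuine gap.} Your necessity argument rests on the claim that every LTI encoder--decoder pair can be replaced, without power loss, by static innovation coding $s(k)=F(x(k)-\hat x(k|k-1))$. You do not prove this, and the route you suggest would not close it. A Bode/Jensen-type integral controls only one scalar quantity, the total $\log\det$. That corresponds to comparing $\prod_i\gamma_i$ with $\prod_i|\lambda_i|^2$, essentially $\ctotal>\TE(A)$. Fischer-type inequalities on leading principal blocks of a spectral density concern fixed subsets of subchannels. Neither gives the sorted partial products $\prod_{i\le k}(|\lambda_i|^2)^{\downarrow}\le\prod_{i\le k}\gamma_i^{\downarrow}$ required in \eqref{eq:cor_opt_gamma}. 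Those come from Weyl's inequality (Lemma~\ref{lma:cyclic_matrix}) applied to a constant matrix, and for a general dynamic encoder you must first produce that matrix.

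The paper does this without any reduction:
\begin{itemize}
\item It bounds $p_i$ below by the $w$-induced power and writes $\Sigma^{-1/2}T_w\Sigma^{1/2}=\hat M(zI-\hat A-\hat L\hat M)^{-1}\hat L$, where $\hat A$ contains the encoder's own dynamics.
\item $\mathcal{H}_2$ theory gives $\frac{1}{2\pi}\int\hat T\hat T^*\,\mathrm{d}\omega\ge\hat M\hat X\hat M'$, with $\hat X$ the DARE solution.
\item It then applies Schur--Horn and Lemma~\ref{lma:major_inequality}; the latter is your Weyl step, but applied to $\hat A$, not to $A$.
\end{itemize}
The indispensable structural input is Lemma~\ref{lma:necessary_con}: the loop $\begin{bmatrix}A_e&L\\ M&R\end{bmatrix}$ must be Schur, and $A_e$ must reproduce every unstable eigenvalue of $A$ with multiplicity. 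Nothing in your plan supplies this. Since $\hat A$ may carry extra antistable poles, one only obtains weak log-majorization over the source poles. Equality in \eqref{eq:cor_opt_gamma} then comes from Lemma~\ref{lma:maj2}, by shrinking to $1\le\gamma_i\le 1+\mu_i$. Relatedly, your determinant identity fails when $V\neq0$. For the Kalman gain, $P=A(P^{-1}+F'\Sigma^{-1}F)^{-1}A'+V$ yields only $\det(I+\Sigma^{-1/2}FPF'\Sigma^{-1/2})=|\det A|^2\det P/\det(P-V)\ge\prod_i|\lambda_i|^2$. So equality of the products cannot come from an identity.

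\textbf{Sufficiency is missing two steps.} Your sketch otherwise matches the paper's design: $Y$ from Lemma~\ref{lma:cyclic_matrix}, $\tilde M$ from its SVD, the DARE gain, and an isometry $U$ from Schur--Horn plus Lemma~\ref{lma:maj1}.
\begin{itemize}
\item \emph{Process noise.} With $s=F\tilde x$, the power also contains a $v$-induced term fixed by your $F$. It need not fit in the slack, and a ``slight perturbation'' does not remove it. The paper rescales $\bar M=\epsilon\Sigma^{1/2}U\tilde M$ and $\bar L=\epsilon^{-1}\tilde L U'\Sigma^{-1/2}$. This leaves $A+\bar L\bar M$ and the $w$-induced power $\sigma_i^2\{U(\Gamma-I)U'\}_{ii}$ unchanged, while the $v$-induced power becomes $O(\epsilon^2)$.
\item \emph{Repeated eigenvalues.} Assumption~\ref{asmpt:cyclic} allows Jordan blocks, and the matrix given by Lemma~\ref{lma:cyclic_matrix} need not be cyclic, hence need not be similar to $A$. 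Then $F$, $G$, $P$ cannot be reconstructed from it. The paper uses the strict inequality in \eqref{eq:cor_opt_P_sigma} to replace the unit squared singular values by $\delta<1$, scaling the $\gamma_i$ up by $\delta^{-(n-\bar l)/\bar l}$. It then perturbs to a cyclic $Y_\tau$ with the same eigenvalues, and proves stability and the power bound via the Lyapunov equation with the extra term $V_\tau\ge0$.
\end{itemize}
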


The proof of Theorem~\ref{thm:subchannel}, along with the detailed encoder-decoder design procedure, is postponed to Subsection~\ref{sec:Proof_of_theorem}. 

The solvability condition given in Theorem~\ref{thm:subchannel} involves seeking a collection of auxiliary variables $\gamma_i\geq 1,i=1,\dots,\bar{l}$, and checking two majorization inequalities (\ref{eq:cor_opt_P_sigma}) and (\ref{eq:cor_opt_gamma}), one in additive form and the other in multiplicative form, coupled through the auxiliary variables. The two majorization inequalities indicate that a more even distribution of SNRs $\frac{p_1}{\sigma_1^2},\dots,\frac{p_l}{\sigma_l^2}$ across the subchannels, or a more even distribution of the log-magnitudes of the antistable poles $\log\! |\lambda_1|,\dots,\log\! |\lambda_{n_a}|$ of the source, facilitates the solvability of the problem. 

We note that verifying the solvability condition is essentially convex. Since majorization is permutation-invariant, we can assume without loss of generality that $\gamma_i,i=1,\dots,\bar{l}$ are in nonincreasing order. Then the solvability condition reduces to a geometric program with variables $\gamma_1,\dots,\gamma_{\bar{l}}$ \cite{boyd2004convex}, which can be readily transformed into a convex form. 

Denote by $\mathcal{P}_{\lambda}$ the set of power constraints $\begin{bmatrix}p_1&\cdots&p_l\end{bmatrix}'$ rendering Problem 1 solvable, called the feasible power region, for an unstable 
source with antistable poles
\[
\lambda=\begin{bmatrix}\lambda_1&\cdots&\lambda_{n_a}\end{bmatrix}'.
\]
Then the following two propositions hold.

\begin{proposition}\label{Plambda}
    $\mathcal{P}_{\lambda}$ is a convex set.
\end{proposition}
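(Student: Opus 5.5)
By Theorem~\ref{thm:subchannel}, $\mathcal{P}_\lambda$ is the projection onto the $p$-coordinates of the set of pairs $(p,\gamma)$ that satisfy \eqref{eq:cor_opt_P_sigma}--\eqref{eq:cor_opt_gamma}. A projection of a convex set is convex. So the plan is to rewrite these conditions as a jointly convex system in $(p,\gamma)$, possibly after a harmless relaxation of \eqref{eq:cor_opt_gamma}. Write $u(p)=[\,p_1/\sigma_1^2+1\ \cdots\ p_l/\sigma_l^2+1\,]'$, which is affine in $p$, and $g(\gamma)=[\gamma_1\ \cdots\ \gamma_{\bar l}\ 1\ \cdots\ 1]'$. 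Since majorization is permutation invariant, we may restrict to $\gamma_1\ge\cdots\ge\gamma_{\bar l}\ge 1$. This is a polyhedral constraint, and on it the sorted vector $g(\gamma)^\downarrow$ is a fixed linear function of $\gamma$.

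\textbf{Step 1 (relax the equality in \eqref{eq:cor_opt_gamma}).} Let $\mu_i=\log|\lambda_i|^2>0$. Condition \eqref{eq:cor_opt_gamma} for sorted $\gamma$ reads $\sum_{i\le k}\log\gamma_i\ge\sum_{i\le k}\mu_i^\downarrow$ for all $k$, with equality at $k=n_a$; coordinates beyond $\bar l$ contribute $\log 1=0$. I claim the equality can be replaced by ``$\ge$'', i.e.\ by weak majorization from below of $\log g$ over $\mu$.
\begin{itemize}
\item Suppose $\gamma'$ satisfies the weak version. Apply Lemma~\ref{lma:maj2} to the nonnegative vectors $\mu$ and $\log\gamma'$. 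This gives $0\le\log\gamma\le\log\gamma'$ with exact log-majorization. Hence $\gamma\ge 1$, $\gamma\le\gamma'$, and \eqref{eq:cor_opt_gamma} holds.
\item Decreasing $\gamma$ only helps \eqref{eq:cor_opt_P_sigma}. The condition $u\prec^{\mathrm w} g$ requires each sum of the $k$ smallest entries of $u$ to strictly exceed the corresponding sum for $g$. Those sums for $g$ are monotone nondecreasing in $\gamma$, so passing from $\gamma'$ to $\gamma$ preserves \eqref{eq:cor_opt_P_sigma}.
\end{itemize}
Hence Problem 1 is solvable iff some sorted $\gamma\ge1$ satisfies \eqref{eq:cor_opt_P_sigma} and $\sum_{i\le k}\log\gamma_i\ge c_k:=\sum_{i\le k}\mu_i^\downarrow$ for $k=1,\dots,\bar l$. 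The constraints for $k>\bar l$ reduce to the $k=\bar l$ one, because the extra $g$-entries are $1$ and the $\mu_i$ are positive.

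\textbf{Step 2 (joint convexity).} Each constraint $\sum_{i\le k}\log\gamma_i\ge c_k$ is a superlevel set of a concave function, hence convex. For \eqref{eq:cor_opt_P_sigma}, the $k$-th inequality is $S_k(u(p))>\ell_k(\gamma)$. Here $S_k$, the sum of the $k$ smallest entries, is concave, being a minimum of linear functions, and $u(p)$ is affine. Also $\ell_k$, the sum of the $k$ smallest entries of $g(\gamma)$, is linear in $\gamma$ on the sorted cone. So $\{(p,\gamma): S_k(u(p))-\ell_k(\gamma)>0\}$ is convex. Intersecting all these sets with the sorted cone and $\{p\ge 0\}$, if the power region is taken nonnegative, gives a convex set in $(p,\gamma)$. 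Its projection onto $p$ is $\mathcal{P}_\lambda$, which is therefore convex.

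The main obstacle is Step 1. The equality constraint $\prod\gamma_i=\prod|\lambda_i|^2$ is not convex in $\gamma$, and the argument needs exactly the monotonicity of \eqref{eq:cor_opt_P_sigma} in $\gamma$, together with Lemma~\ref{lma:maj2} applied in the logarithmic domain with $\gamma\ge1$ preserved, to remove it. A secondary point to check is the bookkeeping for $l\ne n_a$, namely the padding with ones in the two differently sized vectors. This bookkeeping does not affect convexity.
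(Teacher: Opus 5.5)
Your route is valid but differs from the paper's, and one bookkeeping sentence in Step 1 is wrong and needs fixing.

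\textbf{Comparison with the paper.} The paper works directly with two feasible points $\bar p,\hat p$ and interpolates the auxiliary variables geometrically, $\tilde\gamma_i=\bar\gamma_i^{\theta}\hat\gamma_i^{1-\theta}$. Because this is linear in $\log\gamma$, it preserves the ordering and \eqref{eq:cor_opt_gamma} exactly, product equality included. It then recovers \eqref{eq:cor_opt_P_sigma} from concavity of the tail sums $\sum_{i\ge k}x_i^\downarrow$ and the weighted AM--GM inequality.

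Your projection argument in $(p,\gamma)$ instead interpolates $\gamma$ arithmetically. This keeps the right-hand sides of \eqref{eq:cor_opt_P_sigma} linear on the sorted cone, but it breaks the product equality. That is why you need the relaxation through Lemma~\ref{lma:maj2} and monotonicity. The relaxation is correct, and it gives a by-product the paper does not state: the equality in \eqref{eq:cor_opt_gamma} can be weakened to $\preccurlyeq_{\mathrm w}$ without changing $\mathcal{P}_\lambda$, so the lifted set is convex in the original $\gamma$.

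The relaxation is avoidable, though. Set $\beta=\log\gamma$ on the cone $\beta_1\ge\cdots\ge\beta_{\bar l}\ge0$. Then every constraint in \eqref{eq:cor_opt_gamma}, equality included, is linear. Also $S_k(u(p))-\ell_k(e^{\beta})$ is jointly concave, because $\ell_k$ is a sum of some of the $e^{\beta_i}$ plus constants. This is exactly the paper's proof written in projection form.

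\textbf{The sentence to fix} is ``the constraints for $k>\bar l$ reduce to the $k=\bar l$ one.'' For $k>\bar l$ the padded entries of $\log g$ are $0$, so the $k$th constraint reads $\sum_{i\le\bar l}\log\gamma_i\ge c_k$. Since $\mu_i>0$, these constraints get \emph{stronger} as $k$ increases. They all reduce to the $k=n_a$ constraint $\sum_{i\le\bar l}\log\gamma_i\ge\sum_{i=1}^{n_a}\log|\lambda_i|^2$, which implies the $k=\bar l$ one rather than following from it.

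With your list $k=1,\dots,\bar l$, the poles $\lambda_{\bar l+1},\dots,\lambda_{n_a}$ are ignored whenever $l<n_a$. For example, with $l=1$ and $n_a=2$ your system accepts $\gamma_1=\max\{|\lambda_1|^2,|\lambda_2|^2\}$, whereas \eqref{eq:cor_opt_gamma} forces $\gamma_1=|\lambda_1\lambda_2|^2$. So the stated equivalence is false. Moreover, a $\gamma'$ satisfying only your list need not satisfy the hypothesis of Lemma~\ref{lma:maj2}.

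The fix is to use $k=1,\dots,\bar l-1$ together with $k=n_a$, i.e., the full weak majorization over all $n_a$ coordinates. The added constraint is again a superlevel set of a concave function, so Step 2 goes through unchanged.
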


\begin{proof}
    Let $\bar{p}=\begin{bmatrix}\bar{p}_1&\cdots&\bar{p}_l\end{bmatrix}'$ and $\hat{p}=\begin{bmatrix}\hat{p}_1&\cdots&\hat{p}_l\end{bmatrix}'$ be arbitrary elements in $\mathcal{P}_{\lambda}$. 
    Then, there exist $\bar{\gamma}_i\geq1,i=1,\dots,\bar{l}$ and $\hat{\gamma}_i\geq1,i=1,\dots,\bar{l}$ such that \eqref{eq:cor_opt_pro} holds under the power constraints $\bar{p}$ and $\hat{p}$, respectively.
    Without loss of generality, we assume that $\bar{\gamma}_1\geq\cdots\geq \bar{\gamma}_{\bar{l}}$ and $\hat{\gamma}_1\geq\cdots\geq \hat{\gamma}_{\bar{l}}$. Let $\tilde{p}=\theta \bar{p} +(1-\theta)\hat{p}$ for any $\theta \in [0, 1]$.
    To prove $\tilde{p}\in\mathcal{P}_\lambda$,
    we construct $\tilde{\gamma}_i=\bar{\gamma}_i^\theta\hat{\gamma}_i^{1-\theta}$, which satisfies $\tilde{\gamma}_1\geq\cdots\geq\tilde{\gamma}_{\bar{l}}\geq1$.
    We observe that for $k=1,\dots,\bar{l}-1$, 
    \begin{equation*}
        \begin{aligned}
            \prod_{i=1}^{k}\tilde{\gamma}_i&=\bigg(\prod_{i=1}^{k}\bar{\gamma}_i\bigg)^\theta\bigg(\prod_{i=1}^{k}\hat{\gamma}_i\bigg)^{1-\theta}\geq\bigg(\prod_{i=1}^{k}|\lambda_i|^{\downarrow}\bigg)^{2},\text{ and}\\
            \prod_{i=1}^{\bar{l}}\tilde{\gamma}_i&=\bigg(\prod_{i=1}^{\bar{l}}\bar{\gamma}_i\bigg)^\theta\bigg(\prod_{i=1}^{\bar{l}}\hat{\gamma}_i\bigg)^{1-\theta}=\prod_{i=1}^{n_a}|\lambda_i|^{2},
        \end{aligned}
    \end{equation*}
 which implies that $\tilde{\gamma}_1,\dots,\tilde{\gamma}_{\bar{l}}$ satisfy \eqref{eq:cor_opt_gamma}. Furthermore, for $k=1,\dots,\bar{l}$, we have
    \begin{equation*}
        \begin{aligned}
            \sum_{i=k}^l\!\bigg(\frac{\tilde{p}_{i}}{\sigma_i^2}+1\bigg)^{\!\downarrow}
             &\!\!\geq\!\theta\sum_{i=k}^l\bigg(\frac{\bar{p}_{i}}{\sigma_i^2}+1\bigg)^{\!\downarrow}+(1-\theta)\sum_{i=k}^l\bigg(\frac{\hat{p}_{i}}{\sigma_i^2}+1\bigg)^{\!\downarrow}\\
             &\!\!>\!\theta\bigg(\!\sum_{i=k}^{\bar{l}}\!\bar{\gamma}_i+l-\bar{l}\bigg)+(1\!-\!\theta)\bigg(\!\sum_{i=k}^{\bar{l}}\!\hat{\gamma}_i+l-\bar{l}\bigg)\\
             &\!\!=\!\sum_{i=k}^{\bar{l}}(\theta\bar{\gamma}_i+(1-\theta)\hat{\gamma}_i)+l-\bar{l}\\
            &\!\!\geq\! \sum_{i=k}^{\bar{l}}\bar{\gamma}_i^\theta\hat{\gamma}_i^{1-\theta}+l-\bar{l}=\sum_{i=k}^{\bar{l}}\tilde{\gamma}_i+l-\bar{l},
        \end{aligned}
    \end{equation*}
    where the first inequality is due to the concavity of the function $(x_1,\dots,x_l)\mapsto\sum_{i=k}^lx_i^\downarrow$, and the last inequality follows from the weighted arithmetic-geometric mean inequality.
    Thus, $\tilde{\gamma}_1,\dots,\tilde{\gamma}_{\bar{l}}$ also satisfy \eqref{eq:cor_opt_P_sigma} under power constraints $\tilde{p}$, establishing that $\tilde{p}\in\mathcal{P}_\lambda$. This completes the proof.
\end{proof}

\begin{proposition}\label{fpr}
Consider two sources with antistable poles represented by $\lambda$ and $\bar{\lambda}$ respectively and a set of subchannels with given noise variances. If $\log\!|\lambda| \preccurlyeq\log\!|\bar{\lambda}|$, then $\mathcal{P}_{\lambda}\supseteq\mathcal{P}_{\bar{\lambda}}$.
\end{proposition}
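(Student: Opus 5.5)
The plan is to reduce the claim to the characterization in Theorem~\ref{thm:subchannel} and to the transitivity of majorization. First, observe that the hypothesis $\log|\lambda|\preccurlyeq\log|\bar{\lambda}|$ only makes sense for vectors of the same length. So both sources have the same number $\anti$ of antistable poles, and hence the same $\bar{l}=\min\{l,\anti\}$. Moreover, all entries are positive because $|\lambda_i|>1$ and $|\bar{\lambda}_i|>1$. By the equivalence recalled in the preliminaries, the hypothesis is the same as
\[
\begin{bmatrix}|\lambda_1|^2&\cdots&|\lambda_{\anti}|^2\end{bmatrix}'\underset{\mathrm{log}}\preccurlyeq\begin{bmatrix}|\bar{\lambda}_1|^2&\cdots&|\bar{\lambda}_{\anti}|^2\end{bmatrix}',
\]
since squaring simply doubles the logarithms.

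Next, take any $p=\begin{bmatrix}p_1&\cdots&p_l\end{bmatrix}'\in\mathcal{P}_{\bar{\lambda}}$. By Theorem~\ref{thm:subchannel} there exist $\gamma_i\ge 1$, $i=1,\dots,\bar{l}$, such that \eqref{eq:cor_opt_P_sigma} holds and \eqref{eq:cor_opt_gamma} holds with $\bar{\lambda}$ in place of $\lambda$. I will keep exactly the same $\gamma_i$ for the source $\lambda$. Condition \eqref{eq:cor_opt_P_sigma} depends only on $p$, the noise variances $\sigma_i^2$, the $\gamma_i$ and $\bar{l}$. None of these change, so it continues to hold.

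For \eqref{eq:cor_opt_gamma}, let $g=\begin{bmatrix}\gamma_1&\cdots&\gamma_{\bar{l}}&1&\cdots&1\end{bmatrix}'\in\mathbb{R}^{\anti}$. We have $\log|\lambda|^2\preccurlyeq\log|\bar{\lambda}|^2$ from the first step and $\log|\bar{\lambda}|^2\preccurlyeq\log g$ from the feasibility of $p$ for $\bar{\lambda}$. Majorization is a preorder, in particular transitive, so $\log|\lambda|^2\preccurlyeq\log g$. This is precisely \eqref{eq:cor_opt_gamma} for the source $\lambda$. Applying Theorem~\ref{thm:subchannel} again then gives $p\in\mathcal{P}_{\lambda}$, and therefore $\mathcal{P}_{\bar{\lambda}}\subseteq\mathcal{P}_{\lambda}$.

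There is no substantial obstacle here. The only point requiring care is the bookkeeping showing that $\anti$ and $\bar{l}$ coincide for the two sources. This guarantees that the padded vector of $\gamma_i$'s has the same dimension in both instances of \eqref{eq:cor_opt_gamma}, and that \eqref{eq:cor_opt_P_sigma} is literally unchanged. It also justifies passing between log-majorization of the squared magnitudes and ordinary majorization of the logarithms, which is valid because all entries are positive.
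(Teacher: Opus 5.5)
Your proposal is correct and follows the paper's proof: you reuse the same $\gamma_i$ certified by Theorem~\ref{thm:subchannel} for $\bar{\lambda}$, note that \eqref{eq:cor_opt_P_sigma} is unchanged, and obtain \eqref{eq:cor_opt_gamma} for $\lambda$ by transitivity of majorization. Your extra bookkeeping, namely equal $\anti$ and $\bar{l}$ and the passage between log-majorization of $|\lambda_i|^2$ and majorization of $\log|\lambda_i|$, makes explicit what the paper leaves implicit.
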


\begin{proof}
    For any $\bar{p}\in\mathcal{P}_{\bar{\lambda}}$, there exist $\bar{\gamma}_i\geq1,i=1,\dots,\bar{l}$ such that the majorization inequalities \eqref{eq:cor_opt_pro} corresponding to $\bar{\lambda}$ and $\bar{p}$ hold. 
    Since $\log\!|\lambda| \preccurlyeq \log\!|\bar{\lambda}|$, the transitivity of the majorization order implies that these $\bar{\gamma}_i$  also satisfy \eqref{eq:cor_opt_pro} for $\lambda$ and $\bar{p}$. Therefore, $\bar{p}\in\mathcal{P}_{\lambda}$, which completes the proof.
\end{proof}

Proposition~\ref{fpr} demonstrates that the difficulty of transmitting a source via LTI coding is intrinsically constrained not only by the total sum of the log-magnitudes of the antistable poles, namely, the topological entropy, but also by how evenly these log-magnitudes are distributed. A more even distribution of $\log |\lambda_i|,i=1,\dots,\anti,$ is more favorable for transmission. An example illustrating Proposition~\ref{fpr} is as follows.

\begin{exam}\label{example1}
Consider two sources with antistable poles 
\begin{align*}
\lambda=\begin{bmatrix} e^{0.35}&e^{0.3}&e^{0.25} \end{bmatrix}^\prime \text{ and } \bar{\lambda}=\begin{bmatrix}e^{0.8}&e^{0.05}&e^{0.05}\end{bmatrix}^\prime,
\end{align*}
respectively. One can verify that $\log|\lambda|\preccurlyeq\log|\bar{\lambda}|$. Consider a parallel Gaussian channel comprising two subchannels with noise variances given by $\sigma_1^2 = 2, \sigma_2^2 = 3$. The feasible power regions for the two sources are plotted in Fig.~\ref{fig:feasible_region}. 
One can see that the feasible power region for source corresponding to $\lambda$ contains that for source corresponding to $\bar{\lambda}$.
\end{exam}

\begin{figure}[htbp]
	\centering
	\includegraphics[width=0.9\linewidth]{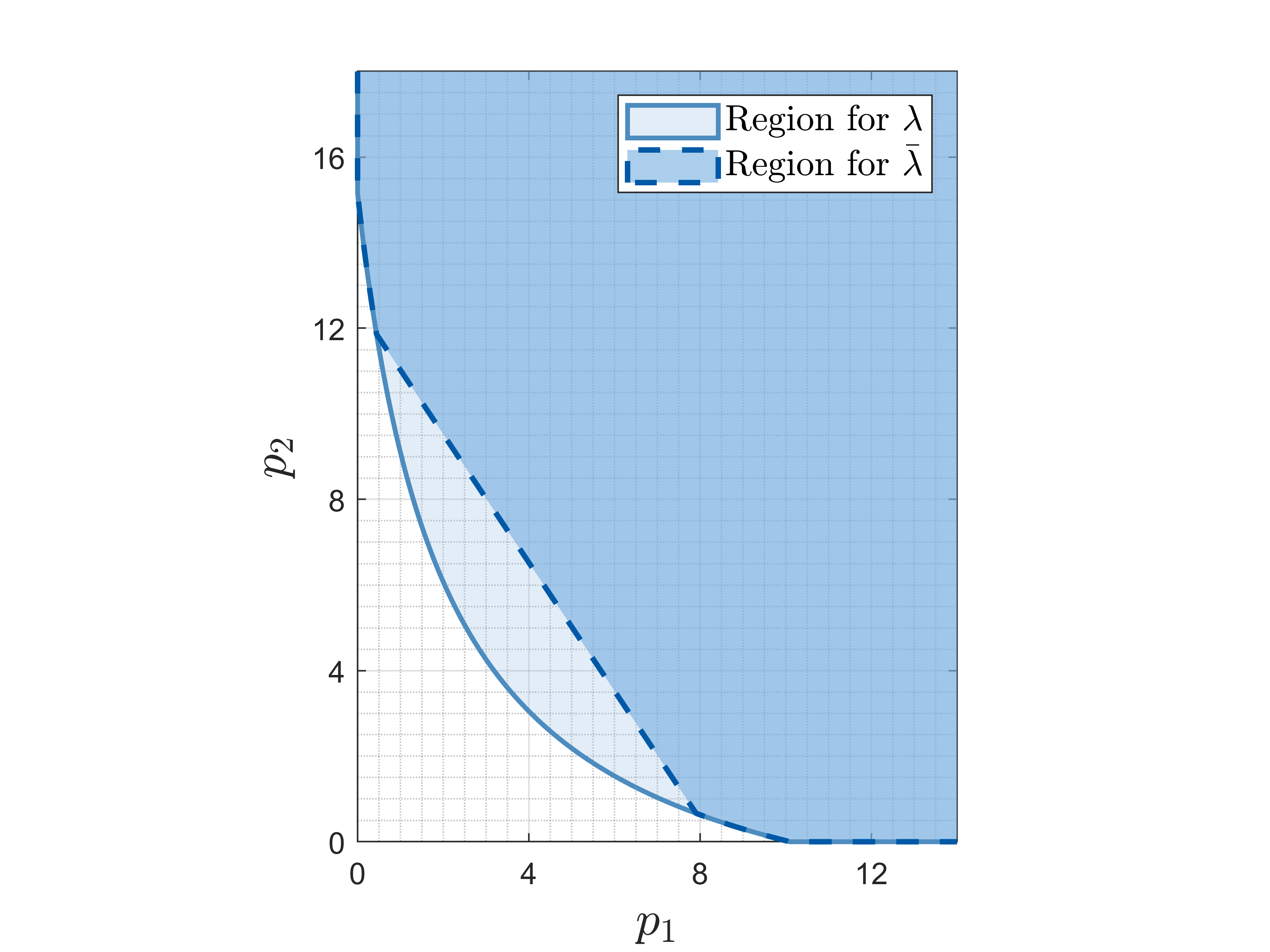}
	\caption{Comparison of feasible power regions for two different sources in Example \ref{example1}. 
    }
	\label{fig:feasible_region}
\end{figure}

The following corollaries follow from Theorem~\ref{thm:subchannel}. They give sufficient solvability conditions directly in terms of majorization inequalities between subchannel SNRs and magnitudes of the antistable poles.

\begin{corollary}\label{corollary1}
      When $l\geq n_a$, Problem 1 is solvable if
    \begin{align}
    &\begin{bmatrix}
    \frac{\p_1}{\sigma_1^2}+1&\frac{\p_2}{\sigma_2^2}+1&\cdots&\frac{\p_{l}}{\sigma_{l}^2}+1
\end{bmatrix}'\prec^{\mathrm{w}}\notag\\ 
&\qquad\begin{bmatrix}
    |\lambda_1|^2&|\lambda_2|^2&\cdots&|\lambda_{n_a}|^2&1 &\cdots&1
\end{bmatrix}'.\label{eq:snr_pole_cor1}
    \end{align}
\end{corollary}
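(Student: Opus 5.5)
Since $l\geq n_a$, we have $\bar{l}=\min\{l,n_a\}=n_a$. The plan is to apply Theorem~\ref{thm:subchannel} with the most natural choice of auxiliary variables, namely $\gamma_i=|\lambda_i|^2$ for $i=1,\dots,n_a$. Every $\lambda_i$ is antistable, so $|\lambda_i|\geq 1$ and hence $\gamma_i\geq 1$, which is the admissibility requirement in Theorem~\ref{thm:subchannel}.

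With this choice, the vector on the right-hand side of \eqref{eq:cor_opt_P_sigma} is exactly $\begin{bmatrix}|\lambda_1|^2&\cdots&|\lambda_{n_a}|^2&1&\cdots&1\end{bmatrix}'$. Condition \eqref{eq:cor_opt_P_sigma} therefore becomes literally the hypothesis \eqref{eq:snr_pole_cor1}.

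For \eqref{eq:cor_opt_gamma}, we need
\[
\begin{bmatrix}|\lambda_1|^2&\cdots&|\lambda_{n_a}|^2\end{bmatrix}'\underset{\mathrm{log}}\preccurlyeq\begin{bmatrix}|\lambda_1|^2&\cdots&|\lambda_{n_a}|^2&1&\cdots&1\end{bmatrix}',
\]
where the right-hand vector has $\bar{l}=n_a$ entries from the $\gamma_i$ followed by ones. Two cases arise from how the dimensions in \eqref{eq:cor_opt_gamma} are read.

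\textbf{Same length.} If the two sides of \eqref{eq:cor_opt_gamma} have the same length (only the first $\bar{l}$ entries $\gamma_i$ appear, padding taken to match), then with $\bar{l}=n_a$ the vector $\gamma$ equals $\begin{bmatrix}|\lambda_1|^2&\cdots&|\lambda_{n_a}|^2\end{bmatrix}'$. The two vectors coincide, and log-majorization holds trivially by reflexivity.

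\textbf{Padded with ones.} If instead the left-hand vector is understood as padded with ones to length $l$, the two vectors are still permutations of each other. Log-majorization, like majorization, is permutation-invariant, so it again holds.

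In either reading \eqref{eq:cor_opt_gamma} is satisfied. Theorem~\ref{thm:subchannel} then gives solvability of Problem 1.

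The only point needing care is this dimension and ordering bookkeeping: the $\gamma_i$ need not be sorted, and majorization is defined through sorted rearrangements, so permutation invariance settles it. Since both conditions of Theorem~\ref{thm:subchannel} are met by this one choice, there is no genuine obstacle.
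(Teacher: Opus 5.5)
Your proof is correct and is exactly the paper's argument: with $\bar{l}=n_a$, the choice $\gamma_i=|\lambda_i|^2\geq 1$ turns \eqref{eq:cor_opt_P_sigma} into the hypothesis \eqref{eq:snr_pole_cor1}, and it makes \eqref{eq:cor_opt_gamma} hold trivially because both sides coincide. The case split on how to read the dimensions is unnecessary, since when $\bar{l}=n_a$ both sides of \eqref{eq:cor_opt_gamma} have length $n_a$ and there is no padding.
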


\begin{proof}
When \eqref{eq:snr_pole_cor1} holds, setting
$\gamma_i \!=\! |\lambda_i|^2$, $i=1,\ldots,n_a$
satisfies the solvability condition \eqref{eq:cor_opt_pro} of Theorem~\ref{thm:subchannel}.
\end{proof}

\begin{corollary}\label{corollary2}
    When $l\leq n_a$, Problem 1 is solvable if
    \begin{align}
    &\begin{bmatrix}|\lambda_1|^2&|\lambda_2|^2&\cdots&|\lambda_{n_a}|^2
\end{bmatrix}'\underset{\log}{\prec_{\mathrm{w}}}\notag\\ 
&\qquad\begin{bmatrix}
    \frac{\p_1}{\sigma_1^2}+1&\frac{\p_2}{\sigma_2^2}+1&\cdots&\frac{\p_{l}}{\sigma_{l}^2}+1&1&\cdots&1
\end{bmatrix}'.\label{eq:pole_snr_cor2}
    \end{align}
\end{corollary}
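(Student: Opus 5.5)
The plan is to exhibit auxiliary variables $\gamma_1,\dots,\gamma_l\geq 1$ satisfying the solvability condition \eqref{eq:cor_opt_pro} of Theorem~\ref{thm:subchannel}. Since $l\leq \anti$, we have $\bar{l}=l$. Hence \eqref{eq:cor_opt_P_sigma} compares two vectors of length $l$ with no trailing ones, while \eqref{eq:cor_opt_gamma} compares two vectors of length $\anti$. I will work additively in logarithms. Set $a_i=\log|\lambda_i|^2>0$ for $i=1,\dots,\anti$, and $b_i=\log\big(\frac{p_i}{\sigma_i^2}+1\big)$ for $i=1,\dots,l$.

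Because the power constraint \eqref{eq:subchannelC} is strict, only $p_i>0$ is meaningful, so $b_i>0$ for all $i$. The hypothesis \eqref{eq:pole_snr_cor2} is then equivalent to $a\prec_{\mathrm{w}}\begin{bmatrix} b\\ \boldsymbol{0}\end{bmatrix}$, with all partial-sum inequalities strict.

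\emph{Step 1 (shrinking).} Strict inequalities survive a small perturbation, so we can choose $\delta>0$ small enough that $b'=b-\delta\boldsymbol{1}$ is still nonnegative and $a\preccurlyeq_{\mathrm{w}}\begin{bmatrix} b'\\ \boldsymbol{0}\end{bmatrix}$. This works because each partial sum of the right-hand side changes by at most $l\delta$, while each of the finitely many gaps is positive.

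\emph{Step 2 (Lemma~\ref{lma:maj2}).} Apply Lemma~\ref{lma:maj2} to $a\in\mathbb{R}^{\anti}_+$ and $b'\in\mathbb{R}^l_+$. This yields $z\in\mathbb{R}^l_+$ with $z\leq b'$ and $a\preccurlyeq\begin{bmatrix} z\\ \boldsymbol{0}\end{bmatrix}$. Define $\gamma_i=e^{z_i}$; then $\gamma_i\geq1$ because $z\geq0$. Exponentiating the majorization gives
\[
\begin{bmatrix}|\lambda_1|^2&\cdots&|\lambda_{\anti}|^2\end{bmatrix}'\underset{\mathrm{log}}\preccurlyeq\begin{bmatrix}\gamma_1&\cdots&\gamma_l&1&\cdots&1\end{bmatrix}',
\]
which is exactly \eqref{eq:cor_opt_gamma}.

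\emph{Step 3 (strict weak majorization).} From $z\leq b'<b$ we get $\gamma_i<\frac{p_i}{\sigma_i^2}+1$ componentwise. Lemma~\ref{lma:maj1} requires some $w$ with $x>w$ and $w\preccurlyeq\gamma$, where $x=\begin{bmatrix}\frac{p_1}{\sigma_1^2}+1&\cdots&\frac{p_l}{\sigma_l^2}+1\end{bmatrix}'$. Taking $w=\gamma$ itself, and using reflexivity $\gamma\preccurlyeq\gamma$, gives $x\prec^{\mathrm{w}}\gamma$, which is \eqref{eq:cor_opt_P_sigma}. Alternatively, componentwise strict domination is preserved under nondecreasing rearrangement, so all the partial sums of $x^\uparrow$ strictly exceed those of $\gamma^\uparrow$. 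Theorem~\ref{thm:subchannel} then gives solvability.

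The only delicate point is the strictness in \eqref{eq:cor_opt_P_sigma}. Lemma~\ref{lma:maj2} produces only $z\leq b$, which would give a non-strict relation. The pre-shrinking in Step 1 creates the slack needed, and it relies on the hypothesis being a \emph{strict} weak log-majorization and on $p_i>0$.
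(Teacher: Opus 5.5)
Your proposal is correct and takes essentially the same route as the paper. The paper shrinks the SNR vector multiplicatively by a factor $\epsilon<1$ (your additive $\delta$-shift in logarithms, with $\epsilon=e^{-\delta}$), applies Lemma~\ref{lma:maj2} to obtain $1\leq\gamma_i\leq\epsilon\big(\frac{p_i}{\sigma_i^2}+1\big)$ satisfying \eqref{eq:cor_opt_gamma}, and then uses $\epsilon<1$ to get the strict relation \eqref{eq:cor_opt_P_sigma}. Your explicit justification that $p_i>0$ and your perturbation bound on the partial sums are slightly more careful than the paper's one-line treatment of the same steps.
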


\begin{proof}
By \eqref{eq:pole_snr_cor2}, there exists a constant $\epsilon<1$ such that $\epsilon\big(\frac{\p_i}{\sigma_i^2}+1\big)>1$ for all $i=1,\dots,l$ and 
    \begin{align*}
    &\begin{bmatrix}|\lambda_1|^2&|\lambda_2|^2&\cdots&|\lambda_{n_a}|^2
\end{bmatrix}'\underset{\log}{\prec_{\mathrm{w}}}\notag\\ 
&\begin{bmatrix}
    \epsilon\big(\frac{\p_1}{\sigma_1^2}+1\big)&\epsilon\big(\frac{\p_2}{\sigma_2^2}+1\big)&\cdots&\epsilon\big(\frac{\p_{l}}{\sigma_{l}^2}+1\big)&1&\cdots&1
\end{bmatrix}'.
    \end{align*}
Then by Lemma~\ref{lma:maj2}, there exist $\gamma_i$, $i=1,\dots,l$ that satisfy $1\leq\gamma_i\leq\epsilon\big(\frac{p_i}{\sigma_i^2}+1\big)$ and \eqref{eq:cor_opt_gamma}. 
Furthermore, since $\epsilon<1$, we have $\gamma_i<\frac{p_i}{\sigma_i^2}+1$, implying that \eqref{eq:cor_opt_P_sigma} also holds. Thus, by Theorem~\ref{thm:subchannel}, Problem~1 is solvable.
\end{proof}

\begin{remark}
    Recall that the capacity of the $i$th subchannel is given by \cite{Thomas2006Elements}
\begin{equation*}
	\ca_i=\frac{1}{2}\log\left(1+\frac{\p_i}{\sigma_i^2}\right),\quad i=1,\dots,l.
\end{equation*}
The sufficient condition (\ref{eq:pole_snr_cor2}) can then be rewritten as
\begin{multline*}
\begin{bmatrix}\log|\lambda_1|&\log|\lambda_2|&\dots&\log|\lambda_{n_a}|\end{bmatrix}'\prec_{\mathrm{w}}\\
\begin{bmatrix}c_1&c_2&\dots&c_{l}&0&\dots&0\end{bmatrix}',
\end{multline*}
which is consistent with the condition in \cite[Theorem 2.1]{zaidi2016tightness}. In this sense, Theorem \ref{thm:subchannel} sharpens the corresponding result in \cite{zaidi2016tightness} by giving an exact characterization of solvability under prescribed subchannel powers.
\end{remark}

\begin{remark}
While the conditions in the above corollaries are simpler to verify, they are generally conservative. Corollary~\ref{corollary1} becomes necessary when $n_a=1$, namely, when the source has only one single antistable pole, whereas Corollary~\ref{corollary2} becomes necessary when $l=1$, namely, when there is only one single subchannel.
\end{remark}

\subsection{Solvability and Channel Capacity}

In view of Theorem~\ref{thm:subchannel}, the solvability of Problem 1 amounts to the feasibility of \eqref{eq:cor_opt_pro} over the variables $\gamma_i\geq 1,i=1,\dots,\bar{l}$.
Applying Lemma~\ref{lma:maj1} to (\ref{eq:cor_opt_P_sigma}) and invoking the Schur-concavity of the function $(x_1,\dots,x_l)\mapsto\sum_{i=1}^l \log x_i$ \cite{Marshall2011Majorization} yields
\begin{align*}
\ctotal=\sum_{i=1}^l \log \left(1+\frac{p_i}{\sigma_i^2}\right)> \sum_{i=1}^{\bar{l}} \log \gamma_i.    
\end{align*}
This, combined with (\ref{eq:cor_opt_gamma}), yields $\ctotal>\TE(A)$.
This means that for Problem 1 to be solvable, the total channel capacity must exceed the topological entropy of the source. However, such a total capacity requirement is in general insufficient.

\begin{exam}
Recall the source with antistable poles $\bar{\lambda}$ and the parallel Gaussian channel described in Example~\ref{example1}. Suppose the power constraints are set to $p_1=4$ and $p_2=4$. 
Although the total channel capacity $\ctotal=0.973$ exceeds the source's topological entropy $\TE(A)=0.9$, Problem~1 is infeasible under this setting since $(4,4)$ lies outside the feasible power region for $\bar{\lambda}$, as illustrated in Fig.~\ref{fig:feasible_region}.
\end{exam}

A natural question then arises: Does there exist a class of sources for which the requirement $\ctotal>\TE(A)$ is not only necessary but also sufficient for the solvability of Problem~1? If so, what special features do such sources possess, and how can they be characterized? These questions are answered by the following theorem.

\begin{theorem}\label{cor:C>H}
    The condition $\ctotal>\TE(A)$ is necessary and sufficient for solvability of Problem 1 if and only if
\begin{equation}\label{eq:source_even}
      \log|\lambda| \preccurlyeq {\textstyle \frac{\TE(A)}{l}} \begin{bmatrix}
           \boldsymbol{1}_l\\
            \boldsymbol{0}
                 \end{bmatrix}.
    \end{equation}
\end{theorem}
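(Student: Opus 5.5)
We prove the two directions of the ``if and only if'' separately. Throughout we use the fact, established just before the statement, that $\ctotal>\TE(A)$ is always necessary. So the real content is whether $\ctotal>\TE(A)$ is also \emph{sufficient} for every choice of $p_i,\sigma_i$. We work with the normalization implicit in that derivation, in which \eqref{eq:cor_opt_gamma} forces $\sum_{i=1}^{\bar l}\log\gamma_i$ to equal the quantity compared with $\ctotal$. Below, $G$ denotes $\prod_{i=1}^{\anti}|\lambda_i|^2$, the product that $\prod_i\gamma_i$ must equal. Write $x_i=\frac{p_i}{\sigma_i^2}+1$ for $i=1,\dots,l$.

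\textbf{Sufficiency of \eqref{eq:source_even}.} Since $\log|\lambda|$ has positive entries and must be majorized by a vector whose last $\anti-l$ entries are zero, \eqref{eq:source_even} forces $l\le\anti$, so $\bar l=l$. Take any $\gamma\in\mathbb{R}^l$ with $\gamma_i\ge1$ and $\prod_i\gamma_i=G$.
\begin{enumerate}[i)]
\item The vector $\log\gamma$ has nonnegative entries, supported on $l$ coordinates, with fixed sum. The uniform vector $\frac{\log G}{l}\boldsymbol 1_l$ is the minimum in majorization among all such vectors, so $\frac{\log G}{l}\boldsymbol 1_l \preccurlyeq \log\gamma$.
\item Condition \eqref{eq:source_even} (appropriately scaled) says $\log|\lambda|^2 \preccurlyeq \frac{\log G}{l}[\boldsymbol 1_l;\boldsymbol 0]$. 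By transitivity, \eqref{eq:cor_opt_gamma} holds for every such $\gamma$.
\item It remains to choose $\gamma$ with $1\le\gamma<x$ componentwise and $\prod_i\gamma_i=G$. This is possible because $\prod_i x_i>G\ge1$: shrink the $x_i$ continuously toward $1$, coordinate by coordinate, until the product equals $G$. If some $x_i$ must be pushed all the way down, take $\gamma_i$ slightly below $x_i$ but at least $1$, using slack from the other coordinates.
\item Then $x>\gamma$ componentwise, so Lemma~\ref{lma:maj1} with $z=\gamma$ gives $x\prec^{\mathrm w}\gamma$, which is \eqref{eq:cor_opt_P_sigma}. Theorem~\ref{thm:subchannel} now yields solvability.
\end{enumerate}

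\textbf{Necessity of \eqref{eq:source_even}.} Suppose \eqref{eq:source_even} fails. We exhibit a channel with $\ctotal>\TE(A)$ for which Problem 1 is infeasible. Take equal noise variances and equal powers with $x_i=G^{1/l}(1+\varepsilon)$, so that $\ctotal>\TE(A)$ for every $\varepsilon>0$.

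Assume, for contradiction, that some $\gamma=\gamma(\varepsilon)$ satisfies \eqref{eq:cor_opt_pro}. By Lemma~\ref{lma:maj1} there is a $z<x$ with $z\preccurlyeq[\gamma;\boldsymbol 1]$. Comparing total sums gives
\[
\sum_{i=1}^{\bar l}\gamma_i+(l-\bar l)<l\,G^{1/l}(1+\varepsilon),
\qquad \prod_{i=1}^{\bar l}\gamma_i=G .
\]
We split into two cases.
\begin{itemize}
\item \emph{Case $l>\anti$.} Here $\bar l=\anti$. By AM--GM, $\sum_i\gamma_i\ge \anti G^{1/\anti}$. Since $t\mapsto (G^{t}-1)/t$ is increasing, $\anti(G^{1/\anti}-1)>l(G^{1/l}-1)$ whenever $G>1$. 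This contradicts the displayed sum bound for small $\varepsilon$.
\item \emph{Case $l\le\anti$.} Here $\bar l=l$. The sum bound together with AM--GM squeezes $\gamma(\varepsilon)\to G^{1/l}\boldsymbol 1_l$ as $\varepsilon\to0$. Majorization is a closed relation, so passing to the limit in \eqref{eq:cor_opt_gamma} gives $\log|\lambda|^2\preccurlyeq\frac{\log G}{l}[\boldsymbol 1_l;\boldsymbol 0]$. This is exactly \eqref{eq:source_even}, a contradiction.
\end{itemize}
Note that \eqref{eq:source_even} automatically fails when $l>\anti$ (reading both vectors with zero padding), so the first case is consistent with the statement.

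The main obstacle is the limiting argument in the necessity direction. One must make the AM--GM squeeze quantitative enough to force $\gamma(\varepsilon)$ to the uniform vector. One must also handle the constraint $\gamma_i\ge1$ and the zero padding consistently, so that the closedness of majorization applies to vectors of matching length.
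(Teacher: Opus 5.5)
Your proof is correct and follows the paper's argument. For sufficiency, you choose $1\le\gamma_i<\frac{p_i}{\sigma_i^2}+1$ with $\prod_i\gamma_i=\prod_i|\lambda_i|^2$ and chain the uniform-vector majorization by transitivity; for necessity, you take equal capacities slightly above $\TE(A)/l$, combine the sum bound from \eqref{eq:cor_opt_P_sigma} with AM--GM from \eqref{eq:cor_opt_gamma}, and pass to the limit in the majorization. The only deviation is in the case $l>n_a$, which you dispose of separately via the monotonicity of $(G^t-1)/t$. The paper instead pads $\gamma$ with ones and notes that the same squeeze would force those padded entries to $e^{2\TE(A)/l}>1$, which gives $\bar l=l$ directly.
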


\begin{proof}
    \!We first prove sufficiency. Suppose that \eqref{eq:source_even} holds. 
    We show that Problem~1 is solvable whenever $\ctotal>\TE(A)$.
    By $\ctotal>\TE(A)$, there exists $\{\gamma_i\}_{i=1}^l$ such that  $1\leq \gamma_i< \frac{\p_i}{\sigma_i^2}+1$ and   $\sum_{i=1}^l\frac{1}{2}\log\!\gamma_i=\TE(A)$. 
    Then \eqref{eq:cor_opt_P_sigma} is satisfied. 
    By \eqref{eq:source_even}, $n_a\geq l$. Together with $\frac{\TE(A)}{l}\boldsymbol{1}_l\!\preccurlyeq\! \begin{bmatrix}
         \frac{1}{2}\log\!\gamma_1\!&\!\cdots\!&\!\frac{1}{2}\log\!\gamma_l
     \end{bmatrix}^\prime$, the transitivity of majorization order yields \eqref{eq:cor_opt_gamma}.
 Thus, by Theorem~1, Problem~1 is solvable.

    We next prove necessity. Suppose that Problem~1 is solvable whenever $\ctotal\!>\!\TE(A)$. We show  \eqref{eq:source_even} holds. For any $\epsilon>1$, consider the subchannel capacities $\ca_{1\epsilon}=\cdots=\ca_{l\epsilon}
=
\frac{\TE(A)}{l}+\frac{1}{2}\log\epsilon$. Since $\sum_{i=1}^lc_{i\epsilon}>\TE(A)$, Problem~1 is solvable. By Theorem~\ref{thm:subchannel}, there exist $\gamma_{i\epsilon}\geq1$, $i=1,\dots,\bar{l}$ satisfying \eqref{eq:cor_opt_pro}. If $\bar l<l$, additionally define $\gamma_{i\epsilon}=1$, $i=\bar l+1,\dots,l$. From \eqref{eq:cor_opt_P_sigma},
\begin{equation}\label{eq:gamma_epsilon}
    \sum\nolimits_{i=1}^{l}\gamma_{i\epsilon}
<
\sum\nolimits_{i=1}^le^{2\ca_{i\epsilon}}=\epsilon l e^{\frac{2}{l}\TE(A)}.
\end{equation}
By \eqref{eq:cor_opt_gamma} and arithmetic-geometric mean inequality, we obtain
\begin{equation}\label{eq:AM_gamma}
    \! \!\!\frac{1}{l}\!\sum\nolimits_{i=1}^{{l}}\gamma_{i\epsilon}\geq\Big(\!\prod\nolimits_{i=1}^{{l}}\gamma_{i\epsilon}\Big)^{\frac{1}{l}}=\prod\nolimits_{i=1}^{\anti}|\lambda_i|^{\frac{2}{l}}=e^{\frac{2}{l}\TE(A)}.
\end{equation}
Combining this with \eqref{eq:gamma_epsilon} gives $\lim_{\epsilon\rightarrow1}\frac{1}{l}\sum_{i=1}^{{l}}\gamma_{i\epsilon}=e^{\frac{2}{l}\TE(A)}$.
Since the geometric mean remains equal to
$e^{\frac{2}{l}\TE(A)}$, equality in \eqref{eq:AM_gamma} is approached as $\epsilon\rightarrow1$, which implies $\lim_{\epsilon\rightarrow1}\gamma_{i\epsilon}
=e^{\frac{2}{l}\TE(A)}$, $i=1,\dots,l$.
Since $e^{\frac{2}{l}\TE(A)}>1$, this further implies $\bar l=l$. Finally, since $\{\gamma_{i\epsilon}\}_{i=1}^l$ satisfies \eqref{eq:cor_opt_gamma} for any $\epsilon>1$, taking the limit $\epsilon\rightarrow1$ in this majorization relation yields \eqref{eq:source_even}. This completes the proof.
\end{proof}

Theorem \ref{cor:C>H} shows that the total capacity condition $\ctotal\!>\!\TE(A)$ is sufficient only when the magnitudes of the antistable poles are sufficiently evenly distributed. Note that the inequality (\ref{eq:source_even}) automatically holds when all antistable poles have the same magnitude. It also holds when $l=1$, i.e., there is only a single subchannel, thereby recovering the classical data-rate condition for transmission over a single channel.

\subsection{Proof of Theorem~\ref{thm:subchannel} and Encoder-decoder Design}\label{sec:Proof_of_theorem}
\begin{myproof}[Theorem~\ref{thm:subchannel}]
We first prove the necessity. Suppose that an encoder \eqref{eq:encoder} and a decoder \eqref{eq:decoder} are designed such that the EEC is bounded and the subchannel power constraints are satisfied. 
Without loss of generality, we assume \eqref{eq:encoder} is minimal.
Denote by $\tf{T}_v(z)$ and $\tf{T}_w(z)$ the transfer matrices from $v$ and $w$ to $s$, respectively.
By \eqref{eq:subchannelC}, $\tf{T}_v(z)$ and $\tf{T}_w(z)$ must be stable, and we have
	\begin{align}
		\p_i&>\frac{1}{2\pi}\int_{-\pi}^{\pi}\!\{{\tf{T}_v(e^{j\omega})V\tf{T}_v(e^{j\omega})^*\!+\!\tf{T}_w(e^{j\omega})\Sigma\tf{T}_w(e^{j\omega})^*}\}_{ii}\,\mathrm{d}\omega\notag\\
		&\ge \frac{1}{2\pi}\int_{-\pi}^{\pi}\{\tf{T}_w(e^{j\omega})\Sigma\tf{T}_w(e^{j\omega})^*\}_{ii}\,\mathrm{d}\omega\notag\\
		&=\frac{1}{2\pi}\int_{-\pi}^{\pi}\{\Sigma^{\frac{1}{2}}\Sigma^{-\frac{1}{2}}\tf{T}_w(e^{j\omega})\Sigma\tf{T}_w(e^{j\omega})^*\Sigma^{-\frac{1}{2}}\Sigma^{\frac{1}{2}}\}_{ii}\,\mathrm{d}\omega\notag\\
		&=\sigma_i^2\frac{1}{2\pi}\int_{-\pi}^{\pi}\{\hat{\tf{T}}(e^{j\omega})\hat{\tf{T}}(e^{j\omega})^*\}_{ii}\,\mathrm{d}\omega,\label{eq:thm1P}
	\end{align}
	where $\hat{\tf{T}}(z)=\Sigma^{-\frac{1}{2}}\tf{T}_w(z)\Sigma^{\frac{1}{2}}$. 
    Direct calculation gives
    \begin{equation*}
        {\tf{T}}_w(z)=\begin{bmatrix}
            M&R
        \end{bmatrix}\begin{bmatrix}
            zI-A_e &-L\\
            -M&zI-R
        \end{bmatrix}^{-1}\begin{bmatrix}
            0\\
            I
        \end{bmatrix}.
    \end{equation*}
    Then we obtain
    \begin{equation*}
        \hat{\tf{T}}(z)=\hat{M}(zI-\hat{A}-\hat{L}\hat{M})^{-1}\hat{L},
    \end{equation*}
    where $\hat{M}=\Sigma^{-\frac{1}{2}}\begin{bmatrix}
        M&R
    \end{bmatrix}$, $\hat{A}=\begin{bmatrix}
        A_e&L\\
        0&0
    \end{bmatrix}$, and $\hat{L}=\begin{bmatrix}
        0\\
        I
    \end{bmatrix}\Sigma^{\frac{1}{2}}$. 
    By Lemma~\ref{lma:necessary_con} (see Appendix~\ref{appendix:necessary_con}), $\hat{A}+\hat{L}\hat{M}$ is Schur stable.
	Then, it follows from $\mathcal{H}_2$ optimal control theory \cite{zhou1996robust} that \begin{equation*}\label{eq:thmlmaCXC}
		\frac{1}{2\pi}\int_{-\pi}^{\pi}\hat{\tf{T}}(e^{j\omega})\hat{\tf{T}}(e^{j\omega})^*\,\mathrm{d}\omega\geq\hat{M}\hat{X}\hat{M}',
	\end{equation*}
	where $\hat{X}$ is the semi-stabilizing solution to the following discrete algebraic Riccati equation (DARE)
	\begin{equation}\label{eq:thm_dare}
		\hat{X}=\hat{A}\hat{X}(I+\hat{M}'\hat{M}\hat{X})^{-1}\hat{A}'.
	\end{equation}
	Together with \eqref{eq:thm1P}, we obtain
	\begin{equation}\label{eq:thmp_sigma}
		\frac{\p_i}{\sigma_i^2}>\frac{1}{2\pi}\int_{-\pi}^{\pi}\{\hat{\tf{T}}(e^{j\omega})\hat{\tf{T}}(e^{j\omega})^*\}_{ii}\,\mathrm{d}\omega\geq\{\hat{M}\hat{X}\hat{M}'\}_{ii}.
	\end{equation}
	Let $\mu_i$, $i=1,\dots,l$, denote the eigenvalues of $\hat{M}\hat{X}\hat{M}'$ ordered nonincreasingly, i.e.,
	\begin{equation}\label{eq:thmmu}
		\mu_1\geq\mu_2\geq\cdots\geq\mu_l\ge0.
	\end{equation}
	Then it follows from Lemma~\ref{lma:symmetric_matrix} that
	\begin{equation*}\label{eq:thmCXCdlambda}
		\begin{bmatrix}
			\{\hat{M}\hat{X}\hat{M}'\}_{11}&\cdots&\{\hat{M}\hat{X}\hat{M}'\}_{ll}
		\end{bmatrix}'\preccurlyeq\begin{bmatrix}
			\mu_1&\cdots&\mu_l
		\end{bmatrix}'.
	\end{equation*}
	Combining this with \eqref{eq:thmp_sigma} yields
	\begin{equation}\label{eq:thmweaklymajor}
		\begin{bmatrix}
			\frac{\p_1}{\sigma_1^2}&\frac{\p_2}{\sigma_2^2}&\cdots&\frac{\p_{l}}{\sigma_{l}^2}
		\end{bmatrix}'\prec^{\mathrm{w}}\begin{bmatrix}
			\mu_1&\mu_2&\cdots&\mu_{l}
		\end{bmatrix}'.
	\end{equation}
    By Lemma~\ref{lma:necessary_con},
    all unstable eigenvalues of $A$ are contained within those of $A_e$, and hence within those of $\hat{A}$.
    Suppose $\hat{A}$ has $m$ antistable eigenvalues in total, denoted $\lambda_1,\lambda_2,\dots,\lambda_{m}$, with the first $\anti$ being the antistable eigenvalues of $A$.
	 Then, applying Lemma~\ref{lma:major_inequality} (see Appendix~\ref{appdix:key_lemma}) to the DARE \eqref{eq:thm_dare} yields
\begin{equation*}
	\begin{aligned}
	&\begin{bmatrix}
	\log|\lambda_1|^2&\!\cdots\!&\log|\lambda_{\anti}|^2&\log|\lambda_{\anti+1}|^2&\!\cdots\!&\log|\lambda_{m}|^2
	\end{bmatrix}' \\ &\quad\ \ \preccurlyeq\begin{bmatrix}
	\log(1+\mu_1)&\!\cdots\!&\log(1+\mu_{\min\{l,m\}})&0&\!\cdots\!&0
	\end{bmatrix}'.
	\end{aligned}
	\end{equation*}
    It then follows from \eqref{eq:thmmu} that
    \begin{equation*}\label{eq:thm1majorization}
	\begin{aligned}
		&\begin{bmatrix}
			\log|\lambda_1|^2&\log|\lambda_2|^2&\cdots&\log|\lambda_{\anti}|^2
		\end{bmatrix}' {\preccurlyeq_{\mathrm{w}}}\\
		& \begin{bmatrix}
			\log(1+\mu_1)&\cdots&\log(1+\mu_{\bar{l}}) &0 &\cdots&0
		\end{bmatrix}'.
	\end{aligned}
	\end{equation*}
    Hence, by Lemma~\ref{lma:maj2}, there exist $1\leq\gamma_i\leq 1+\mu_i$, $i=1,\dots,\bar{l}$
	such that \eqref{eq:cor_opt_gamma} is satisfied. From \eqref{eq:thmweaklymajor}, it follows that \eqref{eq:cor_opt_P_sigma} is also satisfied, which completes the proof of necessity.
	
	We next prove the sufficiency. Suppose that there exist $\gamma_i\geq1,i=1,\dots,\bar{l}$ satisfying \eqref{eq:cor_opt_pro}.
	We shall construct an encoder and a decoder such that the EEC is bounded and the subchannel power constraints are met.

        For brevity, assume that all eigenvalues of $A$ lie outside the unit circle, i.e., $\anti=n$. This assumption can be removed using arguments similar to those in \cite{braslavsky:2007feedback,qiu2013stabilization}.
	We first consider the case when $A$ has distinct eigenvalues.
	It follows from \eqref{eq:cor_opt_gamma} and Lemma~\ref{lma:cyclic_matrix} that a real matrix $Y$ with eigenvalues $\lambda_1,\dots,\lambda_{n}$ and singular values $\gamma_1^{\frac{1}{2}},\dots,\gamma_{\bar{l}}^{\frac{1}{2}},1,\dots,1$ can be constructed. 
    Since $Y$ and $A$ have the same Jordan canonical form, they are similar, meaning that there is a real nonsingular matrix $Z$ such that $Y=Z^{-1}AZ$.
    Applying the singular value decomposition to $Y$ yields
	\begin{equation*}
		Y=P\begin{bmatrix}
			\Gamma&0\\
			0&I
		\end{bmatrix}^\frac{1}{2}Q',
	\end{equation*}
	where $\Gamma=\diag\{\gamma_1,\dots,\gamma_{\bar{l}}\}$, and $P,Q\in\mathbb{R}^{n\times n}$ are orthogonal matrices. Let 
	\begin{equation}\label{eq:thmMu}
	\tilde{M}=\begin{bmatrix}
	    (\Gamma-I)^{\frac{1}{2}}&0_{\bar{l}\times (n-\bar{l})}
	\end{bmatrix}Q'Z^{-1}.
	\end{equation}
	Then, we obtain
	\begin{align}
		&Z^{-1}AZ(I+Z'\tilde{M}'\tilde{M}Z)^{-1}Z'A'(Z')^{-1}\notag\\
		&=P\!\begin{bmatrix}
			\Gamma&0\\
			0&I
		\end{bmatrix}^\frac{1}{2}\!Q'\!\left(\!I\!+\!Q\begin{bmatrix}
		\Gamma-I&0\\
		0&0
		\end{bmatrix}Q'\!\right)^{-\!1}\!Q\!\begin{bmatrix}
		\Gamma&0\\
		0&I
		\end{bmatrix}^\frac{1}{2}\!P'\notag\\
		&=I,\label{eq:derivation_DARE}
	\end{align}
	which implies that $X:=ZZ'$ satisfies the following DARE
	\begin{equation}\label{eq:thmX_u}
		X=A(I+X\tilde{M}'\tilde{M})^{-1}XA'.
	\end{equation}
    Let 	\begin{equation}\label{eq:thmLu}
		\tilde{L}=-A(I+X\tilde{M}'\tilde{M})^{-1}X\tilde{M}'.
	\end{equation}
    Since the matrix $A+\tilde{L}\tilde{M}\!=\!A(I+X\tilde{M}'\tilde{M})^{-1}\!=\!X(A')^{-1}X^{-1}$
    is Schur stable, $X$ is in fact the stabilizing solution to \eqref{eq:thmX_u}.
    Now, design the parameter matrices of the encoder \eqref{eq:encoder} as
    \begin{equation}\label{eq:design_encoder}
        \begin{aligned}
            \left[\begin{array}{ccc}
		A_{\en} & K & L\\
		M & N & R 
	\end{array}\right]=\left[\begin{array}{ccc}
		A & 0 & A\bar{L}\\
		\bar{M} & -\bar{M} & \bar{M}\bar{L} 
	\end{array}\right],
        \end{aligned}
    \end{equation}
	where
	\begin{align}
		\bar{M}=\epsilon\Sigma^{\frac{1}{2}}U\tilde{M} \ \text{ and }\  \bar{L}=\frac{1}{\epsilon}\tilde{L}U'\Sigma^{-\frac{1}{2}},\label{eq:thm_encoderdecoder}
	\end{align}
	with $U\in\mathbb{R}^{l\times\bar{l}}$ being an isometry (i.e., $U'U=I$) and $\epsilon$ a small positive number, both to be determined.
    Design the parameter matrices of the decoder \eqref{eq:decoder} as
    \begin{equation}\label{eq:design_decoder}
        \begin{aligned}
            \left[\begin{array}{ccc}
		A_{\de} & B\\
		C & D 
	\end{array}\right]=\left[\begin{array}{ccc}
		A & \bar{L}\\
		I & 0 
	\end{array}\right].
        \end{aligned}
    \end{equation}
    Under the design of \eqref{eq:design_encoder} and \eqref{eq:design_decoder}, it can be verified that $\hat{x}(k)=x_e(k)+\bar{L}r(k-1)$,
    which implies  
    \begin{equation*}
        \begin{aligned}
            s(k)&=\bar{M}x_e(k)-\bar{M}x(k)+\bar{M}\bar{L}r(k-1)\\
            &=\bar{M}\hat{x}(k)-\bar{M}x(k)=-\bar{M}\tilde{x}(k).
        \end{aligned}
    \end{equation*}
    Then, the estimation error dynamics is given by
    \begin{align}
        \tilde{x}(k+1)&=x(k+1)-\hat{x}(k+1)\notag\\
        &=Ax(k)+v(k)-A\hat{x}(k)-\bar{L}(s(k)+w(k))\notag\\
        &=(A+\bar{L}\bar{M})\tilde{x}(k)-\bar{L}w(k)+v(k).\label{eq:err_system}
    \end{align}
    Since $A+\bar{L}\bar{M}=A+\tilde{L}\tilde{M}$ is Schur stable, $\tilde{x}(k)$ converges to a stationary process with a finite covariance matrix, thus achieving bounded EEC. 
    Accordingly, the channel input $s(k)=-\bar{M}\tilde{x}(k)$ also converges to a stationary process with a finite covariance matrix. 
    We next show that the subchannel power constraints \eqref{eq:subchannelC} are satisfied by appropriately choosing $U$ and $\epsilon$.
    With the encoder design \eqref{eq:design_encoder}, direct calculation gives
	\begin{align*}
		&\tf{T}_v(z)=-\bar{M}(zI-A-\bar{L}\bar{M})^{-1},\\
		&\tf{T}_w(z)=\bar{M}(zI-A-\bar{L}\bar{M})^{-1}\bar{L}.
	\end{align*}
	Then by \eqref{eq:thm_encoderdecoder}, we obtain
	\begin{align}
		&\frac{1}{2\pi}\int_{-\pi}^{\pi}\tf{T}_w(e^{j\omega})\Sigma\tf{T}_w(e^{j\omega})^*\,\mathrm{d}\omega\notag\\
		&=\Sigma^{\frac{1}{2}}U\tilde{M}\Big(\frac{1}{2\pi}\int_{-\pi}^{\pi}(e^{j\omega}I-A-\tilde{L}\tilde{M})^{-1}\tilde{L}\tilde{L}'\notag\\
		&\qquad\qquad\qquad\times(e^{-j\omega}I-A'-\tilde{M}'\tilde{L}')^{-1}\,\mathrm{d}\omega\Big) \tilde{M}'U'\Sigma^{\frac{1}{2}}\notag\\
        &=\Sigma^{\frac{1}{2}}U\tilde{M}X\tilde{M}'U'\Sigma^{\frac{1}{2}}\notag\\
		&=\Sigma^{\frac{1}{2}}U(\Gamma-I) U'\Sigma^{\frac{1}{2}},	\label{eq:thmTw}
	\end{align}
	where the second equality follows from that $\tilde{L}$ minimizes the $\mathcal{H}_2$ complementary sensitivity for $\left[\begin{array}{c}
    A  \\ \hline \\[-1em] 
     \tilde{M}
\end{array}\right]$ \cite{zhou1996robust},
        and the last equality is due to \eqref{eq:thmMu}.
	Denote the same-order or higher-order infinitesimals of $\epsilon$ by $O(\epsilon)$. Since $\bar{M}=O(\epsilon)$, we have
	\begin{equation}\label{eq:thmTv}
		\frac{1}{2\pi}\int_{-\pi}^{\pi}\tf{T}_v(e^{j\omega})V\tf{T}_v(e^{j\omega})^*\,\mathrm{d}\omega=O(\epsilon^2).
	\end{equation}
	Combining this with \eqref{eq:thmTw}, we obtain
	\begin{equation}\label{eq:thmSii}
		\lim_{k\rightarrow\infty}\mathbb{E}\{s_i(k)^2\}=\sigma_i^2\left\{U(\Gamma-I)U'\right\}_{ii}+O(\epsilon^2).
	\end{equation} 
    From \eqref{eq:cor_opt_P_sigma} and Lemma~\ref{lma:maj1}, it follows that there exists a $\eta=\begin{bmatrix}
		\eta_1&\eta_2&\cdots&\eta_{l}
	\end{bmatrix}'$ such that
	\begin{equation}\label{eq:thmPgesigma}
		\frac{\p_i}{\sigma_i^2}+1>\eta_i,\ i=1,\dots,l,
	\end{equation}
	and $\eta\preccurlyeq\begin{bmatrix}
		\gamma_1&\!\cdots\!&\gamma_{\bar{l}} &1 &\!\cdots\!&1
	\end{bmatrix}'$. 
    This means $\eta-\boldsymbol{1}_l\preccurlyeq\begin{bmatrix}
		\gamma_1-1&\!\cdots\!&\gamma_{\bar{l}}-1 &0 &\!\cdots\!&0
	\end{bmatrix}'$.
    By Lemma~\ref{lma:symmetric_matrix}, an isometry $U$ can be constructed such that
	\begin{equation}\label{eq:thmUgammaU}
		\{U(\Gamma-I)U'\}_{ii}=\eta_i-1,\ i=1,\dots,l.
	\end{equation} 
	When $\epsilon$ is selected to be sufficiently small, integrating \eqref{eq:thmSii}, \eqref{eq:thmPgesigma}, and \eqref{eq:thmUgammaU} yields that the subchannel power constraints \eqref{eq:subchannelC} are satisfied.

    Now consider the case where $A$ is a cyclic matrix with repeated eigenvalues.
     A real matrix $Y$ can still be constructed from \eqref{eq:cor_opt_gamma}. However, if this $Y$ is noncyclic, the previous proof no longer applies since $A$ and $Y$ have distinct Jordan canonical forms.
    To resolve this, we make the following modification.    
    In view of \eqref{eq:cor_opt_P_sigma}, we can choose a constant $0<\delta<1$ sufficiently close to $1$ such that
    \begin{equation}\label{eq:condition_delta2}
    \begin{aligned}
        &\begin{bmatrix}
	\frac{\p_1}{\sigma_1^2}+1&\frac{\p_2}{\sigma_2^2}+1&\cdots&\frac{\p_{l}}{\sigma_{l}^2}+1
					\end{bmatrix}' \prec^{\mathrm{w}}\\
        &\begin{bmatrix}
		\frac{\gamma_1}{\rule{0pt}{1.5ex}\delta^{(n-\bar{l})/\bar{l}}}&\frac{\gamma_2}{\rule{0pt}{1.5ex}\delta^{(n-\bar{l})/\bar{l}}}&\cdots&\frac{\gamma_{\bar{l}}}{\rule{0pt}{1.5ex}\delta^{(n-\bar{l})/\bar{l}}}&1 &1 &\cdots&1
	\end{bmatrix}'.
    \end{aligned}
    \end{equation}
    From \eqref{eq:cor_opt_gamma}, it follows that
    \begin{equation*}
    \begin{aligned}
        &\begin{bmatrix}
		|\lambda_1|^2&|\lambda_2|^2&\cdots&|\lambda_{n}|^2
		\end{bmatrix}' \underset{\mathrm{log}}\preccurlyeq\\
        &\begin{bmatrix}
		\frac{\gamma_1}{\rule{0pt}{1.5ex}\delta^{(n-\bar{l})/\bar{l}}}&\frac{\gamma_2}{\rule{0pt}{1.5ex}\delta^{(n-\bar{l})/\bar{l}}}&\cdots&\frac{\gamma_{\bar{l}}}{\rule{0pt}{1.5ex}\delta^{(n-\bar{l})/\bar{l}}}&\delta &\delta &\cdots&\delta
	\end{bmatrix}'.
    \end{aligned}
    \end{equation*}
    Then according to Lemma~\ref{lma:cyclic_matrix}, we can construct a real matrix $Y$ with eigenvalues $\lambda_1,\dots,\lambda_{n}$ and squared singular values $\frac{\gamma_1}{\rule{0pt}{1.5ex}\delta^{(n-\bar{l})/\bar{l}}},\dots,\frac{\gamma_{\bar{l}}}{\rule{0pt}{1.5ex}\delta^{(n-\bar{l})/\bar{l}}},\delta,\dots,\delta$.
    For any $\tau>0$, there exists a real perturbation matrix ${T}_\tau$ with entries of order $O(\tau)$ such that the perturbed matrix ${Y}_{\tau}=Y+{T}_\tau$ is cyclic and preserves the eigenvalues \cite{KlimenkoSergeichuk2014}. Then ${Y}_{\tau}$ and $A$ are similar, so there exists a real nonsingular matrix ${Z}_\tau$ such that ${Y}_\tau={Z}_\tau^{-1}A{Z}_\tau$.
    Since the singular values of a matrix depend continuously on its entries, a singular value decomposition of $Y_\tau$ yields
	\begin{equation*}\label{eq:svd2}
		{Y}_\tau={P}_\tau\begin{bmatrix}
			\Gamma_{\tau}&0\\
			0&\Delta_\tau
		\end{bmatrix}^\frac{1}{2}{Q}_\tau',
	\end{equation*}
	where ${P}_\tau,{Q}_\tau\in\mathbb{R}^{n\times n}$ are orthogonal matrices, and
    \begin{align}
            \Gamma_{\tau}&=\diag\left\{ \frac{\gamma_1}{\delta^{(n-\bar{l})/\bar{l}}},\dots,\frac{\gamma_{\bar{l}}}{\delta^{(n-\bar{l})/\bar{l}}}\right\}+O(\tau),\label{eq:Gamma_tau}\\
            \Delta_\tau&=\diag\{\delta,\dots,\delta\}+O(\tau)\label{eq:Delta_tau}
    \end{align}
     are diagonal matrices containing the squared singular values.
    Design
    \begin{equation}\label{eq:M_tau}
        {M}_\tau=\begin{bmatrix} 
	    (\Gamma_\tau-I)^{\frac{1}{2}}&0_{\bar{l}\times (n-\bar{l})}
	\end{bmatrix}{Q}'_\tau{Z}^{-1}_\tau.
    \end{equation}
         Using a derivation similar to \eqref{eq:derivation_DARE}, we obtain
        \begin{equation}\label{eq:MADR}
		{X}_\tau=A(I+{X}_\tau{M}_\tau'{M}_\tau)^{-1}{X}_\tau A'+{V}_\tau,
	\end{equation}
        where ${X}_\tau={Z}_\tau{Z}_\tau'>0$ and ${V}_\tau={Z}_\tau{P}_\tau\begin{bmatrix}
            0&0\\
            0&I-\Delta_\tau
        \end{bmatrix}{P}_\tau'{Z}_\tau'$.
    Let ${L}_\tau=-A(I+X_\tau{M}_\tau'{M}_\tau)^{-1}X_\tau{M}_\tau'$.
    Then, from \eqref{eq:MADR}, it follows that
    \begin{equation}\label{eq:LyapunovE}
        {X}_\tau=(A+{L}_\tau M_\tau)X_\tau (A+{L}_\tau M_\tau)'+L_\tau L_\tau'+V_\tau.
    \end{equation}
    By \eqref{eq:Delta_tau} and $\delta<1$, there exists a small $\tau_0 > 0$ such that for all $\tau<\tau_0$, we have $\Delta_{\tau} \leq I$, i.e., $V_\tau \geq 0$.
    Since also $X_\tau>0$, it follows from Lyapunov equation property \cite[Lemma~21.2]{zhou1996robust} that  $|\lambda_i(A+{L}_\tau M_\tau)|\leq1$. 
    Moreover, any eigenvalue $\alpha$ on the unit circle would require its corresponding left eigenvector $\xi$ to satisfy $\xi^*L_\tau=0$, implying $\xi^*A=\xi^*(A+L_\tau M_\tau) =\alpha \xi^*$. This contradicts the assumption $\vert{}\lambda_i(A)\vert{} > 1$. 
    Hence, $A+{L}_\tau M_\tau$ is Schur stable.
    Substituting $\tilde{M}$ and $\tilde{L}$ with ${M}_\tau$ and ${L}_\tau$ in \eqref{eq:thm_encoderdecoder}, and using the same encoder-decoder construction as \eqref{eq:design_encoder} and \eqref{eq:design_decoder}, we conclude that the estimation error dynamics is stable, and thus the EEC is bounded. Furthermore,
    \begin{align}
        &\frac{1}{2\pi}\int_{-\pi}^{\pi}\tf{T}_w(e^{j\omega})\Sigma\tf{T}_w(e^{j\omega})^*\,\mathrm{d}\omega=\Sigma^{\frac{1}{2}}U{M}_\tau S_\tau {M}_\tau'U'\Sigma^{\frac{1}{2}}\notag\\
        &\quad\leq\Sigma^{\frac{1}{2}}U{M}_\tau X_\tau {M}_\tau'U'\Sigma^{\frac{1}{2}}=\Sigma^{\frac{1}{2}}U(\Gamma_\tau-I) U'\Sigma^{\frac{1}{2}},\label{eq:cyclic_w_h2}
    \end{align}
    where $S_\tau$ is the solution to the Lyapunov equation
    \begin{equation*}
        S_\tau=(A+{L}_\tau M_\tau)S_\tau (A+{L}_\tau M_\tau)'+L_\tau L_\tau',
    \end{equation*}
    the inequality is due to \eqref{eq:LyapunovE} and $V_\tau\geq0$, and the last equality follows from \eqref{eq:M_tau}.
    Using \eqref{eq:condition_delta2}, \eqref{eq:Gamma_tau}, and \eqref{eq:cyclic_w_h2}, and repeating the previous analysis, an isometry $U$ can be constructed such that the subchannel power constraints are satisfied, provided that $\tau$ and $\epsilon$ are chosen sufficiently small. This completes the proof.
 \end{myproof}

The sufficiency proof provides a systematic procedure for designing an admissible encoder-decoder pair when the problem is solvable.
 The structure of the resulting pair is illustrated in Fig.~\ref{fig:encoder-decoder-design}, where the parameter matrices $\bar{M}$ and $\bar{L}$ are designed according to the following steps:
\begin{enumerate}
\item Based on the multiplicative majorization inequality \eqref{eq:cor_opt_gamma}, construct matrix $Y$ to relate the source antistable poles $\lambda_1,\dots,\lambda_{n_a}$ to the communication demands $\gamma_1,\dots,\gamma_{\bar{l}}$.
    \item Design $\tilde{M}$ as in \eqref{eq:thmMu}, and design $\tilde{L}$ accordingly from the DARE \eqref{eq:thmX_u} to minimize $\mathcal{H}_2$ complementary sensitivity for $\left[\begin{array}{c}
    A  \\ \hline \\[-1em] 
     \tilde{M}
\end{array}\right]$.
    \item Based on the additive majorization inequality \eqref{eq:cor_opt_P_sigma}, 
    construct an isometry matrix $U$ satisfying \eqref{eq:thmUgammaU} to even out the communication demands across the subchannels so that the subchannel power constraints are satisfied.
    \item Construct $\bar{M}$ and $\bar{L}$ as in \eqref{eq:thm_encoderdecoder}, where $\epsilon$ is a chosen small positive
number. 
\end{enumerate}

\begin{figure}[htbp]
	\centering
	\includegraphics[width=1\linewidth]{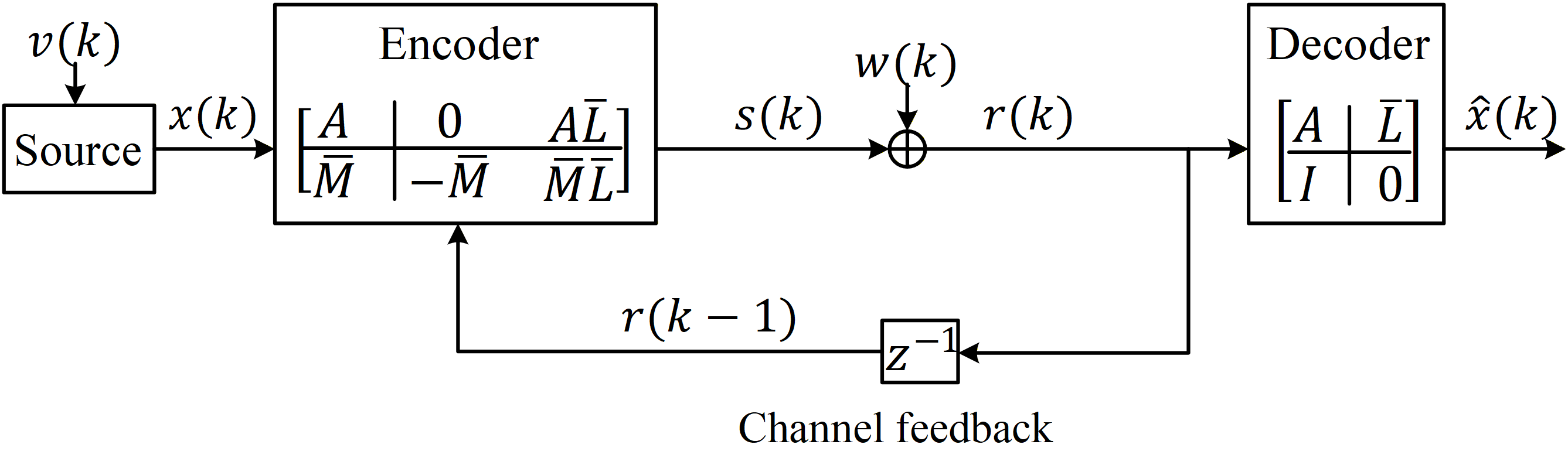}
	\caption{Structure of the proposed encoder-decoder pair.    }\label{fig:encoder-decoder-design}
\end{figure}

\section{Linear Coding With Total Power Constraint}\label{sec:total_power}

\subsection{Preliminary Analysis via Partial Order Programming under Majorization Order}
Now we turn to Problem 2, where the total power constraint \eqref{eq:totalC} is imposed. In this case, one has the degree of freedom to allocate the total power among the subchannels. Our goal is to determine the minimum total power under which an encoder-decoder pair can be designed to achieve bounded EEC subject to this constraint.

Denote  by $\p_i$ the power level allocated to the $i$th subchannel. Then, in view of Theorem~\ref{thm:subchannel}, Problem 2 is solvable if and only if there is an allocation of power level $\p_1,\dots,\p_l$ among the subchannels such that \eqref{eq:cor_opt_pro} is feasible. Thus, finding the minimum $\p$ required for the solvability of Problem 2 amounts to finding an optimal allocation $\p_1,\dots,\p_l$ that renders feasibility of \eqref{eq:cor_opt_pro} while minimizing $p=\sum_{i=1}^{l}\p_i$, namely,
\begin{equation}\label{eq:opt_total_p}
	\begin{aligned}
    &\inf_{\substack{\p_i,i=1,\dots,l\\
				\gamma_i\geq1,i=1,\dots,{\bar{l}}\\
                \text{s.t. } \eqref{eq:cor_opt_P_sigma} \text{ and } \eqref{eq:cor_opt_gamma} }} \sum_{i=1}^{l}\p_i.
	\end{aligned}
\end{equation} 

By Proposition \ref{Plambda}, the feasible set of problem (\ref{eq:opt_total_p}) is convex. As a result, (\ref{eq:opt_total_p}) is a convex optimization problem and can be solved numerically by convex optimization solvers. 

However, establishing convexity alone does not fully settle the problem. A more informative objective is to derive analytic solutions, or establish simple and interpretable rules to allocate power among the subchannels rather than relying solely on numerical computation.

To this end, 
we shall connect problem (\ref{eq:opt_total_p}) to a partial order progamming (POP) under majorization order.
Assume without loss of generality that the noise variances of the subchannels are ordered as
\begin{equation}\label{eq:noise_variance}
	{\sigma_1^2}\le{\sigma_2^2}\le\cdots\leq{\sigma_l^2},
\end{equation}
and the antistable eigenvalues of $A$ are ordered as
\begin{equation}\label{eq:eigenvalues_order}
    |\lambda_1|\geq|\lambda_2|\geq\cdots\geq|\lambda_{\anti}|.
\end{equation}
Denote $\log\sigma=\begin{bmatrix}
    \log\sigma_1&\log\sigma_2&\cdots&\log\sigma_{\bar{l}}
\end{bmatrix}^\prime$.

We formulate the following POP problem:
\begin{align}
&\underset{\ca\in\mathcal{F}}{\text{min}_\preccurlyeq}\;\;\; (\ca+\log\sigma),\label{eq:POP}\\
\text{with }&\mathcal{F}=\left\{   \ca\in\mathbb{R}_+^{\bar{l}}\cap\mathcal{D}^{\bar{l}}\:\middle|\:\log|\lambda|\: \preccurlyeq
            \begin{bmatrix}
                \ca\\
                \boldsymbol{0}
            \end{bmatrix}
    \right\},\nonumber
\end{align}
where the objective function is ordered in majorization order, and the constraint is characterized via a majorization inequality.

The existence and uniqueness of optimal solution in a POP problem is in general highly nontrivial. Interestingly, 
the POP problem (\ref{eq:POP}) does have a unique optimal solution.

\begin{proposition}\label{POPoptimal}
    There exists a unique optimal solution to the POP problem \eqref{eq:POP}.
\end{proposition}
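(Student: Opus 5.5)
The plan is to reduce \eqref{eq:POP} to the base-polyhedron setting of Lemma~\ref{lma:base_poly}, which produces a least element under majorization. Uniqueness should then follow from antisymmetry of majorization on $\mathcal{D}^{\bar l}$ together with the rigidity of the constraint set.

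\emph{Step 1: describe $\mathcal{F}$ linearly.} Write $a_k=\sum_{i=1}^{k}\log|\lambda_i|$ for $k\le \anti$, using the ordering \eqref{eq:eigenvalues_order}, and set $a_0=0$. For $k\ge \anti$ put $a_k=\TE(A)$. The sequence $(a_k)$ is nondecreasing and concave in $k$, since its increments $\log|\lambda_k|$ are nonincreasing and nonnegative. For $\ca\in\mathbb{R}_+^{\bar l}\cap\mathcal{D}^{\bar l}$, the constraint $\log|\lambda|\preccurlyeq[\ca;\boldsymbol{0}]$ is then equivalent to
\[
\sum_{i=1}^{\bar l}\ca_i=\TE(A),\qquad \sum_{i=1}^{k}\ca_i\ge a_k,\quad k=1,\dots,\bar l-1 .
\]
Equivalently, every tail sum satisfies $\sum_{i=k+1}^{\bar l}\ca_i\le \TE(A)-a_k$.

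Now define $f(\mathcal{S})=\TE(A)-a_{\bar l-|\mathcal{S}|}$ for $\mathcal{S}\subseteq\{1,\dots,\bar l\}$, with $f(\emptyset)=0$. Here $\TE(A)=a_{\bar l}$ when $\bar l=\anti$, and the case $\bar l<\anti$ is handled with the same convention. Since $f$ is a concave function of the cardinality, it is submodular. Its base polyhedron $\mathcal{B}_f$ consists of exactly those $\ca$ whose \emph{every} $t$-subset sums to at most $\TE(A)-a_{\bar l-t}$.

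\emph{Step 2: relate $\mathcal{F}$ to $\mathcal{B}_f$.} I expect this to be the main obstacle. For a sorted $\ca$ the binding subsets are the tails, so $\mathcal{B}_f\cap\mathcal{D}^{\bar l}\subseteq\mathcal{F}$. However, $\mathcal{F}$ constrains only sorted vectors, whereas the objective adds $\log\sigma$, which is sorted nondecreasingly. I will therefore work with the permutation-invariant relaxation $\mathcal{F}^{\circ}=\{\ca\ge0 : \log|\lambda|\preccurlyeq[\ca;\boldsymbol 0]\}$, which equals $\bigcup_\pi \pi(\mathcal{F})$.

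The key claim is a rearrangement fact. Since $\log\sigma$ is nondecreasing, for any $\ca\in\mathcal{F}^{\circ}$ the oppositely sorted vector $\ca^\downarrow$ satisfies $\ca^\downarrow+\log\sigma\preccurlyeq \ca+\log\sigma$. This follows from the standard fact that $x^\downarrow+y^\uparrow\preccurlyeq x+y$ \cite{Marshall2011Majorization}. Hence it suffices to minimize over sorted $\ca$, i.e.\ over $\mathcal{F}$.

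Next I will show that for each $\ca\in\mathcal{F}$ there is a $\ca'\in\mathcal{B}_f$ with $\ca'+\log\sigma\preccurlyeq \ca+\log\sigma$. The idea is a transfer argument: if a non-tail subset violates the $\mathcal{B}_f$ inequality, then mass can be moved from a larger to a smaller coordinate of $\ca+\log\sigma$. Such a Robin Hood transfer only decreases the vector in majorization order, and concavity of $(a_k)$ should keep the prefix constraints intact. If this direct argument becomes messy, I will instead try to show that the minimizer produced in Step 3 is automatically sorted, so that it lies in $\mathcal{B}_f\cap\mathcal{D}^{\bar l}\subseteq\mathcal{F}$.

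\emph{Step 3: existence and uniqueness.} Apply Lemma~\ref{lma:base_poly} with $b=\log\sigma$. This gives $x^\star\in\mathcal{B}_f$ with $x^\star+\log\sigma\preccurlyeq x+\log\sigma$ for all $x\in\mathcal{B}_f$. Combined with Step 2, the sorted version of $x^\star$ (sorting again only improves the objective) is feasible in $\mathcal{F}$ and majorization-below every $\ca+\log\sigma$ with $\ca\in\mathcal{F}$. Hence \eqref{eq:POP} has an optimal solution.

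For uniqueness, suppose $\ca^1,\ca^2$ are both optimal. Then $\ca^1+\log\sigma\preccurlyeq\ca^2+\log\sigma$ and the reverse, so the two vectors are rearrangements of each other. I then use strict Schur-convexity of $x\mapsto\|x\|^2$: by convexity of $\mathcal{F}$ (which follows as in Proposition~\ref{Plambda}, since on $\mathcal{D}^{\bar l}$ the constraints are linear), the midpoint is feasible. Its objective is majorized by that of each optimal point, and it has strictly smaller squared norm unless $\ca^1=\ca^2$. This contradicts optimality, since the least element must have the same sorted vector as any other optimum. Hence $\ca^1=\ca^2$.
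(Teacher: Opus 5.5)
Your uniqueness argument (midpoint of two optima, strict convexity of $\|\cdot\|^2$) is correct and matches the paper's. Your rearrangement step $c^\downarrow+\log\sigma\preccurlyeq c+\log\sigma$ is also correct. The existence part, however, fails at Step~1.

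\textbf{Your $f$ is not submodular.} The function $f(\mathcal{S})=\TE(A)-a_{\bar l-|\mathcal{S}|}$ has increments in $t=|\mathcal{S}|$ equal to $\log|\lambda_{\bar l-t+1}|$. These are nondecreasing in $t$, so $f$ is a \emph{convex} function of cardinality (supermodular), and Lemma~\ref{lma:base_poly} is unavailable.

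\textbf{Your reading of its polyhedron is inverted.} For sorted $c$, the binding $t$-subset is the head $\{1,\dots,t\}$, not the tail. So the constraints force the sum of the $k$ \emph{smallest} entries of $c$ to be at least $a_k$; in particular $\min_i c_i\ge\log|\lambda_1|$. Take $\bar l=\anti=2$ and $\log|\lambda|=(2,1)$. Then $f(\{1\})+f(\{2\})=2<3=f(\{1,2\})+f(\emptyset)$, and the set $\{c_1\le 1,\ c_2\le 1,\ c_1+c_2=3\}$ is empty, while $(2,1)\in\mathcal{F}$. So the Step~2 claim, that every $c\in\mathcal{F}$ is dominated by some element of $\mathcal{B}_f$, is false, and no transfer argument can rescue it.

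\textbf{The symmetrization cannot work.} The set $\mathcal{F}^\circ=\bigcup_\pi\pi(\mathcal{F})$ is not convex: in the same example $(2,1)$ and $(1,2)$ lie in it, but $(1.5,1.5)$ does not. So $\mathcal{F}^\circ$ cannot be a base polyhedron. Any convex permutation-invariant superset contains infeasible points such as $(1.5,1.5)$. With $\log\sigma$ constant, that point is strictly majorized by every element of $\mathcal{F}$, so such a relaxation would have the wrong, too small, optimum.

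\textbf{The missing idea is to keep the positional information.} The paper uses $f(\mathcal{S})=\TE(A)-a_{\kappa(\mathcal{S})-1}$ with $\kappa(\mathcal{S})=\min\mathcal{S}$. This $f$ is submodular for three reasons:
\begin{enumerate}
\item[(i)] $\kappa(\mathcal{S}\cup\mathcal{T})=\min\{\kappa(\mathcal{S}),\kappa(\mathcal{T})\}$;
\item[(ii)] $\kappa(\mathcal{S}\cap\mathcal{T})\ge\max\{\kappa(\mathcal{S}),\kappa(\mathcal{T})\}$ when the intersection is nonempty;
\item[(iii)] $a_k$ is nondecreasing with $0\le a_k\le\TE(A)$, which handles $\mathcal{S}\cap\mathcal{T}=\emptyset$.
\end{enumerate}
Its base polyhedron is exactly the set of possibly unsorted $c$ with $\sum_{i\le k}c_i\ge a_k$ and $\sum_i c_i=\TE(A)$; nonnegativity comes for free. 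This set is closed under sorting and satisfies $\mathcal{F}=\mathcal{B}_f\cap\mathcal{D}^{\bar l}$. Applying Lemma~\ref{lma:base_poly} with $b=\log\sigma$, then your rearrangement step, gives an optimal point of \eqref{eq:POP} directly. Your Step~2 becomes unnecessary.
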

\begin{proof}
Let $\mathcal{I}=\{1,2,\dots,\bar{l}\}$. For any nonempty $\mathcal{S}\subseteq\mathcal{I}$, denote by $\kappa(\mathcal{S})$ the smallest number in $\mathcal{S}$. 
Then consider the submodular function 
$f(\mathcal{S})\!=\!\TE(A)\!-\!\sum_{i=1}^{\kappa(\mathcal{S})-1}\log|\lambda_i|$ with $f(\emptyset)\!=\!0$. 
Its associated base polyhedron is given by
    \begin{equation*}
            \mathcal{B}_f\!=\!\left\{\!  
        \ca\!\in\!\mathbb{R}_+^{\bar{l}}\middle|\begin{aligned} 
            &\sum\nolimits_{i=1}^k\!c_i\!\geq\! \sum\nolimits_{i=1}^k\log\!|\lambda_i|,\,k=1,\dots,\bar{l}-1,\!\\
            &\sum\nolimits_{i=1}^{\bar{l}}\!c_i\!=\! \TE(A)
        \end{aligned}
    \right\}\!.
    \end{equation*}
    By Lemma~\ref{lma:base_poly}, there is a $c^\star$ that solves
    \begin{equation}\label{eq:POP_S}
\underset{c\in\mathcal{B}_f}{\text{min}_\preccurlyeq} \quad (c+\log\sigma).
	\end{equation}
    Note that $c^\downarrow\in\mathcal{B}_f$ for any $c\in\mathcal{B}_f$. From \eqref{eq:noise_variance} and \cite[6.A.2]{Marshall2011Majorization}, it follows that $c^{\star\downarrow}+\log\sigma\preccurlyeq c^\star+\log\sigma$, implying that $c^{\star\downarrow}$ is also optimal for \eqref{eq:POP_S}. 
    Since the feasible set $\mathcal{F}$ of the POP \eqref{eq:POP} is precisely $\mathcal{B}_f\cap \mathcal{D}^{\bar{l}}$, 
    $c^{\star\downarrow}$ is feasible and optimal for \eqref{eq:POP}.

    To prove the uniqueness of optimal solution of \eqref{eq:POP}, suppose for contradiction that there exists another optimal solution $c^\circ$. 
    Then $c^\circ+\log\sigma$ is a permutation of $c^{\star\downarrow}+\log\sigma$ but $c^\circ+\log\sigma\neq c^{\star\downarrow}+\log\sigma$.
    Let $\theta\in(0,1)$. By convexity of $\mathcal{F}$, $\theta c^\circ+(1-\theta) c^{\star\downarrow}\in\mathcal{F}$. 
    From \cite[2.B.3]{Marshall2011Majorization}, it follows that
        \begin{equation*}
            \begin{aligned}
                \theta c^\circ\!+\!(1\!-\!\theta) c^{\star\downarrow}+\log\sigma&\!=\!\theta (c^\circ+\log\!\sigma)\!+\!(1\!-\!\theta) (c^{\star\downarrow}+\log\sigma)\\
                &\!\preccurlyeq\! c^{\star\downarrow}+\log\sigma,
            \end{aligned}
        \end{equation*}
        implying that $\theta c^\circ+(1\!-\!\theta) c^{\star\downarrow}$ is also optimal for \eqref{eq:POP}. 
        However, this contradicts the fact that $\theta c^\circ+(1\!-\!\theta) c^{\star\downarrow}+\log\sigma$ is not a permutation of $c^{\star\downarrow}+\log\sigma$. The proof is completed.
        \end{proof}

The next proposition establishes the connection between the POP (\ref{eq:POP}) and the optimization problem  (\ref{eq:opt_total_p}).

\begin{proposition}\label{prop:total_power}
    Let $c^\star$ be the optimal solution to POP \eqref{eq:POP}. Then the optimal value of \eqref{eq:opt_total_p} is given by
    \begin{equation}\label{eq:p_star}   p^\star=\sum_{i=1}^{\bar{l}}\sigma_i^2(e^{2\ca_i^\star}-1),
    \end{equation}
    which can be approached arbitrarily closely as 
    \begin{align*}
    &p_i\rightarrow \sigma_i^2(e^{2\ca_i^\star}-1), i=1,\dots,\bar{l} \text{ and } \\
    &p_i\rightarrow0, i=\bar{l}+1,\dots,l.
    \end{align*}
\end{proposition}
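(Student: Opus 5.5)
Both bounds come from a change of variables. Write $\gamma_i = e^{2c_i}$ for $i=1,\dots,\bar l$ and $\eta_i=\frac{p_i}{\sigma_i^2}+1$. Condition \eqref{eq:cor_opt_gamma} then says exactly that $\log|\lambda|\preccurlyeq [c;\boldsymbol{0}]$ (after halving logs), with $c\ge 0$. By permutation invariance we may assume $c$ is sorted, i.e.\ $c\in\mathcal{F}$. Condition \eqref{eq:cor_opt_P_sigma} says $\eta\prec^{\mathrm{w}}[\gamma;\boldsymbol{1}]$. By Lemma~\ref{lma:maj1} this is equivalent to the existence of $z$ with $\eta>z$ and $z\preccurlyeq[\gamma;\boldsymbol{1}]$.

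\textbf{Lower bound.} Fix a feasible pair $(p,\gamma)$ with $c\in\mathcal{F}$. First, $\sum_i p_i=\sum_i\sigma_i^2(\eta_i-1)>\sum_i\sigma_i^2(z_i-1)$. Second, $g(z)=\sum_i\sigma_i^2 z_i$ has weights sorted nondecreasingly by \eqref{eq:noise_variance}, so by the rearrangement inequality and Lemma~\ref{lma:schurconvex} (Schur-concavity on sorted vectors) we get
\[
g(z)\ge g(z^\downarrow)\ge g([\gamma;\boldsymbol{1}]^\downarrow)=g([\gamma;\boldsymbol{1}]),
\]
since $\gamma$ is sorted and $\ge1$. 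Hence
\[
\sum_i p_i>\sum_{i=1}^{\bar l}\sigma_i^2(e^{2c_i}-1)=\sum_{i=1}^{\bar l}e^{2(c_i+\log\sigma_i)}-\sum_{i=1}^{\bar l}\sigma_i^2 .
\]
The function $u\mapsto\sum_i e^{2u_i}$ is Schur-convex. By Proposition~\ref{POPoptimal}, $c^\star+\log\sigma\preccurlyeq c+\log\sigma$ for every $c\in\mathcal{F}$. Therefore $\sum_i p_i>\sum_{i}\sigma_i^2(e^{2c_i^\star}-1)=p^\star$, so $p^\star$ is a lower bound on the infimum in \eqref{eq:opt_total_p}.

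\textbf{Achievability.} Take $\gamma_i=e^{2c_i^\star}$, which satisfies \eqref{eq:cor_opt_gamma}. Set $p_i=\sigma_i^2(\gamma_i-1)+\epsilon$ for $i\le\bar l$ and $p_i=\epsilon$ for $i>\bar l$. Then $\eta=[\gamma;\boldsymbol{1}]+\epsilon\Sigma^{-1}\boldsymbol{1}>[\gamma;\boldsymbol{1}]$ componentwise, and taking $z=[\gamma;\boldsymbol{1}]$ in Lemma~\ref{lma:maj1} gives \eqref{eq:cor_opt_P_sigma}. The total power is $p^\star+l\epsilon$, which tends to $p^\star$ as $\epsilon\to0$, with each $p_i$ converging to the stated limits.

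\textbf{Main obstacle.} The hard step is the chain $g(z)\ge g([\gamma;\boldsymbol{1}])$. Lemma~\ref{lma:schurconvex} applies only on $\mathcal{D}^l$ with nondecreasing weights, so I must first pass to $z^\downarrow$ via rearrangement. I then need $[\gamma;\boldsymbol{1}]$ already sorted, which uses that $\gamma$ is ordered and $\gamma_i\ge1$. I also have to justify that restricting to sorted $c$ in $\mathcal{F}$ loses nothing. Pairing a sorted $\gamma$ with the smallest $\sigma_i$ is the cheapest pairing, again by rearrangement, so this restriction is harmless.
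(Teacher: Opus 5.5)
Your proposal is correct and follows essentially the same route as the paper. You lower-bound $\sum_i p_i$ by $\sum_{i=1}^{\bar l}\sigma_i^2(\gamma_i-1)$ using \eqref{eq:noise_variance} and Lemma~\ref{lma:schurconvex}, substitute $\gamma_i=e^{2c_i}$, and conclude via Schur-convexity of $\sum_i e^{2u_i}$ together with the optimality of $c^\star$ for the POP \eqref{eq:POP}; the bound is then approached as $p_i\to\sigma_i^2(e^{2c_i^\star}-1)$ and $p_i\to 0$. Your detour through Lemma~\ref{lma:maj1} and the rearrangement step $g(z)\ge g(z^\downarrow)$ is a welcome bit of extra care. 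The paper applies Lemma~\ref{lma:schurconvex} directly to the SNR vector without noting that this vector need not lie in $\mathcal{D}^l$.
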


\begin{proof}
    For any feasible solution $\p_1,\dots,\p_l$ of \eqref{eq:opt_total_p}, there exist $\gamma_i\geq 1$, $i=1,\dots,\bar{l}$ satisfying \eqref{eq:cor_opt_P_sigma} and \eqref{eq:cor_opt_gamma}. Without loss of generality, we assume that $\gamma_1\geq\cdots\geq\gamma_{\bar{l}}$. 
    From \eqref{eq:noise_variance}, \eqref{eq:cor_opt_P_sigma}, and Lemma~\ref{lma:schurconvex}, it follows that 
    	\begin{equation*}
		\sum_{i=1}^{l}\p_i=\sum_{i=1}^{l}\sigma_i^2\left(\frac{\p_i}{\sigma_i^2 }+1\right)-\sum_{i=1}^{l}\sigma_i^2>\sum_{i=1}^{\bar{l}}\sigma_i^2(\gamma_i-1),
	\end{equation*}
    where the lower bound is approached arbitrarily closely as 
    	\begin{equation}\label{eq:lower_bound_solution}
		\begin{aligned}
			\begin{cases}
				p_i\rightarrow \sigma_i^2(\gamma_i-1),&i=1,\dots,\bar{l}, \\
				p_i\rightarrow 0, &i=\bar{l}+1,\dots,l.\
			\end{cases}
		\end{aligned}
	\end{equation}      
    Thus, the optimal value of \eqref{eq:opt_total_p} equals the minimum of 
    \begin{equation*}
	\begin{aligned}
		&\min_{\substack{\gamma_1 \geq \cdots \geq \gamma_{\bar{l}} \geq 1\\
        \text{s.t. }\eqref{eq:cor_opt_gamma}}} \sum_{i=1}^{\bar{l}}\sigma_i^2(\gamma_i-1).
	\end{aligned}
    \end{equation*}
    By applying the change of variables $\gamma_i=e^{2c_i}$ for $i=1,\dots,\bar{l}$, this problem is transformed into
    \begin{equation}\label{eq:opt_scalar_power}
\underset{\ca\in\mathcal{F}}{\text{min}} \ \  \sum_{i=1}^{\bar{l}}\sigma_i^2(e^{2c_i}-1) \text{, i.e., } \underset{\ca\in\mathcal{F}}{\text{min}} \ \  \sum_{i=1}^{\bar{l}}(e^{2(c_i+\log\sigma_i)}-\sigma_i^2).
\end{equation}
    Since $(x_1,\dots,x_{\bar{l}})\mapsto\sum_{i=1}^{\bar{l}}e^{2x_i}$ is Schur convex \cite{Marshall2011Majorization}, $c^\star$ is optimal for \eqref{eq:opt_scalar_power}. By \eqref{eq:lower_bound_solution}, the proof is completed.
\end{proof}

 It is now clear that Problem 2 is solvable if and only if 
the total power constraint $p\!>\!p^\star$.
Moreover, for a given $p\!>\!p^\star$, the power constraint can be allocated across the subchannels such that
\[
\p_i>\sigma_i^2(e^{2\ca_i^\star}-1) \text{ for }i=1,\dots,\bar{l},
\]
and $p_i>\epsilon$ for the remaining $l-\bar{l}$ subchannels, where $\epsilon>0$ is a small number. 
Under such an allocation, condition \eqref{eq:cor_opt_pro} is satisfied by letting $\gamma_i = e^{2\ca_i^\star}$. Consequently, a feasible encoder-decoder pair can  be constructed following the procedure in the sufficiency proof of Theorem~\ref{thm:subchannel}.

\begin{remark}\label{lna}
The above analysis reveals an interesting observation: When $l> n_a$, i.e., the number of available subchannels exceeds that of the antistable poles, the minimum total transmission power can be approached arbitrarily closely by using the $n_a$ subchannels with the lowest noise variances; the use of the remaining $l-n_a$ subchannels provide no further power reduction.
\end{remark}

With Problem 2 translated into solving POP \eqref{eq:POP}, we shall first consider the case where all subchannel noise variances are identical, i.e., $\sigma_1^2 = \sigma_2^2 = \cdots = \sigma_l^2$. For this case, we derive an analytic solution and provide a water-filling interpretation. We then extend the investigation to the general case with unequal noise variances, for which a sequential water-filling algorithm is developed.

\subsection{The Case of Equal Subchannel Noise Variance: Analytic Solution and Water-Filling Interpretation}

In light of Remark \ref{lna}, we assume without loss of generality that $l\leq n_a$, i.e., $\bar{l}=l$, in the sequel.

Under equal noise variances,
the optimal solution to the POP (\ref{eq:POP}) is identical to that of
\[
\underset{\ca\in\mathcal{F}}{\text{min}_\preccurlyeq}\;\;\; \ca,
\]
namely, the least element of the poset $(\mathcal{F},\preccurlyeq)$. 
The existence of this least element follows from Proposition
\ref{POPoptimal}. 
The following theorem gives an analytic expression for this optimal solution.

\begin{theorem}\label{thm:join}
When the subchannel noise variances are equal, the optimal solution to POP \eqref{eq:POP} is given by
\begin{equation}\label{eq:least_element1}
\textstyle
    \!\!\ca^\star=\begin{bmatrix}
        \log\!|\lambda_1|\!&\!\cdots\!&\!\log\!|\lambda_{l-1}|\!&\!0
    \end{bmatrix}'\lor     \frac{\TE(A)}{l}\!\begin{bmatrix}
           1\!&\!1&\!\cdots\!&\!1\end{bmatrix}'\!.
\end{equation}
where $\lor$ denotes the join of two vectors under $\preccurlyeq_{\mathrm{w}}$.
\end{theorem}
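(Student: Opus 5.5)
The plan is to show that the join $u:=a\lor b$ is the least element of the poset $(\mathcal{F},\preccurlyeq)$, where
\[
a=\begin{bmatrix}\log|\lambda_1|&\cdots&\log|\lambda_{l-1}|&0\end{bmatrix}', \qquad b=\tfrac{\TE(A)}{l}\boldsymbol{1}_l .
\]
As noted before the theorem, under equal noise variances the optimal solution of \eqref{eq:POP} is exactly this least element, and it is unique by Proposition~\ref{POPoptimal}. So it suffices to prove two things: (i) $u\in\mathcal{F}$; (ii) $u\preccurlyeq c$ for every $c\in\mathcal{F}$. Both vectors lie in $\mathcal{D}^l$: $a$ does by \eqref{eq:eigenvalues_order} and $|\lambda_{l-1}|>1$, and $b$ trivially. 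Hence the join in the lattice $(\mathcal{D}^l,\preccurlyeq_{\mathrm{w}})$ is well defined.

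First I rewrite $\mathcal{F}$ concretely. Since $\bar l=l\le n_a$ and $\log|\lambda_i|>0$, the relation $\log|\lambda|\preccurlyeq[c;\boldsymbol{0}]$ with $c\in\mathcal{D}^l$ is equivalent to
\[
\sum_{i=1}^k c_i\ge\sum_{i=1}^k\log|\lambda_i| \ \ (k=1,\dots,l-1), \qquad \sum_{i=1}^l c_i=\TE(A).
\]
The inequalities for $k\ge l$ are automatic, because the remaining $\log|\lambda_i|$ are positive and the total equals $\TE(A)$. Thus $c\in\mathcal{F}$ if and only if $c\in\mathcal{D}^l\cap\mathbb{R}^l_+$, $c\succcurlyeq_{\mathrm{w}} a$ on the first $l-1$ partial sums, and $\sum_i c_i=\TE(A)$.

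Step (ii), the lower-bound property, is the easy one. Take $c\in\mathcal{F}$. By the characterization above, the first $l-1$ partial sums of $c$ dominate those of $a$. The $l$-th partial sum of $c$ is $\TE(A)\ge\sum_{i<l}\log|\lambda_i|$, so $c\succcurlyeq_{\mathrm{w}}a$. Because $c$ is nonincreasing with total $\TE(A)$, its averages satisfy $\frac1k\sum_{i\le k}c_i\ge\frac{\TE(A)}{l}$, so $c\succcurlyeq_{\mathrm{w}}b$. Hence $c$ is an upper bound of $\{a,b\}$, and $u\preccurlyeq_{\mathrm{w}}c$ by the definition of join. Once (i) gives $\sum u_i=\TE(A)=\sum c_i$, this upgrades to $u\preccurlyeq c$.

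Step (i) is where I expect the main work. I will use the explicit join construction of Appendix~\ref{appendix:join}. The partial-sum sequence of $u$ is the least concave majorant, on $\{0,1,\dots,l\}$, of
\[
S_0=0,\qquad S_k=\max\Big\{\sum_{i\le k}a_i,\ \tfrac{k\,\TE(A)}{l}\Big\},
\]
and $u$ is the vector of its successive increments. Three facts then need checking.

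\begin{itemize}
\item \emph{Partial sums.} The majorant dominates $S_k$, so the first $l-1$ partial sums of $u$ dominate those of $\log|\lambda|$.
\item \emph{Total.} $S_l=\max\{\sum_{i<l}\log|\lambda_i|,\TE(A)\}=\TE(A)$. All $S_k\le\TE(A)$, because the $a$-sums are at most $\TE(A)$ and $k\TE(A)/l\le\TE(A)$. A concave majorant of a point set bounded by $\TE(A)$, pinned at the endpoint with value $\TE(A)$, therefore equals $\TE(A)$ at $k=l$, so $\sum u_i=\TE(A)$.
\item \emph{Order and nonnegativity.} Concavity gives $u\in\mathcal{D}^l$. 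The majorant is at most $\TE(A)$ at $k=l-1$ (a constant $\TE(A)$ is a concave majorant), so $u_l=\TE(A)-(\text{majorant at }l-1)\ge0$, and hence $u\in\mathbb{R}^l_+$.
\end{itemize}

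Together these give $u\in\mathcal{F}$. Combining (i) and (ii), $u$ is the least element of $(\mathcal{F},\preccurlyeq)$, so $c^\star=u$, which is \eqref{eq:least_element1}. The delicate point is matching the appendix's join formula (concave hull of the pointwise maximum of partial sums) exactly, and confirming that the hull does not push the total above $\TE(A)$. The argument in the third item is the one I would verify most carefully.
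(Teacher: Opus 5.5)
Your proof is correct, and its skeleton matches the paper's: every $c\in\mathcal{F}$ is an upper bound of $\{a,b\}$ in $(\mathcal{D}^l,\preccurlyeq_{\mathrm{w}})$, so it dominates the join, and equal totals then upgrade $\preccurlyeq_{\mathrm{w}}$ to $\preccurlyeq$.

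Where you differ is in showing that the join $u$ has total $\TE(A)$ and lies in $\mathcal{F}$. The paper never opens up the join. It sandwiches the total: $\sum_i u_i\ge\TE(A)$ comes from $b\preccurlyeq_{\mathrm{w}}u$, since $u$ is itself an upper bound, and $\sum_i u_i\le\TE(A)$ comes from $u\preccurlyeq_{\mathrm{w}}c$ for some $c\in\mathcal{F}$. It then reads off feasibility from $a\preccurlyeq_{\mathrm{w}}u$ and the ordering \eqref{eq:eigenvalues_order}. That route is shorter and uses only the least-upper-bound property. However, it tacitly needs $\mathcal{F}\neq\emptyset$, which is available from Proposition~\ref{POPoptimal}. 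It also leaves $u_l\ge0$ unstated, though this follows from $u\preccurlyeq c$, which gives $u_l\ge c_l\ge0$.

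Your route through the least concave majorant of $S_k$ computes the total and the sign of $u_l$ directly. It is therefore self-contained on both points and even exhibits an element of $\mathcal{F}$. The cost is that it relies on the concave-envelope description of the join in Appendix~\ref{appendix:join}, which the paper states with a citation rather than a proof. The one-line justification is this: $z\in\mathcal{D}^l$ exactly when its partial-sum sequence $Z_0=0,Z_1,\dots,Z_l$ is concave, and $z$ is an upper bound exactly when $Z_k\ge S_k$ for all $k$.

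Your argument via the constant majorant $\TE(A)$ is valid. More simply, a least concave majorant on $\{0,\dots,l\}$ always agrees with the data at both endpoints, which gives $\hat S_0=0$ and $\hat S_l=S_l=\TE(A)$ at once.
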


\begin{proof}
    It suffices to show $\ca^\star$ in \eqref{eq:least_element1} is the least element of $(\mathcal{F},\preccurlyeq)$.
    Denote
    \begin{equation*}
        \begin{aligned}
            u\!=\!\begin{bmatrix}
        \log\!|\lambda_1|\!&\!\cdots\!&\!\log\!|\lambda_{l-1}|\!&\!0
    \end{bmatrix}'\!\text{ and }
    v\!=\!{\textstyle \frac{\TE(A)}{l}}\!\begin{bmatrix}
           1\!&\!1&\!\cdots\!&\!1\end{bmatrix}'.\!
        \end{aligned}
    \end{equation*}
    Let $\mathcal{U}$ be the set of upper bounds of $\{u,v\}$ in $(\mathcal{D}^l,\preccurlyeq_{\mathrm{w}})$.
    For any $\ca\in\mathcal{F}$, we have $u\preccurlyeq_{\mathrm{w}}\ca$ and $v\preccurlyeq_{\mathrm{w}}\ca$,
    so $\ca\in\mathcal{U}$. Since $\ca^\star$ is the least element of $\mathcal{U}$, it follows that 
    \begin{equation}\label{eq:a<=c}
         \ca^\star\preccurlyeq_{\mathrm{w}}\ca,\ \forall \ca\in\mathcal{F},
    \end{equation}
    and hence $\sum_{i=1}^{l}\ca^\star_i\leq\sum_{i=1}^{l}\ca_i=\TE(A)$. On the other hand, since $\ca^\star\in\mathcal{U}$, we have 
$u\preccurlyeq_{\mathrm{w}}\ca^\star$
    and $\TE(A)\leq \sum_{i=1}^{l}\!\ca^\star_i$. 
    Thus $\sum_{i=1}^{l}\!\ca^\star_i=\TE(A)$; together with \eqref{eq:a<=c}, this means $\ca^\star\preccurlyeq\ca$ for all  $\ca\in\mathcal{F}$. Moreover, from \eqref{eq:eigenvalues_order} and $u\preccurlyeq_{\mathrm{w}}\ca^\star$, we conclude $\ca^\star\in\mathcal{F}$. This completes the proof.
\end{proof}

The method for computing the join $\lor$ under $\preccurlyeq_{\mathrm{w}}$ is given in Appendix~\ref{appendix:join}. 

For the computation of $c^\star$ as in (\ref{eq:least_element1}), it admits an intuitive water-filling interpretation, as illustrated in Fig.~\ref{fig:water-filling}.
In particular, 
the vector $\begin{bmatrix}
\log|\lambda_1| & \cdots & \log|\lambda_{l-1}| & 0
\end{bmatrix}^{\prime}$
serves as a base profile, whose components represent the baseline workloads assigned to the respective subchannels.
The remaining workload $\sum_{i=l}^{n_a}\log\vert{}\lambda_i\vert{}$ is then allocated across the subchannels so that the resulting overall workload profile is as balanced as possible. This can be visualized as pouring the remaining workload like water onto the base profile, producing a water level $h$. All components below $h$ are raised to $h$, while those already above $h$ remain unchanged. The resulting profile gives the optimal allocation $c^\star$.

\begin{figure}[htbp]
	\centering
	\includegraphics[width=0.95\linewidth]{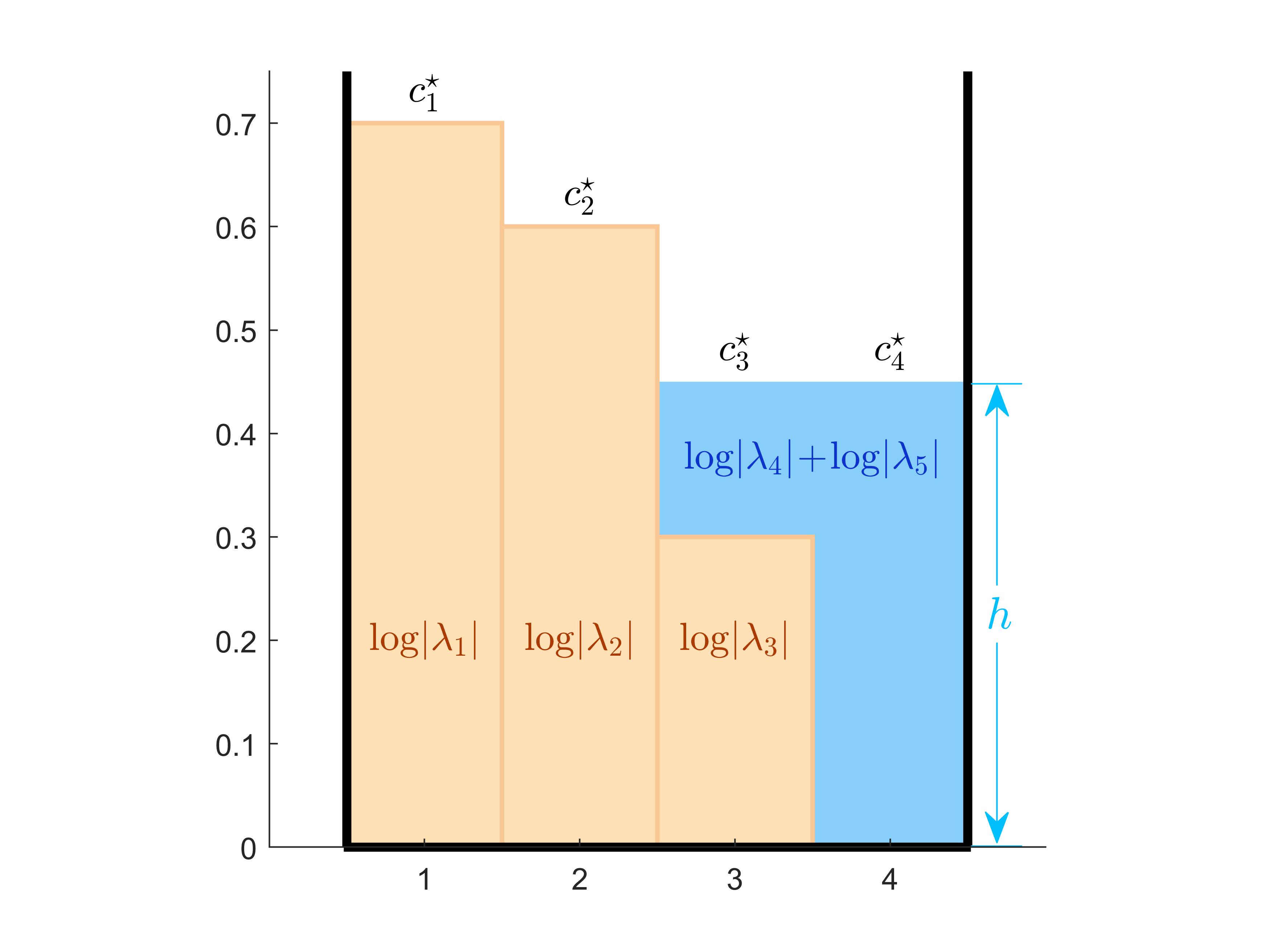}
    \caption{Water-filling interpretation for computation of 
    $\ca^\star$. Consider a source with antistable poles $\lambda=\begin{bmatrix}
 	    e^{0.7}\!\!&\!\!e^{0.6}\!\!&\!\!e^{0.3}\!\!&\!\!e^{0.3}\!\!&\!\!e^{0.3}
 	\end{bmatrix}^\prime$ and a
    parallel Gaussian channel comprising four subchannels with equal noise variances. As illustrated, $0.6$ units of water (i.e., $\log\!|\lambda_4|+\log\!|\lambda_5|$) are poured into a vessel with a base profile given by $\begin{bmatrix}
        0.7\!\!&\!\!0.6\!\!&\!\!0.3\!\!&\!\!0
    \end{bmatrix}^\prime$, yielding the optimal solution  $\ca^\star\!\!=\!\!\begin{bmatrix}
         0.7\!\!&\!\!0.6\!\!&\!\!0.45\!\!&\!\!0.45
     \end{bmatrix}^\prime$.
    }
	\label{fig:water-filling}
\end{figure}

In view of the water-filling procedure, all components of $c^\star$ are positive, indicating that the optimal solution makes use of all subchannels (under the premise that $l\!\leq\! n_a)$. Any strategy using only part of the subchannels is suboptimal, requiring a higher total power. This gives an important message:

\emph{Using more subchannels collectively for transmission of a discrete-time LTI source reduces total power requirement.} 

This contrasts the corresponding understanding in transmitting a continuous-time LTI cyclic source \cite{jin2024remote}, where a single subchannel suffices to achieve the minimum power, and using additional subchannels yields no further savings.

Before proceeding, we give an example comparing the exact minimum total power obtained with the upper bound in \cite{han2024coded}.
\begin{exam}\label{exam:comparison}
Consider an unstable source with dynamic matrix $A=\begin{bmatrix}\lambda&1&0\\0&\lambda&1\\0&0&\lambda\end{bmatrix}$, where $|\lambda|>1$. Consider the scenario of two subchannels with identical noise variances $\sigma_1^2=\sigma_2^2=\sigma^2$.
By Theorem~\ref{thm:join}, the optimal allocation $\ca^\star$ is given by
\begin{equation*}
    \ca^\star=\begin{bmatrix}
        \log|\lambda|\\
        0
    \end{bmatrix}\lor    \begin{bmatrix}
            \frac{3}{2}\log|\lambda|\\
            \frac{3}{2}\log|\lambda|\end{bmatrix}=\begin{bmatrix}
            \frac{3}{2}\log|\lambda|\\
            \frac{3}{2}\log|\lambda|\end{bmatrix},
\end{equation*}
i.e., allocating the total stabilizing workload $3\log|\lambda|$ evenly over the two subchannels. This yields the minimum total power $p^\star=\sigma^2(2|\lambda|^3-2)$. 
In contrast, the sufficient condition in \cite{han2024coded} requires the existence of a partition of the source's antistable poles such that the capacity of each subchannel exceeds the sum of log-magnitudes of its assigned poles.
For this scenario, there are two partitioning strategies: assigning all three poles to a single subchannel, or assigning two poles to one subchannel and the remaining one pole to the other.
Of the two partitions, the latter is more power-efficient,
leading to a total power of $\sigma^2(|\lambda|^4+|\lambda|^2-2)$, which is strictly greater than the minimum total power $p^\star$. Fig.~\ref{fig:comparison} illustrates the gap from such upper bound obtained in \cite{han2024coded} to $p^\star$. One can see that the gap increases as $|\lambda|$ increases. Also, the relative gap to $p^\star$ can be calculated as
\begin{align*}
    \frac{\sigma^2(|\lambda|^4+|\lambda|^2-2)-p^\star}{p^\star}=\frac{|\lambda|^2(|\lambda|-1)}{2(|\lambda|^2+|\lambda|+1)},
\end{align*}
which also increases as $\lambda$ is further away from the unit circle.
\end{exam}

\begin{figure}[htbp]
	\centering
	\includegraphics[width=0.8\linewidth]{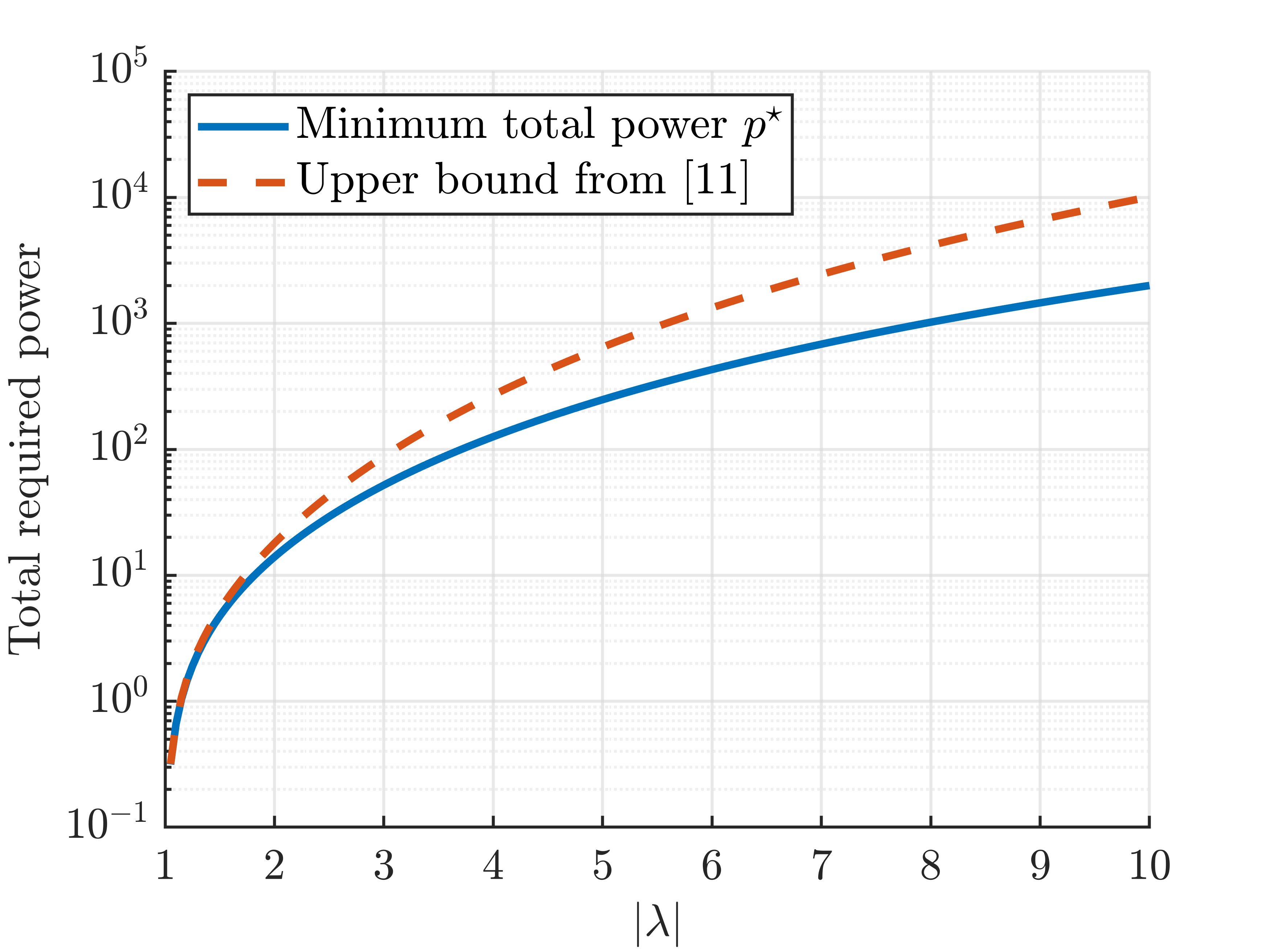}
    \caption{Comparison between the minimum total power $p^\star$ and the upper bound in \cite{han2024coded} for Example~\ref{exam:comparison} with $\sigma^2=1$.
    }
	\label{fig:comparison}
\end{figure}

\subsection{The General Case: Sequential Water Filling}

Now we extend the discussion to the general case where the subchannel noise variances may be unequal. 
Unlike the equal-noise case, the objective function in POP \eqref{eq:POP} can no longer be simplified to vector $c$ alone. Nevertheless, we shall show that the general case can still be addressed through water-filling, but performed in a sequential manner.

\begin{theorem}\label{thm:least}
For POP \eqref{eq:POP} in the general case, the optimal solution can be obtained via the sequential water-filling algorithm presented in Algorithm~\ref{alg:SWF}.
\end{theorem}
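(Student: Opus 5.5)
The plan is to use Proposition~\ref{POPoptimal}, which already guarantees that \eqref{eq:POP} has a unique optimal solution $c^\star$, i.e., a least element of $\{c+\log\sigma : c\in\mathcal{F}\}$ under $\preccurlyeq$. It then suffices to show that the output $\hat c$ of Algorithm~\ref{alg:SWF} has two properties: (i) $\hat c\in\mathcal{F}$, and (ii) $\hat c+\log\sigma\preccurlyeq c+\log\sigma$ for every $c\in\mathcal{F}$. Since all feasible points share the sum $\TE(A)+\sum_{i}\log\sigma_i$, property (ii) reduces to a family of partial-sum inequalities. For every $k$, the sum of the $k$ largest entries of $\hat c+\log\sigma$ must not exceed the corresponding sum for any feasible $c$.

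I would describe each stage of the algorithm as a water-filling on the ``vessel'' with base profile $\log\sigma_i$ over a contiguous block of indices $\{t,\dots,s\}$. The workload poured is fixed by the tight prefix constraints $\sum_{i\le k}c_i\ge\sum_{i\le k}\log|\lambda_i|$, and the monotonicity $c\in\mathcal{D}^{\bar l}$ and nonnegativity $c\ge 0$ are enforced at the same time. Within a stage the water level $h$ produces $c_i+\log\sigma_i=\max\{h,\log\sigma_i+\text{(lower bound on }c_i)\}$. The stage ends at the first index where a prefix constraint becomes active, and the procedure then restarts on the remaining indices with the residual entropy. Feasibility (i) should follow directly from this construction: by design, each stage respects the prefix inequalities with equality at block ends, the total sum equals $\TE(A)$, and the ordering of $\sigma_i$ in \eqref{eq:noise_variance} together with the nonincreasing order of water levels across stages keeps $\hat c$ nonincreasing and nonnegative.

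For optimality (ii), I would argue by induction over the stages. The claim is that the first stage attains the smallest possible value of the largest entries of $c+\log\sigma$. The reason is that its block $\{1,\dots,s_1\}$ must carry at least $\sum_{i\le s_1}\log|\lambda_i|$ of workload, and, by Schur-convexity arguments (the water-filling profile is the majorization-least distribution of a fixed amount over a fixed base profile, analogous to Theorem~\ref{thm:join} restricted to the block), any feasible $c$ has top-$k$ partial sums over this block at least those of $\hat c$. One then fixes the first block and applies the same reasoning to the residual problem, which is again of the form \eqref{eq:POP} with fewer indices and the remaining poles. An alternative route to the same conclusion is the base-polyhedron structure from the proof of Proposition~\ref{POPoptimal}. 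Lemma~\ref{lma:base_poly}'s minimizer over $\mathcal{B}_f$ is known to be computable by a decomposition algorithm: repeatedly find the maximal tight set and level out on it. The task would then be to verify that the algorithm's stages coincide with this decomposition for the submodular function $f$ defined there.

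The main obstacle will be the interaction between blocks. I must show that making the first block optimal does not force a worse partial sum later, because the top-$k$ entries of $\hat c+\log\sigma$ may mix indices from different blocks. I would handle this by proving that the water levels are nonincreasing from one stage to the next. With that property the top-$k$ entries are drawn from the earliest blocks first, and the blockwise majorization inequalities concatenate into the global one.
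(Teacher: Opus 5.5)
Your overall plan (show the output lies in $\mathcal F$ and is majorized by every $c+\log\sigma$, $c\in\mathcal F$) matches the paper's. However, the argument you sketch has two gaps. First, it does not analyze Algorithm~\ref{alg:SWF}. In that algorithm, iteration $k$ adds exactly one subchannel and one workload $\eta_k$ (a single $\log|\lambda_k|$ for $k<l$). It pours this over the base $b^{(k)}=\begin{bmatrix}y^{(k-1)}\\ \log\sigma_k\end{bmatrix}$, i.e.\ over \emph{all} of the first $k$ subchannels, and may raise levels fixed in earlier iterations. The algorithm has no contiguous blocks $\{t,\dots,s\}$, no detection of tight prefix constraints, and no restart with residual entropy. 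Its iterations are therefore not the blocks of the final solution.

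To prove the theorem along your route, you would have to show separately that your block/tight-set procedure (or Fujishige's decomposition on $\mathcal B_f$) returns the same vector as Algorithm~\ref{alg:SWF}. You explicitly leave this as ``the task'', and uniqueness from Proposition~\ref{POPoptimal} does not supply it. The block procedure itself is also underspecified. You do not say how the end index of a stage is found before the water level is known, or what the ``lower bound on $c_i$'' is. As a result, feasibility (i) is asserted rather than shown.

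Second, your way of handling interaction between blocks fails. Nonincreasing water levels do not imply that the largest entries of $\hat c+\log\sigma$ come from the earliest blocks. Entries with $\hat c_i=0$ sit at $\log\sigma_i$, above the water, and can be arbitrarily large. Take $\log\sigma=\begin{bmatrix}0&0&5\end{bmatrix}'$ and $\log|\lambda|=\begin{bmatrix}1&0.1&0.1\end{bmatrix}'$ with $l=n_a=3$. The optimum (and the output of Algorithm~\ref{alg:SWF}) is $\hat c=\begin{bmatrix}1&0.2&0\end{bmatrix}'$. The prefix $k=1$ is tight, so your stages are $\{1\}$ and $\{2,3\}$. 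Yet $\hat c+\log\sigma=\begin{bmatrix}1&0.2&5\end{bmatrix}'$ has its largest entry in the last block. So your claim that the first stage minimizes the largest entries is false here, and the blockwise inequalities do not concatenate as you claim.

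The paper needs no blocks. It proves by induction on $k$ that $y^{(k)}\preccurlyeq_{\mathrm{w}} z^{(k)}:=\begin{bmatrix}c_1+\log\sigma_1&\cdots&c_k+\log\sigma_k\end{bmatrix}'$ for every $c\in\mathcal F$:
\begin{itemize}
\item appending $\log\sigma_k\le c_k+\log\sigma_k$ preserves $\preccurlyeq_{\mathrm{w}}$;
\item the $k$th prefix constraint of $\mathcal F$ gives $\sum_{i<k}y^{(k-1)}_i+\log\sigma_k+\eta_k\le\sum_{i\le k}(c_i+\log\sigma_i)$;
\item Lemma~\ref{lma:w_f} then shows that water-filling the amount $\eta_k$ keeps the profile below $z^{(k)}$ in $\preccurlyeq_{\mathrm{w}}$.
\end{itemize}
Equal totals at $k=l$ upgrade this to $\preccurlyeq$. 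Feasibility comes from the same recursion: $\sum_{i\le k}d^{(k)}_i=\sum_{i\le k}\eta_i$ and $d^{(l)}_i\ge d^{(k)}_i$ for $i\le k$. This one-step lemma applied along the algorithm's actual iterations is the missing ingredient. It makes the inter-block interaction you identified as the main obstacle a non-issue.
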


\begin{algorithm}[htbp]
\SetCommentSty{itshape} 
    \caption{Sequential Water-Filling Algorithm} \label{alg:SWF}
    \KwIn{$\eta=\begin{bmatrix}
    \log|\lambda_1|\!&\!\cdots\!&\!\log|\lambda_{l-1}|\!&\!\sum_{i=l}^{\anti}\log|\lambda_i|
    \end{bmatrix}^\prime$
    $\qquad\quad\log\sigma=\begin{bmatrix}
        \log\sigma_1\!&\!\log\sigma_2 \!&\!\cdots \!&\! \log\sigma_l
    \end{bmatrix}^\prime$}
    \KwOut{$c^\star$}
    Initialize: $y^{(0)}\gets\begin{bmatrix}\ \end{bmatrix}$\;
    \For{$k\gets 1$ \KwTo $l$}{
    \tcp{\small Perform $k$th water-filling iteration}
        Form the $k$th-stage base profile: $b^{(k)}\gets\begin{bmatrix}
            y^{(k-1)}\\
            \log\sigma_k
        \end{bmatrix}$\;\vspace{0.7ex}
        Find water level $h_k$ by $\sum_{i=1}^{k}\!\max\{0,h_k-b_i^{(k)}\}\!=\!\eta_k$;\!\!\!\!\\ \vspace{0.7ex}
        Update: $y^{(k)}\gets \begin{bmatrix}
            \max\{b_1^{(k)},h_k\}\\
            \vdots\\
            \max\{b_k^{(k)},h_k\}\end{bmatrix}$\;
        }
     \Return{$y^{(l)}-\log\sigma$}\;
\end{algorithm}

Before proving the above theorem, we first provide an intuitive interpretation on the sequential water-filling algorithm, as illustrated in Fig.~\ref{fig:calculation_swf}.
 The components of the vector $\eta$ can be viewed as stabilization workloads to be distributed across the subchannels. Specifically, the workloads corresponding to the largest $l-1$ antistable poles (in terms of magnitudes) come one by one, whereas the workloads corresponding to the remaining antistable poles come in aggregate. Then the POP \eqref{eq:POP} seeks to distribute these workloads over the unequal noise floors so that the resulting effective levels
\[
y_i=c_i+\log \sigma_i
\]
are as balanced as possible.
This is carried out in a sequential manner. At the $k$th step, the first $k-1$ subchannels have already formed an effective level profile $y^{(k-1)}$, 
and the $k$th subchannel is added with noise floor $\log\sigma_k$. 
The workload $\eta_k$ is then poured over a combined base profile
\[
b^{(k)}=
\begin{bmatrix}
y^{(k-1)}\\
\log\sigma_k
\end{bmatrix}.
\]
All components below the resulting water level are raised to this level, 
while those already above it remain unchanged. 
The updated effective level profile $y^{(k)}$ is then carried to the next step. 
After $l$ steps, the optimal allocation is obtained as
\[
c^\star = y^{(l)}-\log\sigma.
\]

\begin{remark} 
When specialized to the previous case of equal noise levels, the first $l-1$ water-filling steps become trivial, and the final step reduces to the procedure depicted in Fig.~\ref{fig:water-filling}.
\end{remark}

\begin{figure*}[htbp]
    \centering
    \includegraphics[width=0.95\linewidth]{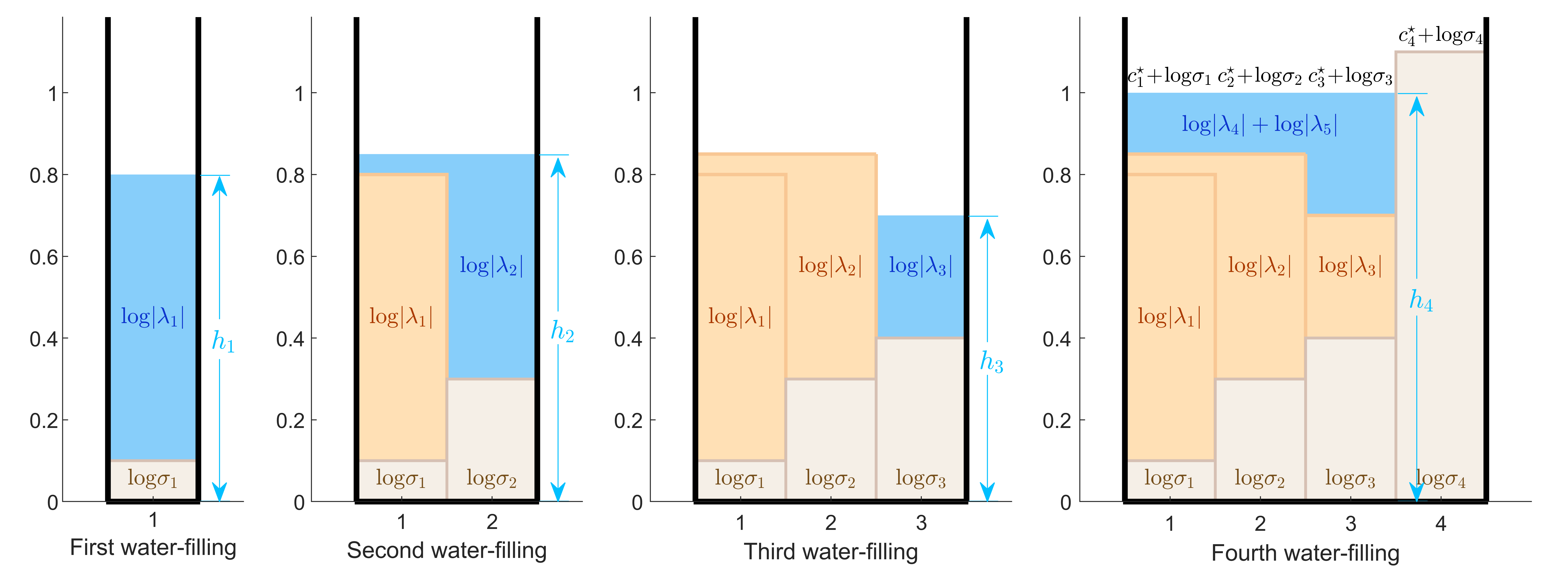}
    \caption{Computation of the optimal solution $\ca^\star$ via the sequential water-filling algorithm. Consider four subchannels with noise parameters $\log\sigma=\begin{bmatrix}
        0.1\!&\!0.3\!&\!0.4\!&\!1.1
    \end{bmatrix}^\prime$, and a source with five antistable poles characterized by $\log|\lambda|=\begin{bmatrix}
        0.7\!&\!0.6\!&\!0.3\!&\!0.3\!&\!0.3
    \end{bmatrix}^\prime$. 
    The figure depicts the four water-filling iterations performed sequentially, yielding $\ca^\star=\begin{bmatrix}
        0.9\!&\!0.7\!&\!0.6\!&\!0
    \end{bmatrix}^\prime$.}
    \label{fig:calculation_swf}
\end{figure*}

\begin{myproof}[Theorem~\ref{thm:least}]
    Let $d^{(k)}=y^{(k)}-\begin{bmatrix}
        \log\sigma_1\!&\!\cdots\!&\!\log\sigma_{k}
    \end{bmatrix}^\prime$ for $k=1,\dots,l$. 
    Then $c^\star=d^{(l)}$. We first prove $c^\star\in\mathbb{R}_+^l\cap\mathcal{D}^l$ via induction.
    Clearly, $d^{(1)}=\log|\lambda_1|\in\mathbb{R}_{+}$.
    Suppose that $d^{(k-1)}\in\mathbb{R}_+^{k-1}\cap\mathcal{D}^{k-1}$.
    According to the update rule,
    \begin{align*}
       y_i^{(k)}=\begin{cases}
            \max\{y^{(k-1)}_i,h_k\},&i=1,\dots,k-1,\\
            \max\{\log\sigma_k,h_k\}, &i=k.
        \end{cases}
    \end{align*}
    It then follows that
        \begin{align}\label{eq:d^k}
       \!\!\! d^{(k)}_i=\begin{cases}
            \max\{d^{(k-1)}_i,h_k-\log\sigma_i\},&i=1,\dots,k-1,\\
            \max\{0,h_k-\log\sigma_k\}, &i=k.
        \end{cases}
    \end{align}
    By $d^{(k-1)}\in\mathbb{R}_+^{k-1}$, we have $d^{(k)}\in\mathbb{R}_+^{k}$. By $d^{(k-1)}\in\mathcal{D}^{k-1}$ and \eqref{eq:noise_variance}, we obtain $d^{(k)}\in\mathcal{D}^{k}$, completing the induction. We next prove $c^\star\in\mathcal{F}$. 
    Recall that the water level $h_k$ satisfies
        \begin{equation*}
         \sum_{i=1}^{k-1}\max\{0,h_k-y_i^{(k-1)}\}+\max\{0,h_k-\log\sigma_k\}=\eta_k.
        \end{equation*}
    Using the identity $\max\{\alpha,\beta\} = \alpha + \max\{0, \beta-\alpha\}$, we have
    \begin{equation*}
        \begin{aligned}
        \sum_{i=1}^ky^{(k)}_i&=\sum_{i=1}^{k-1}\max\{y_i^{(k-1)},h_k\}+\max\{\log\sigma_k,h_k\}\\
        &=\sum_{i=1}^{k-1}y_i^{(k-1)}+\log\sigma_k+\eta_k.
        \end{aligned}
    \end{equation*}
    Applying this relation recursively yields that for $k=1,\dots,l$,  $\sum_{i=1}^k\!y^{(k)}_i\!=\!\sum_{i=1}^{k}\!(\log\sigma_i+\eta_i)$, implying $\sum_{i=1}^k\!d^{(k)}_i\!=\!\sum_{i=1}^{k}\!\eta_i$.
    Moreover, by \eqref{eq:d^k}, we obtain that for $k=1,\dots,l-1$,
      \begin{equation*}
        \sum_{i=1}^kd^{(l)}_i\geq\sum_{i=1}^kd^{(k)}_i=\sum_{i=1}^{k}\eta_i=\sum_{i=1}^{k}\log|\lambda_i|,
    \end{equation*}
    which together with \eqref{eq:eigenvalues_order} implies $\log|\lambda|\: \preccurlyeq
            \begin{bmatrix}
                c^\star\\
                \boldsymbol{0}
            \end{bmatrix}$, i.e., $\ca^\star\in\mathcal{F}$.

    We now establish the optimality of $c^\star$. Let $c\in\mathcal{F}$. Define 
    \begin{equation*}
        z^{(k)}=\begin{bmatrix}
        c_1+\log\sigma_1\!&\!\cdots\!&\!c_{k}+\log\sigma_{k}
    \end{bmatrix}^\prime,\ k=1,\dots,l.
    \end{equation*}
    Then $y^{(1)}=\log\!|\lambda_1|+\log\!\sigma_1\leq z^{(1)}$. Suppose for induction that $y^{(k-1)}\preccurlyeq_{\mathrm{w}}z^{(k-1)}$. 
    Then $\begin{bmatrix}
        y^{(k-1)}\\ \log\sigma_k
    \end{bmatrix}\preccurlyeq_{\mathrm{w}}\begin{bmatrix}
        z^{(k-1)}\\ c_k+\log\sigma_k
    \end{bmatrix}=z^{(k)}$. Also since
    \begin{equation*}
    \begin{aligned}
        \sum_{i=1}^{k-1}y^{(k-1)}_i\!+\!\log\sigma_k\!+\!\eta_k=\sum_{i=1}^{k}(\log\sigma_i\!+\!\eta_i)\leq\sum_{i=1}^{k}(\log\sigma_i\!+\!c_i),   
    \end{aligned}
    \end{equation*}
     applying Lemma~\ref{lma:w_f} (Appendix~\ref{appdix:key_lemma}) yields $y^{(k)}\preccurlyeq_{\mathrm{w}}z^{(k)}$. Thus, 
     \begin{equation*}
c^\star+\log\sigma=y^{(l)}\preccurlyeq_{\mathrm{w}}z^{(l)}= c+\log\sigma \text{ for any } c\in\mathcal{F},
     \end{equation*}
     which together with $\sum_{i=1}^lc^\star_i=\sum_{i=1}^lc_i=\TE(A)$ completes the proof.
\end{myproof}

\section{Numerical Examples}\label{sec:numerical_example}

Consider an unstable source described by \eqref{eq:source}, where
\begin{equation*}
A=\begin{bmatrix}
  e^{1.2} &0 &-1.2 &2.0\\
    0 &e &-1.1& 0.5\\
    0 &0 &e^{0.55} &1.3\\
    0 &0& 0& e^{0.5}  
\end{bmatrix},
 \end{equation*}
and the process noise has covariance matrix
\begin{equation*}
        V=\begin{bmatrix}
        0.91 &   0.16 &  -0.08  &  0.08\\
    0.16  &  0.94  &  0.12 &   0.22\\
   -0.08  &  0.12  &  0.75   &    0\\
    0.08  &  0.22  &     0  &  0.96
\end{bmatrix}.
\end{equation*}
The parallel AWGN channel comprises three subchannels with noise variances $\sigma_1^2=e^{0.1}$, $\sigma_2^2=e$, and $\sigma_3^2=e^{2}$.

\subsection{Individual Subchannel Power Constraints}

We first consider the coding problem under individual power constraints \eqref{eq:subchannelC} with 
    $\p_1=20$, $\p_2=20$, and $\p_3=30$.
The condition \eqref{eq:cor_opt_pro} is then satisfied by letting 
$\gamma_1\!=\!e^{3},\gamma_2\!=\!e^{2},\gamma_3\!=\!e^{1.5}$.
Hence Problem~1 is solvable.
An encoder-decoder design following the procedure in the sufficiency proof of Theorem~\ref{thm:subchannel} is provided below. Specifically, construct a matrix
\begin{equation*}
    Y=\begin{bmatrix}
        3.3201  &  0.9834 &   0.7517 &   1.2064\\
         0  &  2.7183  &  0.8176  &  1.3121\\
         0    &     0  &  1.7333   & 1.8557\\
         0   &      0   &      0  &  1.6487\\
    \end{bmatrix}
\end{equation*}
that is similar to matrix $A$, with singular values given by $\gamma_1^{\frac{1}{2}}$, $\gamma_2^{\frac{1}{2}}$, $\gamma_3^{\frac{1}{2}}$, and $1$.
Then compute $\tilde{M}$ and $\tilde{L}$ using
\eqref{eq:thmMu} and \eqref{eq:thmLu}, respectively.
Further, construct an isometry matrix
\begin{equation*}
    U=\begin{bmatrix}
  0.9504  &  0.3112      &   0\\
   -0.2950  &  0.9010   & 0.3180\\
    0.0989  & -0.3022    &0.9481\\
    \end{bmatrix}
\end{equation*}
according to \eqref{eq:thmUgammaU}. 
 With $\epsilon = 0.01$, \eqref{eq:thm_encoderdecoder} gives
 \begin{equation*}
 \begin{aligned}
          \bar{M}&\!=\!\begin{bmatrix}
         -3.0337  &  1.6238 &   0.4779 &  -3.3346\\
   -2.0356  &  3.2859 &  -3.6627  &-10.1768\\
    1.1257 &   0.4795 &  -7.3331   &-4.9006\\
     \end{bmatrix}\!\times\! 10^{-2}\text{ and}\\
     \bar{L}&\!=\!\begin{bmatrix}
         2.2192  &  2.0603   & 0.2795&    0.1358\\
   -1.1702  & -1.9319  & -0.1931 &  -0.0938\\
    0.7557  &  1.4269 &   0.3184 &   0.1546
     \end{bmatrix}^\prime\!\times\! 10^2.
 \end{aligned}
 \end{equation*}
The encoder and decoder are then implemented as in
Fig.~\ref{fig:encoder-decoder-design}.
 Under this design, the poles of the estimation error system \eqref{eq:err_system} are located at $e^{-1.2}$, $e^{-1}$, $e^{-0.55}$, and $e^{-0.5}$, which are the reciprocals of the eigenvalues of $A$, thereby confirming the boundedness of the EEC. 
To evaluate transmission power, we perform 50,000 Monte Carlo simulations over a time horizon of $T = 100$. The simulated average transmission powers of the subchannels are
 \begin{equation*}
 \begin{aligned}
     \frac{1}{T}\sum\nolimits_{k=0}^{T-1}\mathbb{E}\{s_1(k)^2\}&=19.7543 <\p_1,\\
    \frac{1}{T}\sum\nolimits_{k=0}^{T-1}\mathbb{E}\{s_2(k)^2\}&=19.5328 <\p_2,\\
    \frac{1}{T}\sum\nolimits_{k=0}^{T-1}\mathbb{E}\{s_3(k)^2\}&=28.8552<\p_3,
 \end{aligned}
 \end{equation*}
verifying that the subchannel power constraints are satisfied.

\subsection{Total Channel Power Constraint}

We now consider the coding problem under a total channel power constraint. 
By Proposition~\ref{prop:total_power}, the minimum total power required for Problem~2 is obtained by solving the POP~\eqref{eq:POP}. 
By Theorem~\ref{thm:least}, the sequential water-filling algorithm produces the optimal solution $\ca^\star\!=\!\begin{bmatrix}
    1.55\!&\!   1.10\!&\!    0.60
\end{bmatrix}^\prime$ to the POP problem, yielding the minimum required total power $\p^\star = 62.3851$.

Now we set the power limit in \eqref{eq:totalC} to $\p = 62.5 > \p^\star$. An encoder-decoder pair under this constraint is designed below.
Specifically, construct a matrix
\begin{equation*}
    Y=\begin{bmatrix}
     3.3201 &   0.7449  &  0.6413  &  1.0293\\
         0 &   2.7183  &  1.2888  &  2.0683\\
         0    &     0  &  1.7333 &  1.8557\\
         0  &       0    &     0  &  1.6487
    \end{bmatrix}
\end{equation*}
that is similar to matrix $A$, with singular values given by $e^{\ca_1^\star}$, $e^{\ca_2^\star}$, $e^{\ca_3^\star}$, and $1$. 
The matrices $\tilde{M}$ and $\tilde{L}$ are then computed from
\eqref{eq:thmMu} and \eqref{eq:thmLu}, respectively.
With $U=I_{3\times 3}$ and $\epsilon = 0.01$,  \eqref{eq:thm_encoderdecoder} gives
 \begin{equation*}
 \begin{aligned}
          \bar{M}&\!=\!\begin{bmatrix}
        -1.8851 &  0 &   2.0809  &  0.8599\\
   -4.3018  &  3.9445 &  -2.4963  &-11.2931\\
         0   & 2.2172 &  -8.3036 &  -6.7746
     \end{bmatrix}\!\times\! 10^{-2}\text{ and}\\
     \bar{L}&\!=\!\begin{bmatrix}
          3.0606 &   4.3741 &   0.7701  &  0.2805\\
   -0.5626  & -1.7848 &  -0.3143 &  -0.1145\\
    0.7865  &  1.6386 &   0.4394  &  0.1600
     \end{bmatrix}^\prime\!\times\! 10^2.
 \end{aligned}
 \end{equation*}
 The encoder and decoder are then implemented as in
Fig.~\ref{fig:encoder-decoder-design}.
Under this design, the poles of the estimation error system \eqref{eq:err_system} are $e^{-1.2}$, $e^{-1}$, $e^{-0.55}$, $e^{-0.5}$ respectively, thereby confirming the boundedness of the EEC. We perform 50,000 Monte Carlo simulations over $T \!=\! 100$ steps, yielding a simulated average total transmission power
\begin{equation*}
    \frac{1}{T}\sum\nolimits_{k=0}^{T-1}\sum\nolimits_{i=1}^{3}\mathbb{E}\{s_i(k)^2\}= 62.4170<\p,
\end{equation*}
which satisfies the total power constraint.

\section{Conclusion}\label{sec:conclusion}
In this paper, we investigated LTI coding and transmitting of a discrete-time LTI vector source over power-constrained parallel Gaussian channels with feedback. Under the individual subchannel power constraints, we established a necessary and sufficient solvability condition in terms of two coupled majorization inequalities, which characterize how the subchannel qualities should be matched with the antistable poles of the source. Under the total channel power constraint, we obtained the minimum total power required for achieving bounded EEC by exploiting partial-order progamming under majorization order. For equal subchannel noise variances, an analytic solution to the minimum required power and optimal power allocation was given. The optimal allocation admits a nice water-filling interpretation; for unequal noise variances, a sequential water-filling algorithm was developed. The corresponding coding design procedures were also provided. 
The current work focuses on cyclic unstable sources. Generalization to general noncyclic sources appears nontrivial and is under current investigation.

\appendix
\subsection{Computation of Meet and Join in $(\mathcal{D}^{\mathit{n}},\preccurlyeq_{\mathrm{w}})$}\label{appendix:join}

For a vector $z\in\mathcal{D}^n$, its cumulative sum curve is defined as the linear interpolation of points $\{(i, \sum_{j=1}^i z_j)\}_{i=0}^n$, denoted by $C_z$.
Since the entries of $z$ are in nonincreasing order, the slopes of the successive line segments of $C_z$ are nonincreasing, and thus $C_z$ is concave.
For two vectors $x,y\in\mathcal{D}^n$, their meet $x\land y$ and join $x\lor y$ can be understood intuitively in terms of their cumulative sum curves, as illustrated in Fig.~\ref{fig:meet_join_example}. 
Specifically, $C_{x\land y}$ is the greatest cumulative sum curve lying below $C_x$ and $C_y$.
Since the pointwise minimum of concave functions remains concave, 
$C_{x\land y}$ can be obtained directly by linearly interpolating the points $\{(i, \min\{\sum_{j=1}^i\! x_j, \sum_{j=1}^i\! y_j\})\}_{i=0}^n$. 
Conversely, $C_{x\lor y}$ is the least cumulative sum curve lying above $C_x$ and $C_y$.
However, the pointwise maximum of concave functions is not necessarily concave.
Therefore, $C_{x\lor y}$ cannot, in general, be obtained simply by interpolating the points $\{(i, \max\{\sum_{j=1}^i\! x_j, \sum_{j=1}^i \!y_j\})\}_{i=0}^n$. 
Instead, it is given by the concave envelope of the resulting interpolated curve.
The corresponding computational procedure is detailed in Algorithm~\ref{alg:join-computation}, adapted from \cite{cicalese2002supermodularity}.

\begin{figure}[htbp]
	\centering
	\includegraphics[width=1\linewidth]{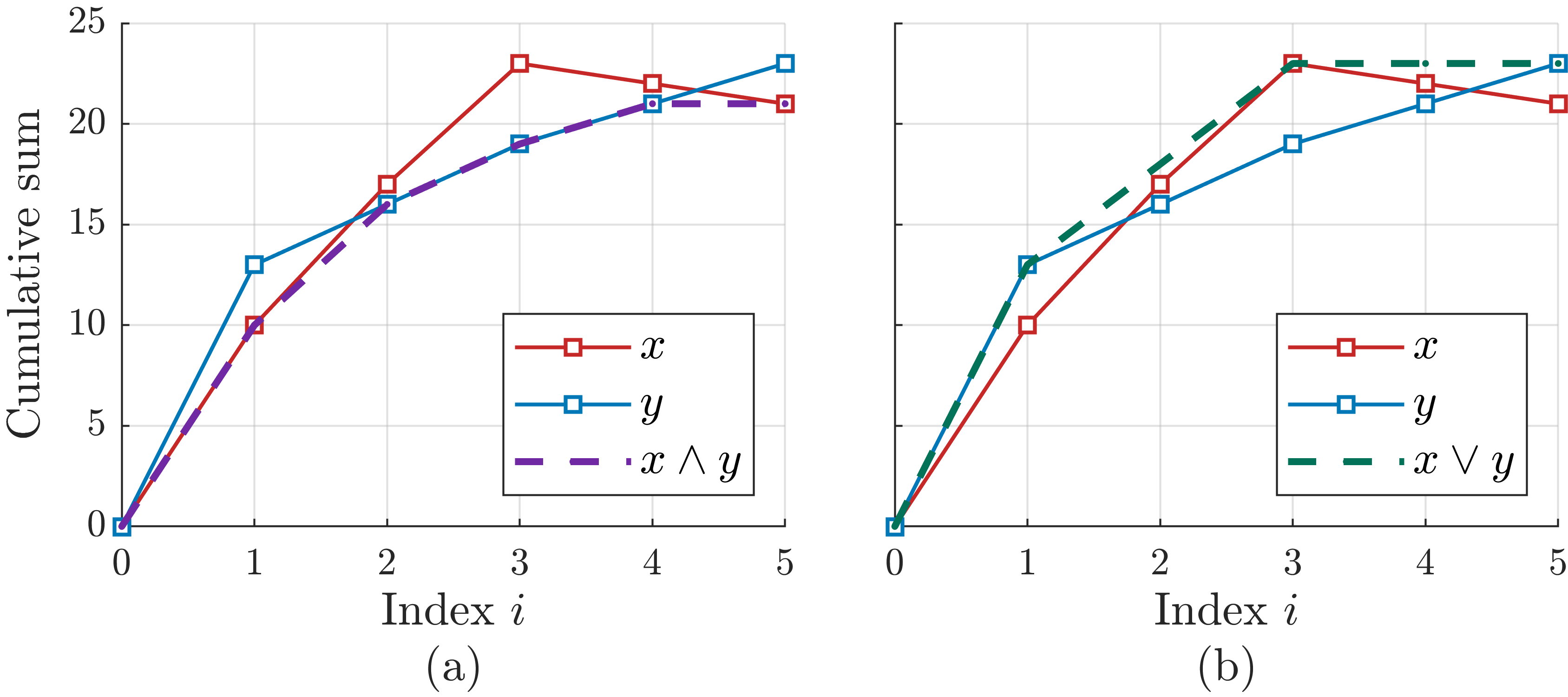}
	\caption{Computation of $x\land y$ and $x\lor y$ for $x=\begin{bmatrix}
	    10\!&\!7\!&\!6\!&\!-1\!&\!-1
	\end{bmatrix}^\prime$ and $y=\begin{bmatrix}
	    13\!&\!3\!&\!3\!&\!2\!&\!2
	\end{bmatrix}^\prime$, yielding $x\land y=\begin{bmatrix}
	    10\!&\!6\!&\!3\!&\!2\!&\!0
	\end{bmatrix}^\prime$ and $x\lor y=\begin{bmatrix}
	    13\!&\!5\!&\!5\!&\!0\!&\!0
	\end{bmatrix}^\prime$.}
    \label{fig:meet_join_example}
\end{figure}

\begin{algorithm}[htbp]\label{algorithm:join}
\SetCommentSty{itshape} 
    \caption{Computation of the join  in $(\mathcal{D}^{n},\preccurlyeq_{\mathrm{w}})$}\label{alg:join-computation}
    \KwIn{\!$x\!=\!\begin{bmatrix}
             x_1\!\!&\!\!x_2\!\!&\!\!\cdots\!\!&\!\!x_n
         \end{bmatrix}^\prime\!\in\!\mathcal{D}^n$, $y\!=\!\begin{bmatrix}
            y_1\!\!&\!\!y_2\!\!&\!\!\cdots\!\!&\!\!y_n
        \end{bmatrix}^\prime\!\in\!\mathcal{D}^n$\!\!\!\!\!\!}
    \KwOut{${x\lor y}\in\mathcal{D}^n$}
     $s_0\gets 0$\;
     \For{$i \gets 1$ \KwTo $n$}{
     $s_i\gets \max\{\sum_{j=1}^ix_j,\sum_{j=1}^iy_j\}$\;
     }
    \tcp{\small Compute the concave envelope of $\{(i,s_i)\}_{i=0}^n$\!\!\!\!\!\!}
     $k\gets n$\;
    \While{$k \geq 1$}{
        $min\_slope \gets \infty$\;
        \For{$i \gets 0$ \KwTo $k-1$}{
            $slope \gets \frac{s_k - s_i}{k - i}$\;
            \If{$slope < min\_slope$}{
                $min\_slope \gets slope$\;
                $j \gets i$\;
            }
        }
        \For{$i \gets j+1$ \KwTo $k$}{
            $z_i \gets min\_slope$\;
        }
        $k \gets j$\;
    }
    \Return{$\begin{bmatrix}
        z_1\!&\!z_2\!&\!\cdots\!&\!z_n
    \end{bmatrix}^\prime$}\;
\end{algorithm}

\subsection{Necessary Conditions for Problem~1}\label{appendix:necessary_con}

\begin{lma}\label{lma:necessary_con}
Consider an encoder-decoder pair that achieves bounded EEC with finite transmission power. Suppose that the encoder realization \eqref{eq:encoder} is stabilizable and detectable.
Then $\begin{bmatrix}
	    A_e & L\\
        M &R
	\end{bmatrix}$ is Schur stable. 
    Moreover, every unstable eigenvalue of $A$ is also an eigenvalue of $A_e$, with algebraic multiplicity no smaller than its algebraic multiplicity in $A$.
\end{lma}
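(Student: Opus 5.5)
We will write the overall closed-loop system with state $\xi=(x,x_e)$ and analyze it in two parts: the feedback loop formed by the encoder state and the channel, and the requirement that the decoder reconstructs $x$ with bounded error.

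\textbf{Setting up the loop.} Substitute $r(k)=s(k)+w(k)$ into \eqref{eq:encoder}. The encoder's internal loop, driven by $x$ and $w$, has state $q(k)=\big(x_e(k),r(k-1)\big)$ and evolves as
\[
q(k+1)=\begin{bmatrix} A_e & L\\ M & R\end{bmatrix} q(k)+\begin{bmatrix}K\\ N\end{bmatrix}x(k)+\begin{bmatrix}0\\ I\end{bmatrix}w(k).
\]
The channel input is $s(k)=\begin{bmatrix}M & R\end{bmatrix}q(k)+Nx(k)$.

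\textbf{Schur stability of the loop matrix.} Let $F=\begin{bmatrix} A_e & L\\ M & R\end{bmatrix}$. Because $w$ is white with $\Sigma>0$, it excites every direction reachable from the $[0;I]$ input. Finite transmission power requires the stationary covariance of $s$ to be finite. The plan is as follows.
\begin{enumerate}[i)]
\item Use stabilizability and detectability of the encoder realization to show that any mode of $F$ on or outside the unit circle is both reachable from $w$ and observable in $s$, or in $r=s+w$. The pair $(F,[0;I])$ inherits stabilizability from $(A_e,L)$ together with the direct feedthrough of $r$. Detectability of $(A_e,M)$ transfers to $(F,[M\ R])$.
\item Conclude that such a mode would make $\mathbb{E}\{s(k)s(k)'\}$ diverge, which contradicts finite power.
\end{enumerate}
One subtlety is that $x$ itself may be unstable and enter through $K,N$. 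To handle this, we show separately that unstable modes of $x$ must be cancelled in $s$ (this feeds the second claim). We then argue about the $w$-driven part alone, which is independent of $v$ and $x(0)$.

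\textbf{Unstable eigenvalues of $A$ appear in $A_e$.} Let $\lambda$ be an unstable eigenvalue of $A$, with left eigenvector $\zeta$ and Jordan chain. Bounded EEC for every initial state forces the decoder output $\hat x$ to track $x$ along the unstable modes. The decoder sees only $r$, so the mode $\lambda^k\zeta$ of $x$ must appear in $r$. Yet $r$ is generated by the stable loop $F$ driven by $x$, so the only way this can happen is as follows.

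The transfer function from $x$ to $s$ is
\[
\tf{G}(z)=\begin{bmatrix}M&R\end{bmatrix}(zI-F)^{-1}\begin{bmatrix}K\\N\end{bmatrix}+N.
\]
Since $s$ has bounded power, the composition $\tf{G}(z)(zI-A)^{-1}$ must be stable. Hence $\tf{G}$ must have transmission zeros at the unstable poles of $A$, with matching multiplicity.

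For the decoder to reconstruct $x$, the map from $x$ to $\hat x$ must equal the identity on the unstable part. This forces $\tf{G}$ to be nonzero in the relevant sense, which is a contradiction unless the zeros come from $F$'s structure. The precise mechanism is that $F-\begin{bmatrix}0\\I\end{bmatrix}\begin{bmatrix}M&R\end{bmatrix}=\begin{bmatrix}A_e&L\\0&0\end{bmatrix}$ is the zero dynamics of the loop. Transmission zeros of $\tf{G}$ are therefore generalized eigenvalues tied to $A_e$.

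An alternative route is cleaner. Consider the open-loop error system, i.e.\ the cascade from $(x(0),v)$ to $\tilde x$. Its unstable modes must be unobservable or cancelled. The internal-model principle then says the encoder state must replicate the unstable dynamics of $A$: the encoder must contain the Jordan structure of $A$'s unstable part, which gives the algebraic-multiplicity claim.

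\textbf{Main obstacle.} The hardest step is making this internal-model argument rigorous for repeated eigenvalues with Jordan blocks, so that multiplicities are counted correctly. We would first handle a simple eigenvalue via a left-eigenvector argument on the zero-dynamics pencil. We would then induct along Jordan chains using the cyclicity guaranteed by Assumption~\ref{asmpt:cyclic}.
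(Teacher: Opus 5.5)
Your argument for the first claim has a hole. The hypothesis is stabilizability of the encoder realization, i.e.\ of $(A_e,[K\ L])$, not of $(A_e,L)$. So the pair $\bigl(F,\begin{bmatrix}0\\ I\end{bmatrix}\bigr)$ need not be stabilizable. Consider an unstable eigenvalue of $F$ with left eigenvector $(\zeta_1,0)$, where $\zeta_1^*A_e=\lambda\zeta_1^*$, $\zeta_1^*L=0$ and $\zeta_1^*K\neq0$. This mode is excited only through $x$, so the $w$-driven part you analyze in isolation cannot see it. The paper avoids this by working with the full cascade state $(x,x_e,r(k-1))$, whose input matrix carries both $v$ and $w$, so stabilizability of $(A_e,[K\ L])$ does transfer. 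Finite power then gives $C_{\mathrm{cl}}A_{\mathrm{cl}}^k\to0$. Any unstable eigenvector $\nu$ of $F$ lifts to the eigenvector $\begin{bmatrix}0\\ \nu\end{bmatrix}$ of $A_{\mathrm{cl}}$, and detectability gives $C_{\mathrm{cl}}\begin{bmatrix}0\\ \nu\end{bmatrix}=\begin{bmatrix}M&R\end{bmatrix}\nu\neq0$. No separate treatment of the unstable modes of $x$ is needed.

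The second claim, which is the substance of the lemma, is not actually proved. Your zero-dynamics remark points in the right direction: the loop can only cancel modes at eigenvalues of $\begin{bmatrix}A_e&L\\ 0&0\end{bmatrix}$. But the transmission-zero discussion stays heuristic. Your assertion that the mode $\lambda^k\zeta$ must appear in $r$ conflicts with $r=s+w$ having bounded covariance. The appeal to an internal-model principle restates the conclusion rather than proving it. Finite power alone can be met with no internal model at all: take $s=Nx$ with $N$ annihilating the unstable invariant subspace of $A$. So the proof must use finite power and bounded EEC jointly. It must also show why the unstable modes cannot instead be cancelled by a zero of the direct path $N+M(zI-A_e)^{-1}K$.

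The paper does this in two steps.
(a) Let $J_u$ be the unstable Jordan part of $A$, with $AX=XJ_u$. Schur stability of $F$ makes the Sylvester equation $FY-YJ_u=-\begin{bmatrix}K\\ N\end{bmatrix}X$ uniquely solvable. This gives an $A_{\mathrm{cl}}$-invariant subspace that finite power forces into $\ker C_{\mathrm{cl}}$. Its $r(k-1)$-component then vanishes, leaving $\begin{bmatrix}A&0\\ K&A_e\end{bmatrix}\begin{bmatrix}X\\ Y_1\end{bmatrix}=\begin{bmatrix}X\\ Y_1\end{bmatrix}J_u$ and $\begin{bmatrix}N&M\end{bmatrix}\begin{bmatrix}X\\ Y_1\end{bmatrix}=0$.
(b) Split $x_e=\xi_1+\xi_2$ into its $x$-driven and $r$-driven parts. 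The signal $y(k)=r(k)-Rr(k-1)-M\xi_2(k)=Nx(k)+M\xi_1(k)+w(k)$ is a causally invertible transform of $r$. So bounded EEC requires the unstable modes of $x$ to stay detectable in the cascade of $x$ with the $K$-controllable part of $A_e$.

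After a controllability decomposition of $(A_e,K)$, (a) and (b) together force the block $Y_{12}$ to have full column rank with $A_{22}Y_{12}=Y_{12}J_u$. This places every unstable eigenvalue of $A$, with full algebraic multiplicity, in the $K$-uncontrollable block $A_{22}$ of $A_e$. The invariant-subspace argument handles Jordan blocks directly, with no cyclicity or induction along Jordan chains. Without an argument of this kind, you cannot attribute the zeros of $\tf{G}$ at the unstable poles of $A$ to the loop rather than to the direct path with the correct multiplicities. That is exactly the obstacle you name but leave unresolved.
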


\begin{proof}
    Consider the cascade of the source and encoder, with a loop closed around the encoder via the parallel AWGN channel and delayed channel feedback.
The dynamics of the overall state $z(k):=\begin{bmatrix}
    x(k)^\prime\!&\!
    x_e(k)^\prime\!&\!
    r(k-1)^\prime
\end{bmatrix}^\prime$ are given by
\begin{equation}\label{eq:overall_system}
    \begin{aligned}
        z(k+1)&=A_{\mathrm{cl}} z(k)+B_{\mathrm{cl}} \begin{bmatrix}
            v(k)\\
            w(k)
        \end{bmatrix},\\
        s(k)&=C_{\mathrm{cl}}z(k),
    \end{aligned}
\end{equation}
where
\begin{equation*}
    A_{\mathrm{cl}}=\begin{bmatrix}
        A&0&0\\
        K&A_e&L\\
        N&M&R
    \end{bmatrix},B_{\mathrm{cl}}=\begin{bmatrix}
        I&0\\
        0&0\\
        0&I
    \end{bmatrix},C_{\mathrm{cl}}=\begin{bmatrix}
        N&M&R
    \end{bmatrix}.
\end{equation*}
Since $\left[\!\!\begin{array}{c|cc}A_{e} \!&K\!&\!L\end{array}\!\!\right]$ is stabilizable, $\left[\!\!\begin{array}{c|c}A_{\mathrm{cl}}\!&\!B_{\mathrm{cl}}\end{array}\!\!\right]$ is stabilizable. To ensure finite transmission power ($\lim_{k\rightarrow\infty}\mathbb{E}\{\|s(k)\|^2\}<\infty$), 
all observable modes of \eqref{eq:overall_system} must be stable, and thus
\begin{equation}\label{eq:modes_requirement}
    \lim_{k\rightarrow\infty}C_{\mathrm{cl}}A_{\mathrm{cl}}^k=0.
\end{equation}
Denote 
\begin{equation*}
    A_{\mathrm{fb}}=\begin{bmatrix}
        A_e&L\\
        M&R
    \end{bmatrix}\text{ and }
    C_{\mathrm{fb}}=\begin{bmatrix}
        M &R
    \end{bmatrix}.
\end{equation*}
Since $\left[\!\begin{array}{c}A_{e}\\ \hline M\end{array}\!\right]$ is detectable, $\left[\!\begin{array}{c}A_{\mathrm{fb}}\\ \hline C_{\mathrm{fb}}\end{array}\!\right]$ is detectable. 
To prove that $A_{\mathrm{fb}}$ is Schur stable, assume for contradiction that $A_{\mathrm{fb}}$ has an unstable eigenvalue $\lambda$ with a corresponding eigenvector $\nu$. Then, we have
\begin{equation*}
    \begin{aligned}
        A_{\mathrm{cl}}\begin{bmatrix}
            0\\
            \nu
        \end{bmatrix}=\begin{bmatrix}
            0\\
            A_{\mathrm{fb}}\nu
        \end{bmatrix}=\lambda \begin{bmatrix}
            0\\
            \nu
        \end{bmatrix}\text{ and } 
        C_{\mathrm{cl}}\begin{bmatrix}
            0\\
            \nu
        \end{bmatrix}=C_{\mathrm{fb}}\nu\neq 0,
    \end{aligned}
\end{equation*}
which contradicts \eqref{eq:modes_requirement}.
Thus $A_{\mathrm{fb}}$ is Schur stable.
Next, let $J_u$ denote the direct sum of all Jordan blocks associated with the unstable eigenvalues of $A$, and $X$ be a full-column-rank matrix satisfying $AX=XJ_u$.
 Since the spectra of $A_{\mathrm{fb}}$ and $J_u$ are disjoint, the following Sylvester equation
 \begin{equation*}
     A_{\mathrm{fb}}Y-YJ_u=-\begin{bmatrix}
         K\\
         N
     \end{bmatrix}X
 \end{equation*}
has a unique solution $Y$. This yields
\begin{equation}\label{eq:Jordan_E}
    A_{\mathrm{cl}}\begin{bmatrix}
        X\\
        Y
    \end{bmatrix}=\begin{bmatrix}
        X\\
        Y
    \end{bmatrix}J_u.
\end{equation}
Then by \eqref{eq:modes_requirement}, it is necessary that
\begin{equation*}\label{eq:unobservable_E}
    C_{\mathrm{cl}}\begin{bmatrix}
        X\\
        Y
    \end{bmatrix}=0.
\end{equation*}
 Partition $Y=\begin{bmatrix}
        Y_1\\
        Y_2
    \end{bmatrix}$ compatibly with $A_{\mathrm{fb}}$. 
    From \eqref{eq:Jordan_E}, we then obtain $Y_2J_u=0$, which implies $Y_2=0$. Consequently,
    \begin{equation}\label{eq:unobservable_E2}
			\begin{bmatrix}
				A & 0\\
				K& A_e
			\end{bmatrix}\begin{bmatrix}
            X\\
            Y_1
        \end{bmatrix}=\begin{bmatrix}
            X\\
            Y_1
        \end{bmatrix}J_u\text{ and } \begin{bmatrix}
				N&M
			\end{bmatrix}\begin{bmatrix}
            X\\
            Y_1
        \end{bmatrix}=0.
    \end{equation}
    
We next bridge this with the requirement of bounded EEC.
By superposition, decompose $x_e(k)=\xi_1(k)+\xi_2(k)$,
where 
\begin{align}
    \xi_1(k+1)&=A_e\xi_1(k)+Kx(k), &\xi_1(0)=0,\label{eq:xi_1}\\
    \xi_2(k+1)&=A_e\xi_2(k)+Lr(k-1),&\xi_2(0)=0.\notag\label{eq:xi_2}
\end{align}
Define 
\begin{equation}\label{eq:define_y}
    y(k)=r(k)-R\,r(k-1)-M\xi_2(k).
\end{equation}
Given $r(-1)=0$ and $\xi_2(0)=0$, the signal $r$ can be reconstructed
causally from $y$ through
\begin{equation*}\label{eq:inverse_filter}
    \begin{bmatrix}
        \xi_2(k+1)\\
        r(k)
    \end{bmatrix}
    =
    A_{\mathrm{fb}}
    \begin{bmatrix}
        \xi_2(k)\\
        r(k-1)
    \end{bmatrix}
    +
    \begin{bmatrix}
        0\\I
    \end{bmatrix}y(k).
\end{equation*}
Thus, an LTI estimator achieving bounded EEC for $x$ can be constructed based on $y$.
From \eqref{eq:define_y} and \eqref{eq:overall_system}, we get
    \begin{equation*}
    \begin{aligned}
        y(k)&=s(k)+w(k)-R\,r(k-1)-M\xi_2(k)\\
        &={N}x(k)+Mx_e(k)+w(k)-M\xi_2(k)\\
        &={N}x(k)+M\xi_1(k)+w(k).
    \end{aligned}
    \end{equation*}
    For \eqref{eq:xi_1}, using the controllability decomposition,
    there exists a nonsingular matrix $P$ such that     \begin{equation}\label{eq:controllable-uncontrollable_E}
    P^{-1}A_eP
    =
    \begin{bmatrix}
        A_{11} & A_{12}\\
        0      & A_{22}
    \end{bmatrix},
    \qquad
    P^{-1}K
    =
    \begin{bmatrix}
        K_1\\0
    \end{bmatrix},
\end{equation}
    where $\left[\!\!\begin{array}{c|c}A_{11} &K_1\end{array}\!\!\right]$ is controllable. Partition $P^{-1}\xi_1\!=\!\begin{bmatrix}
        \xi_{11}(k)\\ \xi_{12}(k)
    \end{bmatrix}$ and $MP\!=\!\begin{bmatrix}
        M_1\!&\!M_2
    \end{bmatrix}$ accordingly.
   Since $\xi_{1}(0)=0$, the uncontrollable part satisfies $\xi_{12}(k) \equiv 0$, yielding the reduced system
\begin{equation*}\label{eq:reduced_estimation_system}
\begin{aligned}
    \begin{bmatrix}
        x(k+1)\\
        \xi_{11}(k+1)
    \end{bmatrix}
    &=
    \underbrace{
    \begin{bmatrix}
        A   & 0\\
        K_1 & A_{11}
    \end{bmatrix}}_{\displaystyle A_{\mathrm{rd}}}
    \begin{bmatrix}
        x(k)\\
        \xi_{11}(k)
    \end{bmatrix}
    +
    \underbrace{
    \begin{bmatrix}
        I\\0
    \end{bmatrix}}_{\displaystyle  B_{\mathrm{rd}}}
    v(k),\\
    y(k)
    &=
    \underbrace{
    \begin{bmatrix}
        N & M_1
    \end{bmatrix}}_{\displaystyle C_{\mathrm{rd}}}
    \begin{bmatrix}
        x(k)\\
        \xi_{11}(k)
    \end{bmatrix}
    +w(k).
\end{aligned}
\end{equation*}
    Since $\left[\!\!\begin{array}{c|c}A_{11} \!&\!K_1\end{array}\!\!\right]$ is controllable, $\left[\!\!\begin{array}{c|c}A_{\mathrm{rd}}\!&\!B_{\mathrm{rd}}\end{array}\!\!\right]$ is controllable.
For this system to admit an estimator for $x$ with bounded EEC, by PBH test, it is necessary that for any $\lambda$ with $|\lambda|\geq 1$ and vector $\begin{bmatrix}
        \alpha_1^\prime&\alpha_2^\prime
    \end{bmatrix}^\prime$ with $\alpha_1\neq0$ satisfying 
\begin{equation}\label{eq:necessary:con}
A_{\mathrm{rd}}\!\begin{bmatrix}
             \alpha_1\\
             \alpha_2
         \end{bmatrix}\!=\!\lambda \begin{bmatrix}
             \alpha_1\\
             \alpha_2
         \end{bmatrix},\quad
       C_{\mathrm{rd}}\begin{bmatrix}
             \alpha_1\\
             \alpha_2
         \end{bmatrix}\neq0.
    \end{equation}
    Now, according to  \eqref{eq:controllable-uncontrollable_E}, we partition $P^{-1}Y_1\!=\!\begin{bmatrix}
        Y_{11}\\
        Y_{12}
    \end{bmatrix}$. Applying the similarity transformation induced by $P$ to \eqref{eq:unobservable_E2} yields
         \begin{align}
        A_{\mathrm{rd}}\begin{bmatrix}
            X\\
            Y_{11}
        \end{bmatrix}+\begin{bmatrix}
            0\\
            A_{12}
        \end{bmatrix}Y_{12}&=\begin{bmatrix}
            X\\
            Y_{11}
        \end{bmatrix}J_u,\label{eq:necessary_con_1}\\
        A_{22}Y_{12}&=Y_{12}J_u,\label{eq:necessary_con_3}\\
        C_{\mathrm{rd}}\begin{bmatrix}
            X\\
            Y_{11}
        \end{bmatrix}+M_2Y_{12}&=0.\label{eq:necessary_con_2}
    \end{align}
        From \eqref{eq:necessary_con_3}, it suffices to show that $Y_{12}$ has full column rank. 
        Assume for contradiction that its null space $\mathcal{N}\{Y_{12}\}\neq\{0\}$. 
        By \eqref{eq:necessary_con_3}, for any $\mu\!\in\!\mathcal{N}\{Y_{12}\}$, we have 
        \[Y_{12}J_u\mu=A_{22}Y_{12}\mu
    =0,\]
        implying that $\mathcal{N}\{Y_{12}\}$ is a $J_u$-invariant subspace. Thus, there exists a nonzero vector $\eta\in \mathcal{N}\{Y_{12}\}$ such that $J_u \eta=\lambda\eta$ for some eigenvalue $\lambda$ of $J_u$. Right-multiplying \eqref{eq:necessary_con_1} and \eqref{eq:necessary_con_2} by $\eta$, and noting $Y_{12}\eta = 0$, we obtain
\begin{equation*}\label{eq:unobservable}
        A_{\mathrm{rd}}\begin{bmatrix}
        X\eta\\
        Y_{11}\eta
    \end{bmatrix}=\lambda \begin{bmatrix}
        X\eta\\
        Y_{11}\eta
    \end{bmatrix}\text{ and }
        C_{\mathrm{rd}}\begin{bmatrix}
        X\eta\\
        Y_{11}\eta
    \end{bmatrix}=0,
    \end{equation*}
    Since $X$ has full column rank, $X\eta\neq0$, thereby contradicting \eqref{eq:necessary:con}. This completes the proof.
\end{proof}

\subsection{Technical Lemmas}\label{appdix:key_lemma}

Let $\left[\begin{array}{c}
    \hua{A}  \\ \hline
     \hua{C} 
\end{array}\right]$ be a detectable discrete-time LTI system, with $\hua{A}\in\mathbb{R}^{n\times n}$ and $\hua{C}\in\mathbb{R}^{q\times n}$. The associated DARE 
\begin{equation}\label{eq:dare}
	X=\hua{A}X(I+\hua{C}'\hua{C}X)^{-1}\hua{A}'
\end{equation}
admits a unique real symmetric semi-stabilizing solution $X$ such that all eigenvalues of $\hua{A}-\hua{A}X\hua{C}'(I+\hua{C}X\hua{C}')^{-1}\hua{C}$ lie in the closed unit disk \cite{de2003riccati}.

The following lemma establishes a log-majorization relation connecting the antistable eigenvalues $\lambda_1,\lambda_2,\dots,\lambda_{m}$ of $\hua{A}$ with the eigenvalues $\mu_1\geq\mu_2\geq\cdots\geq\mu_q$ of $I+\hua{C}X\hua{C}'$.

\begin{lma}\label{lma:major_inequality}
     There holds        \begin{align}
			\!\!\!&\begin{bmatrix}
				|\lambda_1|^2&|\lambda_2|^2&\cdots&|\lambda_{m}|^2
			\end{bmatrix}' \underset{\mathrm{log}}\preccurlyeq \notag\\
            \!\!\!&\qquad\qquad\quad\begin{bmatrix}
				\mu_1\!&\!\mu_2\!&\!\cdots\!&\!\mu_{\min\{q,m\}} \!&\!1 \!&\!\cdots\!&\!1
			\end{bmatrix}'.\label{eq:lma_majorization}
	\end{align}
\end{lma}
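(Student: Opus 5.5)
We reduce to the antistable part of $\hua{A}$ and then read off the log-majorization from a product identity combined with Weyl's majorant theorem.

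\emph{Reduction to the antistable part.} Choose a real similarity that block-triangularizes $\hua{A}$ as $\mathrm{diag}$-like $\begin{bmatrix}\hua{A}_s&\ast\\0&\hua{A}_u\end{bmatrix}$, where $\hua{A}_u\in\mathbb{R}^{m\times m}$ carries exactly the antistable eigenvalues $\lambda_1,\dots,\lambda_m$. Then the semi-stabilizing solution $X$ of \eqref{eq:dare} is supported on the antistable block: $X=\begin{bmatrix}0&0\\0&X_u\end{bmatrix}$. Here $X_u>0$ is the stabilizing solution of the reduced DARE $X_u=\hua{A}_uX_u(I+\hua{C}_u'\hua{C}_uX_u)^{-1}\hua{A}_u'$, and $\hua{C}_u$ is the corresponding column block of $\hua{C}$. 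Moreover $\hua{C}X\hua{C}'=\hua{C}_uX_u\hua{C}_u'$, so the $\mu_i$ are unchanged. This is the standard structure of the minimal-rank semi-stabilizing solution. Detectability makes $(\hua{A}_u,\hua{C}_u)$ observable on the antistable modes, which gives $X_u>0$. Since $X_u$ is invertible, the reduced DARE can be rewritten as
\begin{equation*}
X_u^{-1}=\hua{A}_u'^{-1}\big(X_u^{-1}+\hua{C}_u'\hua{C}_u\big)\hua{A}_u^{-1}.
\end{equation*}

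\emph{Symmetrization and determinant identity.} Set $W=X_u^{1/2}$ and $G=W^{-1}\hua{A}_uW$, which has eigenvalues $\lambda_1,\dots,\lambda_m$. The DARE becomes
\begin{equation*}
G'G=I+W\hua{C}_u'\hua{C}_uW.
\end{equation*}
The nonzero eigenvalues of $W\hua{C}_u'\hua{C}_uW$ coincide with those of $\hua{C}_uX_u\hua{C}_u'=\hua{C}X\hua{C}'$. Hence the squared singular values of $G$ are $\mu_1,\dots,\mu_{\min\{q,m\}},1,\dots,1$, where the $\mu_i$ are understood as eigenvalues of $I+\hua{C}X\hua{C}'$ as in the statement; any extra eigenvalues equal to $1$ beyond $\min\{q,m\}$ are harmless. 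This is exactly the identity exploited in \eqref{eq:derivation_DARE}, run in reverse.

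\emph{Conclusion via Weyl.} By Lemma~\ref{lma:cyclic_matrix}, in its necessity direction (Weyl's inequalities), the eigenvalue moduli of $G$ are log-majorized by its singular values. Squaring both sides gives \eqref{eq:lma_majorization}. The equality of total products is consistent with $\det(G'G)=\prod|\lambda_i|^2$.

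\emph{Main obstacle.} The delicate step is the reduction: showing rigorously that the semi-stabilizing $X$ vanishes on the stable part and is positive definite on the antistable part. Eigenvalues of $\hua{A}$ on the unit circle also need care, since they are not antistable yet could contribute. I would first invoke the known characterization in \cite{de2003riccati}, namely that the semi-stabilizing solution of a filtering-type DARE with zero process noise is the maximal-rank-deficient one supported on the unstable invariant subspace. For unit-circle modes, I would treat them as part of the ``stable'' block, where $X$ restricted to them is zero. If needed, I would use a perturbation/limit argument: apply the formula to $\hua{A}$ scaled by $(1+\epsilon)^{-1}$ on those modes and use continuity of the semi-stabilizing solution.
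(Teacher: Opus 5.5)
Your skeleton matches the paper's: restrict to the antistable part, factor $X_u$, read off $G'G=I+W\hua{C}_u'\hua{C}_uW$ from the DARE, and conclude with Lemma~\ref{lma:cyclic_matrix}. Those last steps are fine. The gap is in the reduction, and it comes from the block ordering you chose. In $\hua{A}=\begin{bmatrix}\hua{A}_s&\hua{A}_{12}\\0&\hua{A}_u\end{bmatrix}$, the coordinate block carrying $\hua{A}_u$ is $\hua{A}'$-invariant but not $\hua{A}$-invariant. The range of the semi-stabilizing solution, however, is the $\hua{A}$-invariant antistable subspace.

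Concretely, substitute $X=\mathrm{diag}(0,X_u)$ into \eqref{eq:dare}. The right-hand side has $(1,1)$ block $\hua{A}_{12}X_u(I+\hua{C}_u'\hua{C}_uX_u)^{-1}\hua{A}_{12}'$, which is nonzero unless $\hua{A}_{12}=0$. Likewise, detectability of $(\hua{A},\hua{C})$ does not give observability of $(\hua{A}_u,\hua{C}_u)$ in this ordering. An antistable eigenvector of $\hua{A}$ has a nonzero component in the $\hua{A}_s$ block through $\hua{A}_{12}$.

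For example, take $\hua{A}=\begin{bmatrix}0&1\\0&2\end{bmatrix}$ and $\hua{C}=\begin{bmatrix}1&0\end{bmatrix}$. This pair is detectable and already in your form, with $\hua{C}_u=0$.
\begin{itemize}
\item Your reduced DARE is $x=4x$, whose only solution is $x=0$, so no $X_u>0$ exists. The resulting $X=0$ is not semi-stabilizing.
\item Your identity $\hua{C}X\hua{C}'=\hua{C}_uX_u\hua{C}_u'$ would give $\mu_1=1<|\lambda_1|^2=4$.
\item The true semi-stabilizing solution is $X=3\begin{bmatrix}1&2\\2&4\end{bmatrix}$, which gives $\mu_1=4$.
\end{itemize}

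The repair is what the paper does. Because the stable and antistable spectra are disjoint, a real similarity brings $\hua{A}$ to block-diagonal form $\mathrm{diag}(\hua{A}_s,\hua{A}_u)$; the coupling block is removed by solving a Sylvester equation. Alternatively, put $\hua{A}_u$ in the upper-left corner so its coordinate block is $\hua{A}$-invariant. Then:
\begin{itemize}
\item The PBH test gives observability of $(\hua{A}_u,\hua{C}_u)$.
\item $X=\mathrm{diag}(0,X_u)$ (respectively $\mathrm{diag}(X_u,0)$) solves \eqref{eq:dare}.
\item The closed-loop matrix is block triangular. Its spectrum consists of that of $\hua{A}_s$ together with that of the stabilized antistable loop, so this $X$ is semi-stabilizing and, by uniqueness, is the solution in the statement.
\end{itemize}
This also settles your unit-circle concern without any perturbation or limit argument. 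Such modes simply sit in $\hua{A}_s$, which semi-stabilizability permits.
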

\begin{proof}
	Without loss of generality, assume that $\left[\begin{array}{c}
    \hua{A}  \\ \hline
     \hua{C} 
\end{array}\right]$ is in the following decomposition form
	\begin{equation*}
        \left[\begin{array}{c}
    \hua{A}  \\ \hline
     \hua{C} 
\end{array}\right]=\left[\begin{array}{cc}
    \hua{A}_s &0  \\
    0 & \hua{A}_u\\ \hline
     \hua{C}_s& \hua{C}_u 
\end{array}\right],
    \end{equation*}
	where $\hua{A}_u\in\mathbb{R}^{m\times m}$ contains precisely the antistable eigenvalues of $\hua{A}$, and $\hua{A}_s$ contains the remaining eigenvalues.
    Since all eigenvalues of $\hua{A}_u$ lie outside the unit circle and $\left[\begin{array}{c}
    \hua{A}_u  \\ \hline
     \hua{C}_u 
\end{array}\right]$ is observable, 
the DARE
	\begin{equation}\label{eq:lma_dare}
X_u=\hua{A}_uX_u(I+\hua{C}_u'\hua{C}_uX_u)^{-1}\hua{A}_u'
	\end{equation} 
    admits a positive definite stabilizing solution $X_u$ \cite{zhou1996robust}. 
    It can be verified that the semi-stabilizing solution of \eqref{eq:dare} is precisely given by 
    $X=\begin{bmatrix}
			0&0\\
			0&X_u
		\end{bmatrix}$.
   	Let $X_u=ZZ'$ with $Z\in\mathbb{R}^{m\times m}$. 
    Then 
$\hua{C}X\hua{C}'=\hua{C}_uX_u\hua{C}_u'=\hua{C}_uZZ'\hua{C}_u'$. 
Since $\hua{C}_uZZ'\hua{C}_u'$ and $Z'\hua{C}_u'\hua{C}_uZ$ share the same nonzero eigenvalues,  
    the eigenvalues of $I+Z'\hua{C}_u'\hua{C}_uZ$ are $\mu_1,\dots,\mu_{\min\{q,{m}\}}$ together with ${m}-\min\{q,{m}\}$ ones.
    From \eqref{eq:lma_dare}, we have
	\begin{gather}
		ZZ'=\hua{A}_uZZ'(I+\hua{C}_u'\hua{C}_uZZ')^{-1}\hua{A}_u',\notag\\
		I=Z^{-1}\hua{A}_uZ(I+Z'\hua{C}_u'\hua{C}_uZ)^{-1}Z'\hua{A}_u'(Z^{-1})',\notag\\
		Z'\hua{A}_u'(Z^{-1})'Z^{-1}\hua{A}_uZ=I+Z'\hua{C}_u'\hua{C}_uZ, \notag\label{eq:lma_riccati}
	\end{gather}
    which implies that the eigenvalues of $I+Z'\hua{C}_u'\hua{C}_uZ$ are squares of the singular values of $Z^{-1}\hua{A}_uZ$.
    Note that the eigenvalues of $Z^{-1}\hua{A}_uZ$ are $\lambda_1,\lambda_2,\dots,\lambda_{{m}}$. Applying Lemma~\ref{lma:cyclic_matrix} yields \eqref{eq:lma_majorization}, which completes the proof.
\end{proof}

The following lemma is useful in proving Theorem~\ref{thm:least}.

\begin{lma}\label{lma:w_f}
    Let $x,y\in\mathbb{R}^n$ and  $\eta>0$. If $x\preccurlyeq_{\mathrm{w}}y$ and $\sum_{i=1}^{n}x_i+\eta\leq \sum_{i=1}^{n}y_i$, then 
    \begin{equation*}\label{eq:lma:max}
        \begin{aligned}
           \begin{bmatrix}
		\max\{x_1,h\}&\max\{x_2,h\}&\cdots&\max\{x_n,h\}
	\end{bmatrix}'\preccurlyeq_{\mathrm{w}}y,
        \end{aligned}
    \end{equation*}
    where $h$ is the solution to $\sum_{i=1}^{n}\max\{0,h-x_i\}=\eta$.
\end{lma}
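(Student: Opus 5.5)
We write $x'_i=\max\{x_i,h\}$ and need $\sum_{i=1}^k x'^{\downarrow}_i\le\sum_{i=1}^k y^\downarrow_i$ for every $k=1,\dots,n$. First, $h$ is well defined: $g(h)=\sum_i\max\{0,h-x_i\}$ is continuous, nondecreasing, zero for $h\le\min_i x_i$, strictly increasing afterwards and unbounded, so $g(h)=\eta>0$ has a unique solution with $h>\min_i x_i$. We take $x\in\mathcal{D}^n$ and $y\in\mathcal{D}^n$ without loss of generality, since both weak majorization and the map $x\mapsto x'$ commute with permutations. Let $r$ be the number of indices with $x_i> h$. Then $r<n$ and
\[
x'=\begin{bmatrix}x_1&\cdots&x_r&h&\cdots&h\end{bmatrix}',
\]
which is still nonincreasing. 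For $k\le r$ the partial sums of $x'$ coincide with those of $x$, so the inequality follows directly from $x\preccurlyeq_{\mathrm{w}}y$.

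For $k>r$ we have $\sum_{i=1}^k x'_i=S_r+(k-r)h$, where $S_j=\sum_{i=1}^j x_i$ and $Y_j=\sum_{i=1}^j y_i$. This expression is affine in $k$ on $\{r,\dots,n\}$. At $k=r$ it is $S_r\le Y_r$. At $k=n$ it equals $S_n+\eta\le Y_n$ by the definition of $h$ and the hypothesis. Since $y$ is nonincreasing, $k\mapsto Y_k$ is concave. An affine function lying below a concave function at both endpoints of an interval stays below it on the whole interval. Hence $S_r+(k-r)h\le Y_k$ for all $r\le k\le n$.

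The concavity step is the crux, and it needs both orderings. The ordering of $x$ makes $x'$ have the claimed form, with the top $r$ entries unchanged and the rest equal to $h$. The ordering of $y$ is what makes $Y_k$ concave. The case $r=0$ is covered by the same argument, with $S_0=Y_0=0$.
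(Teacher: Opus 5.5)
Your proposal is correct and follows essentially the paper's argument: sort both vectors, note that the prefix of $x$ above $h$ is unchanged and the tail is constant at $h$, and use the total-sum bound $S_n+\eta\le Y_n$ together with the monotonicity of $y$. The paper finishes with a smallest-violating-index contradiction (at the first crossing $\bar{x}_m>y_m$, and the gap can only grow since the tail increments stay at $h$), which is just the discrete proof of your ``affine below concave at both endpoints'' step. Your version also adds the well-definedness of $h$ and the case $r=0$ explicitly.
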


\begin{proof}
    Without loss of generality, assume $x,y\in\mathcal{D}^n$.
    Denote $\bar{x}=\begin{bmatrix}
		\max\{x_1,h\}\!&\!\cdots\!&\!\max\{x_n,h\}
	\end{bmatrix}'$.
     Then $\bar{x}\in\mathcal{D}^n$, 
     implying that there exists an index $t\in\{1,\dots,n\}$ such that
	\begin{equation*}
		\begin{aligned}
			\bar{x}_i=\begin{cases}
				x_i,&1\leq i<t, \\
				h, &t\leq i\leq n.
			\end{cases}
		\end{aligned}
	\end{equation*}
 	Given $x\preccurlyeq_{\mathrm{w}}y$, we obtain that $\sum_{i=1}^{k}\bar{x}_i=\sum_{i=1}^{k}x_i\leq\sum_{i=1}^{k}y_i$ for $1 \leq k < t$.  
    Moreover,
    \begin{equation*}
    \sum_{i=1}^{n}\bar{x}_i=\sum_{i=1}^{n}(x_i+\max\{0,h-x_i\})=\sum_{i=1}^{n}x_i+\eta\leq \sum_{i=1}^{n}y_i.
    \end{equation*}
    It remains to show $\sum_{i=1}^{k}\bar{x}_i\leq\sum_{i=1}^{k}y_i$ for $t\leq k\leq n-1$.
	Suppose, for contradiction, that this does not hold.
    Let $m \in \{t, \dots, n-1\}$ be the smallest index such that $\sum_{i=1}^{m} \bar{x}_i > \sum_{i=1}^{m} y_i$. Since $\sum_{i=1}^{m-1}\bar{x}_i\leq\sum_{i=1}^{m-1}y_i$, we have $\bar{x}_m > y_m$.
	Together with $\bar{x}_m=\cdots=\bar{x}_n=h$ and $y\in\mathcal{D}^n$, we obtain 
    \begin{equation*}
        \sum_{i=1}^{n}\bar{x}_i=\sum_{i=1}^{m}\bar{x}_i+(n-m)\bar{x}_m>\sum_{i=1}^{m}y_i+(n-m)y_m\geq\sum_{i=1}^{n}y_i,
    \end{equation*}
    which contradicts $\sum_{i=1}^{n}\bar{x}_i\leq\sum_{i=1}^{n}y_i$.
    This completes the proof.
\end{proof}

\section*{References}
\bibliographystyle{IEEEtran}
\bibliography{my_bib}

\end{document}